\begin{document}

\title{Polar Coding Strategies for the Interference Channel with Partially-Joint Decoding}
\author{Mengfan~Zheng, Cong~Ling, Wen~Chen and Meixia~Tao
\thanks{M. Zheng, W. Chen and M. Tao are with the Department of Electronic Engineering at Shanghai Jiao Tong University, Shanghai, China. Emails: \{zhengmengfan, wenchen, mxtao\}@sjtu.edu.cn. C. Ling is with the Department of Electrical and Electronic Engineering at Imperial College London, United Kingdom. Email: c.ling@imperial.ac.uk.

The corresponding author is M. Tao.
}
}

\maketitle

\begin{abstract}
  Existing polar coding schemes for the two-user interference channel follow the original idea of Han and Kobayashi, in which component messages are encoded independently and then mapped by some deterministic functions (i.e., homogeneous superposition coding). In this paper, we propose a new polar coding scheme for the interference channel based on the heterogeneous superposition coding approach of Chong, Motani and Garg. We prove that fully-joint decoding (the receivers simultaneously decode both senders' common messages and the intended sender's private message) in the Han-Kobayashi strategy can be simplified to two types of partially-joint decoding, which are friendly to polar coding with practical decoding algorithms. The proposed coding scheme requires less auxiliary random variables and no deterministic functions, and can be efficiently constructed. Further, we extend this result to interference networks and show that the proposed partially-joint decoding scheme is a general method for designing heterogeneous superposition polar coding schemes in interference networks.
  
\end{abstract}

\begin{IEEEkeywords}
	Polar codes, interference channel, Han-Kobayashi region, superposition coding, joint decoding.
\end{IEEEkeywords}

\section{Introduction}

  Polar codes, proposed by Ar{\i}kan \cite{arikan2009channel}, are the first class of channel codes that can provably achieve the capacity of any memoryless binary-input output-symmetric channels with low encoding and decoding complexity. Since its invention, polar codes have been widely adopted to many other scenarios, such as source compression \cite{arikan2010source,korada2009polar,korada2010lossy,arikan2012sw}, wiretap channels \cite{andersson2010nested,mahdavifar2011achieving,koyluoglu2012polar,sasoglu2013strong,gulcu2015,wei2016generalwt}, relay channels \cite{andersson2010nested,karzand2012polar,blasco2012relay}, multiple access channels (MAC) \cite{arikan2012sw,onayscmac,sasoglu2013mac,abbe2012mmac,mahdavifar2016uniform}, broadcast channels \cite{goela2015broadcast,mondelli2015marton}, broadcast channels with confidential messages \cite{gulcu2015,chou2016broadcast}, and bidirectional broadcast channels with common and confidential messages \cite{andersson2013broad}. In these scenarios, polar codes have also shown capacity-achieving capabilities.
  
  The interference channel (IC), first initiated by Shannon \cite{shannon1961two} and further studied by Ahlswede \cite{ahlswede1974capacity}, models the situation where $m$ sender-receiver pairs try to communicate simultaneously through a common channel. In this model, it is assumed that there is no cooperation between any of the senders or receivers, and the signal of each sender is seen as interference by the unintended receivers. Although the 2-user discrete memoryless IC (DM-IC) is rather simple in appearance, except for some special cases \cite{Carleial1975case,sato1981gaussian,chung2008frequency,liu2008degraded,benzel1979additive,chong2009semi,gamal1982determin,costa1987strong}, determining the capacity region of a general IC remains an open problem. Reference \cite{ahlswede1974capacity} gave simple but fundamental inner and outer bounds on the capacity region of the IC. In \cite{Carleial1978IC}, Carleial determined an improved achievable rate region for the IC by applying the superposition coding technique of Cover \cite{cover1975broadcast}, which was originally designed for the broadcast channel. Later, Han and Kobayashi established the best achievable rate region for the general IC to date \cite{han1981ic}. A more compact description of the Han-Kobayashi region was given in \cite{chong2008han}. The idea of the Han-Kobayashi coding strategy is to split each sender's message into a private part and a common part, and allow the unintended receiver to decode the common part so as to enhance the total transmission rates. To achieve the whole Han-Kobayashi region, it is required that each receiver decodes its intended private message and both senders' common messages jointly.
  
  There are limited studies on the design of specific coding schemes that can achieve the Han-Kobayashi region. A low-density parity-check (LDPC) code-based Han-Kobayashi scheme was proposed for the Gaussian IC in \cite{sharifi2015ldpcic}, which has close-to-capacity performance in the case of strong interference. In \cite{sharifi2015ldpcz}, a specific coding scheme was designed for the binary-input binary-output Z IC using LDPC codes, and an example was shown to outperform time sharing of single user codes. For polar codes, reference \cite{appaiah2011align} pointed out how alignment of polarized bit-channels can be of use for designing coding schemes for interference networks, and presented an example of the one-sided discrete memoryless 3-user IC with a degraded receiver structure. A polar coding scheme that achieves the Han-Kobayashi inner bound for the 2-user IC was proposed in \cite{wang2015channel}, and \cite{Hirche2016quantum} used a similar scheme to achieve the Han-Kobayashi region in the 2-user classical-quantum IC. The idea of \cite{wang2015channel} is to transform the original IC into two 3-user MACs from the two receivers' perspectives, and design a compound MAC polar coding scheme for them. The achievable rate region of the compound MAC equals the Han-Kobayashi region, and can be achieved by polar codes. This design is based on the original Han-Kobayashi scheme of \cite{han1981ic}, in which component messages are independently encoded into auxiliary sequences and then mapped to the channel inputs by some deterministic functions (also known as \textit{homogeneous superposition coding} \cite{wang2013compare}). By ranging over all possible choices of these functions and distributions of auxiliary random variables (ARV), the whole Han-Kobayashi region can be achieved. However, such an approach could be problematic in practice since finding such functions may be a very complex task.
  
  Our work is inspired by the compact description of the Han-Kobayashi region based on the Chong-Motani-Garg scheme \cite{chong2008han}, in which no deterministic functions are required and less ARVs are needed. This approach belongs to the \textit{heterogeneous superposition coding} scheme \cite{wang2013compare}, in which the common message is encoded first and then a satellite codebook for the private message is generated around it. When implementing such a scheme using polar codes, we find that the \textit{fully-joint decoder} which simultaneously decodes all three component messages is difficult to design, because the encoding scheme forces us to decode the common message of a sender before its private message when successive cancellation decoding (SCD) is used. By analyzing points on the dominant faces of the Han-Kobayashi region and utilizing random coding techniques, we find that it is possible to loosen the fully-joint decoding requirement and propose to use two types of \textit{partially-joint decoders}. Each receiver can either jointly decode both senders' common messages first and then the intended sender's private message, or solely decode the intended sender's common message first and then jointly decode the rest two. Based on this finding and enlightened by Goela et al.'s superposition polar coding scheme for the broadcast channel \cite{goela2015broadcast}, we design two types of polar coding schemes and show that every point on the dominant faces of the Han-Kobayashi region can be achieved. Compared with the existing scheme of \cite{wang2015channel}, our proposed scheme achieves a larger rate region for the same joint distribution of random variables and can be constructed efficiently. Most notably, with the proposed scheme, the task of finding proper ARVs for a DM-IC can be reduced significantly. Further, we extend the partially-joint decoding scheme to arbitrary discrete memoryless interference networks (DM-IN) and show that it is a general method for designing heterogeneous superposition polar coding schemes in DM-INs that can achieve optimal rate regions. 
  
  In our proposed scheme, joint decoders and the corresponding code structure are implemented using the 2-user MAC polarization method based on Ar{\i}kan's monotone chain rule expansions \cite{arikan2012sw}, whose encoding and decoding complexity is similar to single-user polar codes. Besides, we propose a constructing method for this kind of MAC polar codes based on the approximation method of \cite{tal2013construct}, which makes our proposed codes easy to be constructed. We use \c{S}a\c{s}o\u{g}lu's result on polarization for arbitrary discrete alphabet \cite{sasoglu2011polar} to extend it to arbitrary prime input alphabet case.
  To deal with non-uniform input distribution, one may apply Gallager's alphabet extension method \cite[p. 208]{gallager1968information} as in \cite{wang2015channel}, the chaining construction \cite{Mondelli2014asymmetric}, or a more direct approach by invoking results on polar coding for lossless compression \cite{honda2013asymmetric,goela2015broadcast,chou2015deterministic,chou2016broadcast}. In this paper, we take Chou and Bloch's low-complexity approach \cite{chou2015deterministic,chou2016broadcast}, which only requires a vanishing rate of shared randomness between communicators. One crucial point in designing capacity-achieving polar codes for a general multi-user channel is how to properly align the polar indices. One solution for this problem is the chaining method, which has already been used in several areas \cite{sasoglu2013strong,hassani2014universal,gulcu2015,wei2016generalwt,mondelli2015marton}. Another way is to add additional stages of polarization to align the incompatible indices, as shown in \cite{sasoglu2016universal} and used in \cite{wang2015channel}. In this paper, we adopt the chaining method as it does not change the original polar transformation and may be easier to understand.
  
  The rest of this paper is organized as follows. In Section \ref{S:ProblemState}, we introduce the 2-user DM-IC model and the Han-Kobayashi region, and propose two types of partially-joint decoders. In Section \ref{S:PolarPri}, we review some background on polarization and polar codes necessary for our code design. In Section \ref{S:Overview}, we provide an overview of our scheme and analyze its feasibility. Details of our proposed schemes are presented in Section \ref{S:PolarScheme}, and the performance is analyzed in Section \ref{S:PA}. In Section \ref{S:IN}, we extend the proposed scheme to arbitrary DM-INs. Section \ref{S:Conclusion} concludes this paper with some discussions.

  \textit{Notations:} $[N]$ is the abbreviation of an index set $\{1,2,...,N\}$. Vectors are denoted as $\mathbf{X}^N\triangleq\{X^1,X^2,...,X^N\}$ or $X^{a:b}\triangleq\{X^a,X^{a+1},...,X^{b}\}$ for $a\leq b$. For a subset $\mathcal{A}\subset [N]$, $X^{\mathcal{A}}$ denotes the subvector $\{X^i:i\in\mathcal{A}\}$ of $X^{1:N}$. $\mathbf{G}_N=\mathbf{B}_N \textbf{F}^{\otimes n}$ is the generator matrix of polar codes \cite{arikan2009channel}, where $N=2^n$ with $n$ being an arbitrary integer, $\mathbf{B}_N$ is the bit-reversal matrix, and $\textbf{F}=
  \begin{bmatrix}
  1 & 0 \\
  1 & 1
  \end{bmatrix}$. $H_{q}(X)$ stands for the entropy of $X$ with $q$-based logarithm, and $H(X)$ is short for the Shannon entropy unless otherwise specified. $\delta_N=2^{-N^\beta}$ with some $\beta \in (0,1/2)$.

\section{Problem Statement}
\label{S:ProblemState}
 \subsection{Channel Model}

 \newtheorem{definition}{Definition}
 \begin{definition}
 	\label{def:IC}
 	A 2-user DM-IC consists of two input alphabets $\mathcal{X}_1$ and $\mathcal{X}_2$, two output alphabets $\mathcal{Y}_1$ and $\mathcal{Y}_2$, and a probability transition function $P_{Y_1Y_2|X_1X_2}(y_1,y_2|x_1,x_2)$. The conditional joint probability distribution of the 2-user DM-IC over $N$ channel uses can be factored as
 	\begin{equation}
 	P_{\mathbf{Y}_1^N\mathbf{Y}_2^N|\mathbf{X}_1^N\mathbf{X}_2^N}(\mathbf{y}_1^N,\mathbf{y}_2^N|\mathbf{x}_1^N,\mathbf{x}_2^N)=\prod_{i=1}^{N}P_{Y_1Y_2|X_1X_2}(y_1^i,y_2^i|x_1^i,x_2^i).
 	\end{equation}
 \end{definition}
 
 \begin{definition}
 	A $(2^{NR_1},2^{NR_2},N)$ code for the 2-user DM-IC consists of two message sets $\mathcal{M}_1=\{1,2,...,[2^{NR_1}]\}$ and $\mathcal{M}_2=\{1,2,...,[2^{NR_2}]\}$, two encoding functions
 	\begin{equation}
 	x_1^N(m_1):\mathcal{M}_1\mapsto \mathcal{X}_1^N \text{  and  } x_2^N(m_2):\mathcal{M}_2\mapsto \mathcal{X}_2^N,
 	\end{equation}
 	and two decoding functions
 	\begin{equation}
 	\hat{m}_1(\mathbf{y}_1^N):\mathcal{Y}_1^N\mapsto \mathcal{M}_1 \text{  and  } \hat{m}_2(\mathbf{y}_2^N):\mathcal{Y}_2^N\mapsto \mathcal{M}_2.
 	\end{equation}
 \end{definition}
 
 \begin{definition}
 	The average probability of error $P_e^{(N)}$ of a $(2^{NR_1},2^{NR_2},N)$ code for the 2-user DM-IC is defined as the probability that the decoded message pair is not the same as the transmitted one averaged over all possible message pairs,
 	\begin{equation}
 	P_e^{(N)}=\frac{1}{2^{N(R_1+R_2)}} \sum_{(M_1,M_2)\in\mathcal{M}_1\times \mathcal{M}_2}\mathrm{Pr}\Big{\{}\big{(}\hat{m}_1(\mathbf{Y}_1^N),\hat{m}_2(\mathbf{Y}_2^N)\big{)}\neq (M_1,M_2)|(M_1,M_2)~\text{sent}\Big{\}},
 	\end{equation}
 	where $(M_1,M_2)$ are assumed to be uniformly distributed over $\mathcal{M}_1\times \mathcal{M}_2$.
 \end{definition}
  
 \subsection{The Han-Kobayashi Rate Region}
 \label{S:HKRIntro}
 In the Han-Kobayashi coding strategy, each sender's message is split into two parts: a private message, which only needs to be decoded by the intended receiver, and a common message, which is allowed to be decoded by the unintended receiver. Each receiver decodes its intended private message and two common messages jointly so that a higher transmission rate can be achieved. In the rest of this paper, we will refer to the two senders and two receivers as Sender 1, Sender 2, Receiver 1 and Receiver 2 respectively. Sender 1's message, denoted as $M_1$, is split into $(M_{1p},M_{1c})$, where $M_{1p}\in\mathcal{M}_{1p}\triangleq\{1,2,...,[2^{NS_1}]\}$ denotes its private message and $M_{1c}\in \mathcal{M}_{1c}\triangleq\{1,2,...,[2^{NT_1}]\}$ the common message. Similarly, Sender 2's message $M_2$ is split into $(M_{2p},M_{2c})$ with $M_{2p}\in\mathcal{M}_{2p}\triangleq\{1,2,...,[2^{NS_2}]\}$ and $M_{2c}\in\mathcal{M}_{2c}\triangleq\{1,2,...,[2^{NT_2}]\}$. Define $W_1$, $W_2$, $V_1$ and $V_2$ as the random variables for messages $M_{1c}$, $M_{2c}$, $M_{1p}$ and $M_{2p}$ respectively, with $\mathcal{W}_1$, $\mathcal{W}_2$, $\mathcal{V}_1$ and $\mathcal{V}_2$ being their alphabets. Then each encoding function can be decomposed into three functions. For $x_1^N(m_1)$, the three functions are
 \begin{equation}
 \label{HKStr-1}
 \begin{aligned}
 &w_1^N(M_{1c}):\mathcal{M}_{1c}\mapsto \mathcal{W}_1^N,~~v_1^N(M_{1p}):\mathcal{M}_{1p}\mapsto \mathcal{V}_1^N\\
 &~~~~~~\text{and } x_1^{'N}(\mathbf{W}_1^N,\mathbf{V}_1^N):\mathcal{W}_1^N\times\mathcal{V}_1^N\mapsto \mathcal{X}_1^N.
 \end{aligned}
 \end{equation}
 Similarly, for $x_2^N(m_2)$, the three functions are
 \begin{equation}
 \label{HKStr-2}
 \begin{aligned}
 &w_2^N(M_{2c}):\mathcal{M}_{2c}\mapsto \mathcal{W}_2^N,~~v_2^N(M_{2p}):\mathcal{M}_{2p}\mapsto \mathcal{V}_2^N\\
 &~~~~~~\text{and } x_2^{'N}(\mathbf{W}_2^N,\mathbf{V}_2^N):\mathcal{W}_2^N\times\mathcal{V}_2^N\mapsto \mathcal{X}_2^N.
 \end{aligned}
 \end{equation}
 
 With this approach, Han and Kobayashi established the best achievable rate region for the general IC to date \cite{han1981ic}. The result is summarized in Theorem \ref{theorem:HK-o}. 
  \newtheorem{theorem}{Theorem}
  \begin{theorem}[\cite{han1981ic,el2011network}]
  	\label{theorem:HK-o}
  	
  	Let $\mathcal{P}^*$ be the set of probability distributions $P^*(\cdot)$ that factor as
  	\begin{equation}
  	\begin{aligned}
  	P^*(q,v_1,v_2,w_1,w_2,x_1,x_2)&=P_Q(q)P_{V_1|Q}(v_1|q)P_{V_2|Q}(v_2|q)P_{W_1|Q}(w_1|q)P_{W_2|Q}(w_2|q)\\
  	&~~\times P_{X_1|V_1W_1Q}(x_1|v_1,w_1,q)P_{X_2|V_2W_2Q}(x_2|v_2,w_2,q),
  	\end{aligned}
  	\end{equation}
  	where $Q\in \mathcal{Q}$ is the time-sharing parameter, and $P_{X_1|V_1W_1Q}(\cdot)$ and $P_{X_2|V_2W_2Q}(\cdot)$ equal either 0 or 1, i.e., they are deterministic functions.
  	For a fix $P^*(\cdot)\in \mathcal{P}^*$, consider Receiver 1 and the set of non-negative rate-tuples $(S_1,T_1,S_2,T_2)$ denoted by $\mathcal{R}_{HK}^{o,1}(P^*)$ that satisfy
  	\begin{align}
  	0\leq S_1 &\leq I(V_1;Y_1|W_1W_2Q), \label{HK1-1}\\
  	0\leq T_1 &\leq I(W_1;Y_1|V_1W_2Q), \label{HK1-2}\\
  	0\leq T_2 &\leq I(W_2;Y_1|V_1W_1Q), \label{HK1-3}\\
  	S_1+T_1 &\leq I(V_1W_1;Y_1|W_2Q), \label{HK1-4}\\
  	S_1+T_2 &\leq I(V_1W_2;Y_1|W_1Q), \label{HK1-5}\\
  	T_1+T_2 &\leq I(W_1W_2;Y_1|V_1Q), \label{HK1-6}\\
  	S_1+T_1+T_2 &\leq I(V_1W_1W_2;Y_1|Q). \label{HK1-7}
  	\end{align}
  	Similarly, let $\mathcal{R}_{HK}^{o,2}(P^*)$ be the set of non-negative rate-tuples $(S_1,T_1,S_2,T_2)$ that satisfy (\ref{HK1-1})--(\ref{HK1-7}) with indices 1 and 2 swapped everywhere. For a set $\mathcal{S}$ of 4-tuples $(S_1,T_1,S_2,T_2)$, let $\mathcal{R}(\mathcal{S})$ be the set of $(R_1,R_2)$ such that $0\leq R_1\leq S_1+T_1$ and $0\leq R_2\leq S_2+T_2$ for some $(S_1,T_1,S_2,T_2)\in \mathcal{S}$. Then we have that
  	\begin{equation}
  	\mathcal{R}_{HK}^o=\mathcal{R}\Big{(}\bigcup_{P^*\in \mathcal{P}^*}\mathcal{R}_{HK}^{o,1}(P^*)\cap \mathcal{R}_{HK}^{o,2}(P^*)\Big{)}
  	\end{equation}
  	is an achievable rate region for the DM-IC.
  \end{theorem}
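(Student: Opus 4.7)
The plan is to follow the classical Han-Kobayashi random coding argument with superposition structure and joint typicality decoding at each receiver, adapted to the compact four-rate formulation of Theorem~\ref{theorem:HK-o}. Fix $P^*\in\mathcal{P}^*$ and any $(S_1,T_1,S_2,T_2)$ satisfying the constraints defining both $\mathcal{R}_{HK}^{o,1}(P^*)$ and $\mathcal{R}_{HK}^{o,2}(P^*)$. First I would generate a time-sharing sequence $q^N\sim\prod_i P_Q(q_i)$ available to all parties, and, conditioned on $q^N$, independently sample four codebooks: $2^{NT_1}$ sequences $w_1^N(m_{1c})$ drawn i.i.d.\ $\sim\prod_i P_{W_1|Q}$, $2^{NT_2}$ sequences $w_2^N(m_{2c})$ drawn i.i.d.\ $\sim\prod_i P_{W_2|Q}$, $2^{NS_1}$ sequences $v_1^N(m_{1p})$ drawn i.i.d.\ $\sim\prod_i P_{V_1|Q}$, and $2^{NS_2}$ sequences $v_2^N(m_{2p})$ drawn i.i.d.\ $\sim\prod_i P_{V_2|Q}$. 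Sender~$j$ then transmits $x_j^i=f_j(v_j^i,w_j^i,q^i)$ with $f_j$ the deterministic function corresponding to $P_{X_j|V_jW_jQ}$.

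Decoding proceeds by joint typicality: Receiver~$1$ looks for the unique triple $(\hat m_{1p},\hat m_{1c},\hat m_{2c})$ such that $(q^N, v_1^N(\hat m_{1p}), w_1^N(\hat m_{1c}), w_2^N(\hat m_{2c}), y_1^N)$ lies in the $\epsilon$-strongly typical set with respect to $P_{QV_1W_1W_2Y_1}$, and symmetrically at Receiver~$2$. Assume without loss of generality that $(M_{1p},M_{1c},M_{2c})=(1,1,1)$ is sent. An error occurs either because the true triple is not jointly typical (vanishing by the AEP) or because some competing triple is. Enumerating the competitors by which of the three components $V_1,W_1,W_2$ differ from the truth yields $2^3-1=7$ non-empty error patterns. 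Standard packing-lemma bounds, combined with the independence of the codebooks given $Q$, show that every one of these seven events has vanishing probability precisely when $(S_1,T_1,T_2)$ satisfies (\ref{HK1-1})--(\ref{HK1-7}); the symmetric analysis at Receiver~$2$ yields the index-swapped bounds defining $\mathcal{R}_{HK}^{o,2}(P^*)$.

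To finish, I would invoke the mapping $R_1\le S_1+T_1$ and $R_2\le S_2+T_2$ and take the union over $P^*\in\mathcal{P}^*$, which directly produces $\mathcal{R}_{HK}^o$; convexification is absorbed by $Q$, so no explicit Fourier--Motzkin elimination in $(R_1,R_2)$ alone is needed. A standard code-expurgation argument converts the vanishing average error probability into a vanishing maximal error probability.

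The main obstacle is the bookkeeping of the seven error patterns at each receiver and verifying that the packing-lemma bounds reproduce exactly (\ref{HK1-1})--(\ref{HK1-7}) with no slack. The cleanest way to see this is to observe that, from Receiver~$1$'s viewpoint, decoding the triple $(M_{1p},M_{1c},M_{2c})$ is structurally a three-user MAC with virtual inputs $(V_1,W_1,W_2)$, output $Y_1$, and shared time-sharing symbol $Q$; the seven bounds are then precisely the standard three-user MAC constraints and follow from its achievability proof. A secondary subtlety is that $P_{X_1|V_1W_1Q}$ and $P_{X_2|V_2W_2Q}$ are required to be $\{0,1\}$-valued: this is what legitimately reduces the decoder's analysis to the tuple $(V_1,W_1,W_2,Y_1)$ without needing to marginalize $X_1,X_2$ explicitly, and also what the scheme of \cite{wang2015channel} inherits as a design burden and what the partially-joint-decoding idea of the present paper ultimately aims to avoid.
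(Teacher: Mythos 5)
Your proposal is correct and is essentially the standard Han--Kobayashi achievability argument: independent codebooks for $V_j,W_j$ conditioned on $Q$, symbol-wise deterministic maps to $X_j$, and unique-joint-typicality decoding of the triple $(V_1,W_1,W_2)$ at Receiver~1 (and its mirror at Receiver~2), whose seven packing-lemma error events reproduce exactly (\ref{HK1-1})--(\ref{HK1-7}). Note that the paper does not prove Theorem~\ref{theorem:HK-o} itself but quotes it from \cite{han1981ic,el2011network}; your argument coincides with that cited classical proof, and the paper's own random-coding analysis (Appendix~\ref{APPEN-A}, for Theorem~\ref{theorem:HK-partial}) applies the same machinery to the heterogeneous superposition scheme with partially-joint decoders instead.
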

  
  The original Han-Kobayashi scheme can be classified into the homogeneous superposition coding scheme \cite{wang2013compare}, in which the component messages of each sender are independently encoded into auxiliary sequences and then mapped to the channel input sequence by some symbol-by-symbol deterministic function. The scheme of \cite{wang2015channel} belongs to the this type. Another variant of superposition coding is the heterogeneous superposition coding \cite{wang2013compare}, introduced by Bergmans \cite{bergmans1973broad}. In this variant, the coarse messages are encoded into auxiliary sequences first, and then a satellite codebook for the fine message is generated around it conditionally independently. Usually the heterogeneous variant is simpler than the homogeneous one since it requires fewer ARVs. Reference \cite{chong2008han} presented a simplified description of Han-Kobayashi region based on this approach (referred to as the Chong-Motani-Garg scheme in this paper), in which only three ARVs are used and no deterministic functions are needed. Their result is summarized in Theorem \ref{theorem:HK}.
  \begin{theorem}[\cite{chong2008han,el2011network}]
  \label{theorem:HK}
  Let $\mathcal{P}_1^*$ be the set of probability distributions $P_1^*(\cdot)$ that factor as
  \begin{equation}
  P_1^*(q,w_1,w_2,x_1,x_2)=P_Q(q)P_{X_1W_1|Q}(x_1,w_1|q)P_{X_2W_2|Q}(x_2,w_2|q),
  \end{equation}
  where $|\mathcal{W}_j|\leq |\mathcal{X}_j|+4$ for $j=1,2$, and $|\mathcal{Q}|\leq 6$.
  For a fix $P_1^*(\cdot)\in \mathcal{P}_1^*$, let $\mathcal{R}_{HK}(P^*)$ be the set of $(R_1,R_2)$ satisfying
  \begin{align}
  0\leq R_1 &\leq I(X_1;Y_1|W_2Q)\triangleq a, \label{HK-1}\\
  0\leq R_2 &\leq I(X_2;Y_2|W_1Q)\triangleq b, \label{HK-2}\\
  R_1+R_2 &\leq I(X_1W_2;Y_1|Q)+I(X_2;Y_2|W_1W_2Q)\triangleq c, \label{HK-3}\\
  R_1+R_2 &\leq I(X_1;Y_1|W_1W_2Q)+I(X_2W_1;Y_2|Q)\triangleq d, \label{HK-4}\\
  R_1+R_2 &\leq I(X_1W_2;Y_1|W_1Q)+I(X_2W_1;Y_2|W_2Q)\triangleq e, \label{HK-5}\\
  2R_1+R_2 &\leq I(X_1W_2;Y_1|Q)+I(X_1;Y_1|W_1W_2Q)+I(X_2W_1;Y_2|W_2Q)\triangleq f, \label{HK-6}\\
  R_1+2R_2 &\leq I(X_2;Y_2|W_1W_2Q)+I(X_2W_1;Y_2|Q)+I(X_1W_2;Y_1|W_1Q)\triangleq g. \label{HK-7}
  \end{align}
  Then we have that
  \begin{equation}
  \mathcal{R}_{HK}=\bigcup_{P_1^*\in \mathcal{P}_1^*}\mathcal{R}_{HK}(P_1^*)
  \end{equation}
  is an achievable rate region for the DM-IC.
  \end{theorem}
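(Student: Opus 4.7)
The plan is to establish achievability via a standard random coding argument using heterogeneous (Bergmans-type) superposition codebooks, followed by joint typicality decoding at each receiver and a Fourier-Motzkin elimination to project onto the $(R_1,R_2)$-plane. For any fixed $P_1^*\in\mathcal{P}_1^*$ I would first draw a time-sharing sequence $q^N$ i.i.d.\ from $P_Q$. Then, for $j\in\{1,2\}$, I would independently generate $2^{NT_j}$ ``cloud-center'' codewords $w_j^N(m_{jc})$ with entries drawn i.i.d.\ from $P_{W_j|Q}$, and, for each such codeword, a ``satellite'' codebook $\{x_j^N(m_{jc},m_{jp}):m_{jp}\in\mathcal{M}_{jp}\}$ of size $2^{NS_j}$ with entries drawn i.i.d.\ from $P_{X_j|W_jQ}$. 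To send $(m_{jc},m_{jp})$, sender $j$ transmits $x_j^N(m_{jc},m_{jp})$. This construction uses only the three auxiliary random variables $(Q,W_1,W_2)$ and no deterministic functions, which is the key simplification relative to Theorem~\ref{theorem:HK-o}.

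For decoding, Receiver $j$ looks for the unique triple $(\hat m_{jc},\hat m_{jp},\hat m_{\bar{j}c})$ whose codewords are jointly $\epsilon$-typical with $q^N$ and $y_j^N$. Because each satellite $x_j^N$ is drawn conditionally on its cloud center $w_j^N$, an error in $M_{jc}$ automatically re-samples $X_j^N$; consequently, enumerating error events by which subset of $\{M_{1c},M_{1p},M_{2c}\}$ is incorrectly decoded and discarding the redundant bounds yields exactly five inequalities at Receiver~1,
\begin{align*}
S_1 &\leq I(X_1;Y_1|W_1W_2Q),\\
T_2 &\leq I(W_2;Y_1|X_1Q),\\
S_1+T_1 &\leq I(X_1;Y_1|W_2Q),\\
S_1+T_2 &\leq I(X_1W_2;Y_1|W_1Q),\\
S_1+T_1+T_2 &\leq I(X_1W_2;Y_1|Q),
\end{align*}
together with the five analogous inequalities at Receiver~2 obtained by swapping the user indices. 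These bounds follow by the usual counting argument, where the Markov structure induced by $P_1^*$ (in particular $W_1\perp Y_1\mid X_1W_2Q$) collapses expressions such as $I(X_1W_1W_2;Y_1|Q)$ to $I(X_1W_2;Y_1|Q)$.

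The final step is a Fourier-Motzkin elimination of $S_1,T_1,S_2,T_2$ subject to $S_j,T_j\geq 0$ under the substitutions $R_1=S_1+T_1$ and $R_2=S_2+T_2$. I would enumerate all nonnegative linear combinations of the ten inequalities whose left-hand sides project to $R_1$, $R_2$, $R_1+R_2$, $2R_1+R_2$, or $R_1+2R_2$, and retain only the non-redundant ones; the surviving bounds are precisely (\ref{HK-1})--(\ref{HK-7}). The cardinality bounds $|\mathcal{W}_j|\leq|\mathcal{X}_j|+4$ and $|\mathcal{Q}|\leq 6$ then follow from standard applications of the support lemma, counting the number of linear functionals of the candidate distribution that must be preserved under Carath\'eodory-type reduction. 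The main obstacle is the Fourier-Motzkin elimination itself: it is mechanical but notationally heavy, and requires careful bookkeeping to avoid extraneous constraints and, more importantly, to not miss the $2R_1+R_2$ and $R_1+2R_2$ bounds, which arise only by summing three of the ten inequalities across both receivers and which cut the dominant faces of the region that the partially-joint polar scheme in later sections must match.
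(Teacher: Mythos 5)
First, note that the paper does not prove Theorem~\ref{theorem:HK} at all: it is quoted as a known result of Chong--Motani--Garg and El Gamal--Kim \cite{chong2008han,el2011network}, so your proposal is being measured against the standard proof in those references, whose outline (heterogeneous superposition codebooks, simultaneous typicality decoding, Fourier--Motzkin elimination) you have correctly identified.

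However, there is a genuine flaw in your decoding step, and it is precisely the subtlety that distinguishes Theorem~\ref{theorem:HK} from Theorem~\ref{theorem:HK-o} and that the paper dwells on in Section~\ref{S:HKRIntro}. You ask Receiver~1 to find the \emph{unique} triple $(\hat m_{1c},\hat m_{1p},\hat m_{2c})$ that is jointly typical, and accordingly you keep the constraint $T_2\leq I(W_2;Y_1|X_1Q)$ (and, by symmetry, $T_1\leq I(W_1;Y_2|X_2Q)$ at Receiver~2) in your list of ten inequalities. With these constraints present, the Fourier--Motzkin elimination does \emph{not} return exactly (\ref{HK-1})--(\ref{HK-7}) for a fixed $P_1^*$: for instance, adding $T_2\leq I(W_2;Y_1|X_1Q)$ from Receiver~1 to $S_2\leq I(X_2;Y_2|W_1W_2Q)$ from Receiver~2 forces $R_2\leq I(X_2;Y_2|W_1W_2Q)+I(W_2;Y_1|X_1Q)$, which is strictly tighter than (\ref{HK-2}) $=I(X_2;Y_2|W_1W_2Q)+I(W_2;Y_2|W_1Q)$ whenever the cross link carries less information about $W_2$ than the direct link, and similarly for $R_1$. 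So your decoder achieves a region that can be strictly smaller than $\mathcal{R}_{HK}(P_1^*)$ --- exactly the per-distribution gap between the original and compact Han--Kobayashi descriptions noted after Theorem~\ref{theorem:HK} (cf.\ \cite[Remark 3]{chong2008han}). The fix is the key idea of the cited proof: Receiver~$j$ is only required to recover $(m_{jc},m_{jp})$, so the error event in which \emph{only} the unintended common index is wrong is not an error (equivalently, use simultaneous nonunique decoding: declare $(\hat m_{jc},\hat m_{jp})$ if \emph{some} $m_{\bar j c}$ makes the tuple typical). This deletes the $T_2$-only and $T_1$-only bounds, leaving four inequalities per receiver ($S_1$, $S_1+T_1$, $S_1+T_2$, $S_1+T_1+T_2$ and their mirrors), whose elimination does yield (\ref{HK-1})--(\ref{HK-7}). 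Alternatively you would have to keep unique decoding and invoke the separate Chong--Motani--Garg equivalence argument (setting $W_1=\emptyset$ or $W_2=\emptyset$ and re-optimizing the input distribution) to recover the union region, but your proposal does neither; as written, the claim that the elimination ``yields precisely (\ref{HK-1})--(\ref{HK-7})'' is false. The remaining ingredients of your sketch (conditional resampling of $X_j^N$ when $m_{jc}$ is wrong, the Markov collapse $I(X_1W_1;Y_1|W_2Q)=I(X_1;Y_1|W_2Q)$, and the support-lemma cardinality bounds) are correct.
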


  It is shown in \cite{chong2008han} that the regions described in Theorem \ref{theorem:HK-o} and \ref{theorem:HK} are equivalent, and constraints (\ref{HK1-2}), (\ref{HK1-3}) and (\ref{HK1-6}) and their counterparts for the second receiver are unnecessary. It is straightforward to see that $\mathcal{R}_{HK}^o(P^*)\subseteq \mathcal{R}_{HK}(P_1^*)$ by using Fourier-Motzkin elimination \cite{chong2008han}, where $\mathcal{R}_{HK}^o(P^*)=\mathcal{R}_{HK}^{o,1}(P^*)\cap \mathcal{R}_{HK}^{o,2}(P^*)$ and $$P_1^*(q,w_1.w_2,x_1,x_2)=\sum_{v_1\in\mathcal{V}_1,v_2\in\mathcal{V}_2}P^*(q,v_1,v_2,w_1.w_2,x_1,x_2).$$ However, to prove the converse, we will need \cite[Lemma 2]{chong2008han}, which states that $\mathcal{R}_{HK}(P_1^*)\subseteq \mathcal{R}_{HK}^o(P^*)\cup\mathcal{R}_{HK}^o(P^{**})\cup\mathcal{R}_{HK}^o(P^{***})$, where
  \begin{align*}
  P^{**}=\sum_{w_1\in\mathcal{W}_1}P^*,~~~~
  P^{***}=\sum_{w_2\in\mathcal{W}_2}P^*.
  \end{align*}
  This indicates that for a given joint distribution, the original Han-Kobayashi region can be smaller than the compact one, as shown in \cite[Remark 3]{chong2008han}. Thus, to achieve $\mathcal{R}_{HK}(P_1^*)$ for a some $P_1^*$ with the scheme of \cite{wang2015channel}, one generally will need to use three codes designed for different joint distributions. In this paper, we aim to design a heterogeneous superposition polar coding scheme to achieve $\mathcal{R}_{HK}(P_1^*)$ directly.

  \subsection{Partially-Joint Decoding for the 2-User DM-IC}
  \label{S:PJD}
  To achieve the whole Han-Kobayashi region, both superposition coding variants require joint decoding of all component messages at each receiver, which we refer to as fully-joint decoding. For the homogeneous variant, fully-joint decoding can be realized by polar codes using MAC polarization techniques since each component message is independently encoded, as \cite{wang2015channel} has adopted. For the heterogeneous variant, however, fully-joint decoding may not be easily implemented using polar codes and practical decoding algorithms (such as SCD), as the coarse message and the fine message are encoded sequentially. When decoding the fine message in a heterogeneous superposition polar coding scheme (such as \cite{goela2015broadcast}), the estimate of the coarse message is required as side information. To design a polar coding scheme with practical decoding algorithm that can achieve $\mathcal{R}_{HK}(P_1^*)$ directly, we propose and prove two types of partially-joint decoding orders.
  \begin{definition}[Partially-joint decoding]
  	The two types of partially-joint decoding are defined as:
  	\begin{itemize}
  		\item (Type I) a receiver jointly decodes two senders' common messages first, and then decodes its private message with the estimates of the common messages;
  		\item (Type II) a receiver decodes its intended common message first, and then jointly decodes the unintended common message and its private message with the estimate of the intended common message.
  	\end{itemize}
  \end{definition}

  \begin{theorem}
  	\label{theorem:HK-partial}
  	Let $\mathcal{R}_{Par}^1(P_1^*)$ be the achievable rate region of the DM-IC when both receivers use the Type I partially-joint decoding, and $\mathcal{R}_{Par}^2(P_1^*)$ (resp. $\mathcal{R}_{Par}^3(P_1^*)$) the region when Receiver 1 (resp. 2) adopts Type I while Receiver 2 (resp.1) applies Type II. Define $\mathcal{R}_{Par}(P_1^*)=\mathcal{R}_{Par}^1(P_1^*)\cup \mathcal{R}_{Par}^2(P_1^*)\cup \mathcal{R}_{Par}^3(P_1^*)$. Then we have
  	\begin{equation}
  	\mathcal{R}_{Par}(P_1^*)=\mathcal{R}_{HK}(P_1^*).
  	\end{equation}
  \end{theorem}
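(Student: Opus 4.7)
The plan is to establish the equality $\mathcal{R}_{Par}(P_1^*)=\mathcal{R}_{HK}(P_1^*)$ by proving the two inclusions separately. The forward direction $\mathcal{R}_{Par}(P_1^*)\subseteq \mathcal{R}_{HK}(P_1^*)$ is expected to be the routine part: for each of the three decoding configurations, I would write down the MAC-style rate constraints produced by a standard random-coding/joint-typicality analysis. In a Type I receiver, the joint decoding of $(W_1,W_2)$ contributes three MAC inequalities, and the subsequent decoding of the intended private message given both common messages contributes one more. In a Type II receiver, decoding the intended common message alone contributes one inequality, and the joint decoding of the remaining common and private messages contributes three. Applying Fourier--Motzkin elimination on the auxiliary rates $(S_1,T_1,S_2,T_2)$ to project onto $(R_1,R_2)$, I would verify that the resulting inequalities are implied by (\ref{HK-1})--(\ref{HK-7}), using the Markov properties $W_j\to X_j\to (Y_1,Y_2)$ that follow from the factorization of $P_1^*$ (so that, e.g., $I(X_jW_j;\cdot)=I(X_j;\cdot)$ conditionally).

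For the reverse inclusion $\mathcal{R}_{HK}(P_1^*)\subseteq \mathcal{R}_{Par}(P_1^*)$, which is the substantive direction, my plan is a dominant-face and corner-point analysis. Since each $\mathcal{R}_{Par}^i$ is convex (via time-sharing in $Q$) and downward-closed, it suffices to show that every corner point of the dominant face of $\mathcal{R}_{HK}(P_1^*)$ lies in $\mathcal{R}_{Par}^1\cup\mathcal{R}_{Par}^2\cup\mathcal{R}_{Par}^3$. I would read off a natural pairing between binding sum-rate constraints and partial decoders by matching the RHS decomposition of each bound: the decomposition $c=I(X_1W_2;Y_1|Q)+I(X_2;Y_2|W_1W_2Q)$ in (\ref{HK-3}) matches Type II at Receiver 1 and Type I at Receiver 2, hence $\mathcal{R}_{Par}^3$; the decomposition $d=I(X_1;Y_1|W_1W_2Q)+I(X_2W_1;Y_2|Q)$ in (\ref{HK-4}) matches $\mathcal{R}_{Par}^2$; and when only (\ref{HK-1}), (\ref{HK-2}) are active, the point is comfortably inside $\mathcal{R}_{Par}^1$. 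For the mixed constraints (\ref{HK-6}) and (\ref{HK-7}), the corner points can be written as convex combinations of corners of the sum-rate faces, and are handled by the same case analysis plus time-sharing.

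The main obstacle I anticipate is the face where (\ref{HK-5}) is the active sum-rate bound, because the natural decomposition $e=I(X_1W_2;Y_1|W_1Q)+I(X_2W_1;Y_2|W_2Q)$ corresponds to Type II at both receivers, a configuration not explicitly among the three $\mathcal{R}_{Par}^i$'s. My plan for this is to show algebraically that whenever (\ref{HK-5}) is tight at a corner point, that point is already contained in $\mathcal{R}_{Par}^2\cup\mathcal{R}_{Par}^3$. Concretely, I would use the identities $c-e=I(W_1;Y_1|Q)-I(W_1;Y_2|W_2Q)$ and $d-e=I(W_2;Y_2|Q)-I(W_2;Y_1|W_1Q)$ (obtained from the chain rule and the Markov properties above) to show that $e<\min(c,d)$ forces one of these differences to be negative in each direction, and then exhibit a valid splitting $(S_1,T_1,S_2,T_2)$ satisfying the Type I/Type II constraints at the prescribed receivers with $S_1+T_1=R_1$ and $S_2+T_2=R_2$. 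Once every corner of the dominant face has been placed inside some $\mathcal{R}_{Par}^i$, convexity of each region and the fact that $\mathcal{R}_{HK}(P_1^*)$ is the comprehensive hull of its dominant face complete the proof.
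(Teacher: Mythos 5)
Your plan for the reverse inclusion has a structural flaw: reducing it to corner points. The theorem asserts that the \emph{union} $\mathcal{R}_{Par}^1\cup\mathcal{R}_{Par}^2\cup\mathcal{R}_{Par}^3$ equals $\mathcal{R}_{HK}(P_1^*)$ for a fixed $P_1^*$, and a union of convex, downward-closed polyhedra is not convex in general. Convexity of each $\mathcal{R}_{Par}^i$ individually does not cover the segment between two corner points that you place in different regions (e.g.\ one corner of the $e$-face in $\mathcal{R}_{Par}^2$ and the other in $\mathcal{R}_{Par}^3$ says nothing about the mid-face points), and ``time-sharing'' between decoder configurations only yields the convex hull of the union, which is weaker than the claimed equality; invoking convexity of the union would be circular, since that is what is being proved. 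The paper avoids this either by pooling the three configurations' constraints and doing a single Fourier--Motzkin elimination (Appendix \ref{APPEN-A}), or, in Lemma \ref{lemma:typeAB}, by exhibiting an explicit decomposition $(R_1^p,R_1^c,R_2^p,R_2^c)$ for \emph{every} point of each face $c,d,e,f,g$, not just the vertices. Your matching idea can be repaired in exactly that pointwise form, and your identities $c-e=I(W_1;Y_1|Q)-I(W_1;Y_2|W_2Q)$ and $d-e=I(W_2;Y_2|Q)-I(W_2;Y_1|W_1Q)$ are correct and are the right tool for the $e$-face; note, however, that $e\le\min\{c,d\}$ makes both differences \emph{nonnegative}, not negative as you wrote, and it is precisely these nonnegativity conditions that guarantee the required splittings exist on the $e$-, $f$- and $g$-faces.

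A second gap is the degenerate regime. Eliminating the auxiliary rates from the three configurations' constraints does not give $\mathcal{R}_{HK}(P_1^*)$ directly but the Chong--Motani--Garg-type region, i.e.\ $\mathcal{R}_{HK}(P_1^*)$ intersected with the two additional bounds $R_1\le I(X_1;Y_1|W_1W_2Q)+I(X_2W_1;Y_2|W_2Q)$ and $R_2\le I(X_2;Y_2|W_1W_2Q)+I(X_1W_2;Y_1|W_1Q)$. When one of these binds (a receiver cannot support the unintended common rate), the affected corner points of $\mathcal{R}_{HK}(P_1^*)$ admit no valid splitting $(S_1,T_1,S_2,T_2)$ for any of your three configurations, so your ``exhibit a splitting'' step fails there; relatedly, a point where only (\ref{HK-1})--(\ref{HK-2}) are active is not automatically ``comfortably inside'' $\mathcal{R}_{Par}^1$, since Type I at both receivers forces $R_1^c\le I(W_1;Y_2|W_2Q)$ and $R_2^c\le I(W_2;Y_1|W_1Q)$. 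The paper closes this case by observing that the common message is then useless and setting $W_k=\emptyset$, a degenerate instance of the partially-joint scheme; your proof needs the same final step. The forward inclusion and the per-configuration constraint sets you outline do coincide with the paper's random-coding analysis and are fine.
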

  \begin{proof}
  	See Appendix \ref{APPEN-A}.
  \end{proof}
  
  \newtheorem{remark}{Remark}
  \begin{remark}
  	It is worth noting that we do not consider the case when both receivers use the Type II partially-joint decoding. This is because the Han-Kobayashi region can already be covered by the other three decoding strategies. In fact, one can easily verify that the achievable rate region in this case can also be achieved by at least one of the other three strategies since the upper bounds on the common message rates ($R_k^c\leq I(W_k;Y_k|Q)$ for $k=1,2$) are non-optimal. This explains why in our proposed polar coding scheme in Section \ref{S:Overview} we do not need such an approach either.
  \end{remark}
  
  \begin{remark}
  	\label{Remark:2}
  	 The reasons why the fully-joint decoder is hard to design are twofold, the decoding algorithm and the code structure.  	 
  	 Existing polar codes are optimized for SCD, which is sequential in nature. To design a joint decoding scheme using SCD, one has to use methods similar to the permutation based MAC polarization -- mixing different users' sequences of random variables into a single one and then decoding them together. However, in the heterogeneous scheme, $W_k$ and $X_k$ ($k=1,2$) are correlated. If we try to apply this method, the induced random process will have a complicated memory. Although there have been studies on polarization for processes with memory \cite{sasoglu2011polar,wang2015memory,sasoglu2016memory}, the results are still far from handling such a problem now.
  	 If we want to realize genuine fully-joint decoding (e.g., using maximum-likelihood (ML) or ML-like decoding), then the corresponding structure of codes should also be optimized for this decoding algorithm (we cannot use the same code structure optimized for SCD and just switch to ML decoding, as the achievable rate region of the scheme remains the same). However, neither the construction complexity nor the decoding complexity is affordable. 
  \end{remark}

\section{Polar Coding Preliminaries}
\label{S:PolarPri}
  
  \subsection{Polar Coding for Lossless Source Compression}
  First, let us recap the lossless source polarization scheme introduced in \cite{arikan2010source} and generalized to arbitrary alphabet in \cite{sasoglu2011polar}. Let $(X,Y)\sim p_{X,Y}$ be a pair of random variables over $(\mathcal{X}\times \mathcal{Y})$ with $|\mathcal{X}|=q_X$ being a prime number\footnote{Although for composite $q_X$, polarization can also happen if we use some special types of operations instead of group operation \cite{sasoglu2011polar,Nasser2016arbitrary,Nasser2016I,Nasser2017II}, we only consider the prime number case in this paper for simplicity.}. Consider $X$ as the memoryless source to be compressed and $Y$ as \textit{side information} of $X$. Let $U^{1:N}=X^{1:N}\mathbf{G}_N$. As $N$ goes to infinity, $U^j$ ($j\in [N]$) becomes either almost independent of $(Y^{1:N},U^{1:j-1})$ and uniformly distributed, or almost determined by $(Y^{1:N},U^{1:j-1})$ \cite{arikan2010source}. Define the following sets of polarized indices:
  \begin{align}
  \mathcal{H}^{(N)}_{X|Y}&=\{j\in [N]:H(U^j|Y^{1:N},U^{1:j-1})\geq \log_2(q_X)-\delta_N\},\label{HXY}\\
  \mathcal{L}^{(N)}_{X|Y}&=\{j\in [N]:H(U^j|Y^{1:N},U^{1:j-1})\leq \delta_N\}.\label{LXY}
  \end{align}
  From \cite{sasoglu2011polar,chou2016broadcast} we have
  \begin{equation}
  \label{PolarRate}
  \begin{aligned}
  \lim_{N\rightarrow \infty}\frac{1}{N}|\mathcal{H}^{(N)}_{X|Y}|&=H_{q_X}(X|Y),\\
  \lim_{N\rightarrow \infty}\frac{1}{N}|\mathcal{L}^{(N)}_{X|Y}|&=1-H_{q_X}(X|Y).
  \end{aligned}
  \end{equation}
  With $U^{(\mathcal{L}^{(N)}_{X|Y})^C}$ and $Y^{1:N}$, $X^{1:N}$ can be recovered at arbitrarily low error probability given sufficiently large $N$.
  
  The compression of a single source $X$ can be seen as a special case of the above one by letting $Y=\emptyset$.

\subsection{Polar Coding for Arbitrary Discrete Memoryless Channels}
  Polar codes were originally developed for symmetric channels. By invoking results in source polarization, one can construct polar codes for asymmetric channels without alphabet extension, as introduced in \cite{honda2013asymmetric}. However, the scheme of \cite{honda2013asymmetric} requires the encoder and the decoder to share a large amount of random mappings, which raises a practical concern of not being explicit. In \cite{goela2015broadcast,chou2015deterministic,chou2016broadcast,gad2016asymmetric}, deterministic mappings are used to replace (part of) the random mappings so as to reduce the amount of shared randomness needed. Next, we briefly review the method of \cite{chou2015deterministic,chou2016broadcast}\footnote{We note that the common message encoding scheme in \cite{chou2016broadcast} (consider the special case when there is no Eve and no chaining scheme) and the scheme in \cite{chou2015deterministic} share the same essence, although there is a slight difference in the partition scheme for information and frozen symbols (see (11) of \cite{chou2016broadcast} and (10) of \cite{chou2015deterministic}), and reference \cite{chou2015deterministic} uses deterministic rules for some symbols while reference \cite{chou2016broadcast} uses random rules.}, which only requires a vanishing rate of shared randomness.
  
  Let $W(Y|X)$ be a discrete memoryless channel (DMC) with a $q_X$-ary input alphabet $\mathcal{X}$, where $q_X$ is a prime number. Let $U^{1:N}=X^{1:N}\mathbf{G}_N$ and define $\mathcal{H}^{(N)}_X$ and $\mathcal{H}^{(N)}_{X|Y}$ as in (\ref{HXY}), and $\mathcal{L}^{(N)}_{X|Y}$ as in (\ref{LXY}). Define the information set, frozen set and almost deterministic set respectively as follows:
  \begin{align}
  \mathcal{I}&\triangleq \mathcal{H}_X^{(N)}\cap \mathcal{L}_{X|Y}^{(N)}, \label{PCAC-I}\\
  \mathcal{F}_r&\triangleq \mathcal{H}_X^{(N)}\cap (\mathcal{L}_{X|Y}^{(N)})^c, \label{PCAC-Fr}\\
  \mathcal{F}_d&\triangleq (\mathcal{H}_X^{(N)})^c.\label{PCAC-Fd}
  \end{align}
  The encoding procedure goes as follows: $\{u^j\}_{j\in \mathcal{I}}$ carry information, $\{u^j\}_{j\in \mathcal{F}_r}$ are filled with uniformly distributed frozen symbols (shared between the sender and the receiver), and $\{u^j\}_{j\in \mathcal{F}_d}$ are randomly generated according to conditional probability $P_{U^j|U^{1:j-1}}(u|u^{1:j-1})$. To guarantee reliable decoding, $\{u^j\}_{j\in (\mathcal{H}_X^{(N)})^C\cap (\mathcal{L}_{X|Y}^{(N)})^C}$ are separately transmitted to the receiver with some reliable error-correcting code, the rate of which vanishes as $N$ goes large \cite{chou2016broadcast}. Since $\{u^j\}_{j\in \mathcal{F}_r}$ only need to be uniformly distributed, they can be the same in different blocks. Thus, the rate of frozen symbols in this scheme can also be made negligible by reusing them over sufficient number of blocks.
   
  After receiving $y^{1:N}$ and recovered $\{u^j\}_{j\in (\mathcal{H}_X^{(N)})^C\cap (\mathcal{L}_{X|Y}^{(N)})^C}$, the receiver computes the estimate $\bar{u}^{1:N}$ of $u^{1:N}$ with a SCD as
  \begin{equation}
  \bar{u}^{j}=
  \begin{cases}
  u^j,&\text{if } j\in (\mathcal{L}_{X|Y}^{(N)})^C\\
  \arg\max_{u\in\{0,1\}}P_{U^{j}|Y^{1:N}U^{1:j-1}}(u|y^{1:N},u^{1:j-1}),&\text{if } j\in \mathcal{L}_{X|Y}^{(N)}
  \end{cases}.
  \end{equation}
  
  It is shown that the rate of this scheme, $R=|\mathcal{I}|/N$, satisfies \cite{honda2013asymmetric}
  \begin{equation}
  \lim_{N\rightarrow \infty}R=I(X;Y).
  \end{equation}

  \subsection{Polar Coding for Multiple Access Channels}
  \label{Sec-MAC}
  Let $P_{Y|X_1X_2}(y|x_1,x_2)$ be the transition probability of a discrete memoryless 2-user MAC, where $x_1\in\mathcal{X}_1$ with $|\mathcal{X}_1|=q_{X_1}$ and $x_2\in\mathcal{X}_2$ with $|\mathcal{X}_2|=q_{X_2}$. For a fixed product distribution of $P_{X_1}(x_1)P_{X_2}(x_2)$, the achievable rate region of $P_{Y|X_1X_2}$ is given by \cite{cover2012informtaion}
  \begin{equation}
  \mathcal{R}(P_{Y|X_1X_2}) \triangleq \left\lbrace
  \begin{matrix}
  \left(
  \begin{array}{ccc}
  R_1\\
  R_2
  \end{array}
  \right)&
  \left|
  \begin{array}{ccc}
  \begin{array}{ccc}
  0 \leq R_1 \leq I(X_1;Y|X_2)\\
  0 \leq R_2 \leq I(X_2;Y|X_1)\\
  R_1+R_2 \leq I(X_1,X_2;Y)
  \end{array}
  \end{array}\right.
  \end{matrix}
  \right\rbrace .
  \end{equation}
  
  Polar coding for MACs has been studied in \cite{arikan2012sw,onayscmac,sasoglu2013mac,abbe2012mmac,mahdavifar2016uniform}. Although \cite{mahdavifar2016uniform} provides a more general scheme that can achieve the whole uniform rate region of a $m$-user ($m\geq 2$) MAC, in our scheme, we adopt the monotone chain rule expansion method in \cite{arikan2012sw} because it has simple structure and possesses similar complexity to the single-user polar codes. Reference \cite{arikan2012sw} mainly deals with the Slepian-Wolf problem in source coding, but the method can be readily applied to the problem of coding for the 2-user MAC since they are dual problems, which has been studied in \cite{onayscmac} and used in \cite{wang2015channel}. However, both \cite{onayscmac} and \cite{wang2015channel} consider uniform channel inputs. Here we generalize it to arbitrary input case with the approach of the previous subsection. Note that although the input alphabets of the two users can be different, the extension is straightforward since there is no polarization operation between the two channel inputs. 
  For simplicity, we assume $q_{X_1}$ and $q_{X_2}$ are prime numbers.
  Define
  \begin{equation}
  \label{UX}
  U_1^{1:N}=X_1^{1:N}\mathbf{G}_N,~~U_2^{1:N}=X_2^{1:N}\mathbf{G}_N.
  \end{equation}
  Let $S^{1:2N}$ be a permutation of $U_1^{1:N}U_2^{1:N}$ such that it preserves the relative order of the elements of both $U_1^{1:N}$ and $U_2^{1:N}$, called a monotone chain rule expansion. Such an expansion can be represented by a string $\mathbf{b}_{2N}=b_1b_2...b_{2N}$, called the \textit{path} of the expansion, where $b_j=0$ $(j\in [2N])$ represents that $S^j\in U_1^{1:N}$, and $b_j=1$ represents that $S^j\in U_2^{1:N}$. Then we have
  \begin{align*}
  I(Y^{1:N};U_1^{1:N},U_2^{1:N})&=H(U_1^{1:N},U_2^{1:N})-H(U_1^{1:N},U_2^{1:N}|Y^{1:N})\\
  &=NH(X_1)+NH(X_2)-\sum_{j=1}^{2N}H(S^j|Y^{1:N},S^{1:j-1}).
  \end{align*}
  It is shown in \cite{arikan2012sw} that $H(S^j|Y^{1:N},S^{1:j-1})$ ($j\in [2N]$) polarizes to 0 or 1 as $N$ goes to infinity\footnote{The entropy here is calculated adaptively. If $j\in\mathcal{S}_{U_k}$ ($k=1,2$), then entropy is calculated with $q_{X_k}$-based logarithm.}. Define the rates of the two users as
  \begin{equation}
  \begin{aligned}
  R_{U_1}&=H(X_1)-\frac{1}{N}\sum_{j\in \mathcal{S}_{U_1}}H(S^j|Y^{1:N},S^{1:j-1}),\\
  R_{U_2}&=H(X_2)-\frac{1}{N}\sum_{j\in \mathcal{S}_{U_2}}H(S^j|Y^{1:N},S^{1:j-1}),
  \end{aligned}
  \end{equation}
  respectively, where $\mathcal{S}_{U_1}\triangleq \{ j\in [2N]:b_j=0\}$ and $\mathcal{S}_{U_2}\triangleq \{ j\in [2N]:b_j=1\}$.
  \newtheorem{proposition}{Proposition}
  \begin{proposition}[\cite{arikan2012sw}]
  	\label{proposition:MAC-1}
  	Let $(R_1,R_2)$ be a rate pair on the dominant face of $\mathcal{R}(P_{Y|X_1X_2})$. For any given $\epsilon >0$, there exists $N$ and a chain rule $\mathbf{b}_{2N}$ on $U_1^{1:N}U_2^{1:N}$ such that $\mathbf{b}_{2N}$ is of the form $0^i1^N0^{N-i}$ ($0\leq i\leq N$) and has a rate pair $(R_{U_1},R_{U_2})$ satisfying
  	\begin{equation}
  	|R_1-R_{U_1}|\leq \epsilon \text{ and } |R_2-R_{U_2}|\leq \epsilon.
  	\end{equation}
  \end{proposition}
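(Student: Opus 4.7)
The plan is to start from the two extreme members of the family, $\mathbf{b}_{2N}=0^N 1^N$ (the case $i=N$) and $\mathbf{b}_{2N}=1^N 0^N$ (the case $i=0$), and show that these realize the two corner points of the dominant face of $\mathcal{R}(P_{Y|X_1X_2})$. For $0^N 1^N$, the conditional entropies $H(S^j|Y^{1:N},S^{1:j-1})$ are exactly those of the source-polarization scheme for $(X_1^{1:N},Y^{1:N})$ followed by $(X_2^{1:N},Y^{1:N},X_1^{1:N})$, so by the source-polarization rate formulas in (\ref{PolarRate}) one has $R_{U_1}\to I(X_1;Y)$ and $R_{U_2}\to I(X_2;Y|X_1)$. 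Symmetrically, $1^N 0^N$ converges to $(I(X_1;Y|X_2),\,I(X_2;Y))$. These are exactly the two corners of the dominant face.

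Next, I would interpolate between these corners via the one-parameter family $\mathbf{b}_{2N}(i)=0^i 1^N 0^{N-i}$. A crucial invariant is that the sum-rate is path-independent: since $\sum_{j=1}^{2N}H(S^j|Y^{1:N},S^{1:j-1})=H(U_1^{1:N},U_2^{1:N}|Y^{1:N})$ regardless of ordering, one always has $R_{U_1}(i)+R_{U_2}(i)=I(X_1,X_2;Y)$ (up to normalization across the two base-$q_{X_k}$ conventions, which contributes a vanishing term). Hence every rate pair produced by the family lies \emph{exactly} on the line that defines the dominant face, and the task reduces to showing that the first coordinate $R_{U_1}(i)$ sweeps densely from $I(X_1;Y|X_2)$ to $I(X_1;Y)$ as $i$ increases from $0$ to $N$.

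For the density claim I would use the standard adjacent-swap argument from the monotone chain rule analysis: moving from $\mathbf{b}_{2N}(i)$ to $\mathbf{b}_{2N}(i+1)$ amounts to a single adjacent swap $(b_j,b_{j+1})=(0,1)\mapsto(1,0)$ in the string, which alters only the two chain-rule terms at positions $j$ and $j+1$. Each such term is bounded by the entropy of a single symbol and is therefore $O(1)$ uniformly in $N$, so the change in $N R_{U_1}$ caused by one swap is $O(1)$ and the change in $R_{U_1}(i)$ is $O(1/N)$. Since $R_{U_1}(0)\to I(X_1;Y|X_2)$ and $R_{U_1}(N)\to I(X_1;Y)$ while consecutive values differ by $O(1/N)$, a discrete intermediate-value argument gives, for any target $R_1$ on the dominant face, an $i$ with $|R_{U_1}(i)-R_1|\leq\epsilon$ for all sufficiently large $N$; the sum-rate invariance then forces $|R_{U_2}(i)-R_2|\leq\epsilon$ automatically.

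The main obstacle I anticipate is making the ``one adjacent swap perturbs each coordinate by $O(1/N)$'' statement rigorous while respecting the mixed $q_{X_1}$-ary and $q_{X_2}$-ary entropy conventions defined in Section~\ref{Sec-MAC}; in particular one must check that the sum-rate invariance is exact up to the polarization gap, and not degraded by the change of logarithm base when a $0$ and a $1$ swap positions. This is resolved by applying the exchange lemma of \cite{arikan2012sw} symbol-by-symbol in the appropriate base, together with the asymptotic behaviour of the source-polarization sets (\ref{HXY})--(\ref{PolarRate}), which guarantees that the boundary contributions vanish as $N\to\infty$.
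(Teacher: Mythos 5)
Your overall plan—recovering the two corner points from the extreme expansions $0^N1^N$ and $1^N0^N$, exact sum-rate invariance via the chain rule, an $O(1/N)$ increment bound along the one-parameter family $0^i1^N0^{N-i}$, and a discrete intermediate-value argument—is precisely the argument behind Ar{\i}kan's Theorem 1, which is all the paper itself relies on (it omits the proof and cites \cite{arikan2012sw}, with $Y^{1:N}$ treated as side information of $(X_1^{1:N},X_2^{1:N})$). However, the justification you give for the crucial increment bound is wrong as stated: $\mathbf{b}_{2N}(i)=0^i1^N0^{N-i}$ and $\mathbf{b}_{2N}(i+1)=0^{i+1}1^N0^{N-i-1}$ do \emph{not} differ by a single adjacent swap. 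Passing from one to the next moves the element $U_1^{i+1}$ from after the entire block $U_2^{1:N}$ to before it, i.e., across $N$ positions, so all $N$ chain-rule terms belonging to user 2 change (each acquires $U_1^{i+1}$ in its conditioning), not just two terms. If you repair this, as your last paragraph suggests, by decomposing the move into $N$ adjacent transpositions and applying the per-swap $O(1/N)$ exchange bound once per swap, you only obtain an $O(1)$ bound on the rate change, which is too weak to run the intermediate-value argument.

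The step is salvageable, but by a direct computation rather than swap counting. For this family, $N R_{U_2}(i)=NH(X_2)-H(U_2^{1:N}\mid Y^{1:N},U_1^{1:i})$, hence
\[
N\bigl(R_{U_2}(i+1)-R_{U_2}(i)\bigr)=I\bigl(U_1^{i+1};U_2^{1:N}\mid Y^{1:N},U_1^{1:i}\bigr)\in\bigl[0,\log_2 q_{X_1}\bigr],
\]
so each unit increase of $i$ changes $R_{U_2}$ (and, by the sum-rate invariance you already established, $R_{U_1}$) by at most $(\log_2 q_{X_1})/N$ (at most $1/N$ in the paper's $q_{X_1}$-ary units), and in fact monotonically. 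Equivalently, the $N$ per-swap increments in your decomposition all have the same sign and telescope to exactly this one conditional mutual information, so the crude accumulation of worst-case swap bounds is what loses the factor of $N$. With this corrected bound, the rest of your argument goes through: the endpoint rate pairs $(I(X_1;Y|X_2),I(X_2;Y))$ at $i=0$ and $(I(X_1;Y),I(X_2;Y|X_1))$ at $i=N$ hold exactly for every $N$ by the chain rule and invertibility of $\mathbf{G}_N$ (no appeal to the limits in (\ref{PolarRate}) is needed), the sum rate is pinned to $I(X_1X_2;Y)$, and the discrete intermediate-value argument together with the base-$q_{X_k}$ normalization you flag (which is indeed only a bookkeeping issue) yields the claim.
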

  
  Although the permutations can have lots of variants, even non-monotone \cite{mahdavifar2016uniform}, Proposition \ref{proposition:MAC-1} shows that expansions of type $0^i1^N0^{N-i}$ ($0\leq i\leq N$) are sufficient to achieve every point on the dominant face of $\mathcal{R}(P_{Y|X_1X_2})$ given sufficiently large $N$, which can make our code design and construction simpler. To polarize a MAC sufficiently while keeping the above rate approximation intact, we need to scale the path. For any integer $l=2^m$, let $l\mathbf{b}_{2N}$ denote $$\underbrace{b_1\cdots b_1}_l \underbrace{b_2\cdots b_2}_l\cdots\cdots \underbrace{b_{2N}\cdots b_{2N}}_l,$$ which is a monotone chain rule for $U_1^{1:lN}U_2^{1:lN}$. It is shown in \cite{arikan2012sw} that the rate pair for $\mathbf{b}_{2N}$ is also the rate pair for $l\mathbf{b}_{2N}$.

  Now we can construct a polar code for the 2-user MAC with arbitrary inputs.  Let $f_k(i):[N]\rightarrow \mathcal{S}_{U_k}$ ($k=1,2$) be the mapping from indices of $U_k^{1:N}$ to those of $S^{\mathcal{S}_{U_k}}$. Define
  \begin{equation}
  \begin{aligned}
  \mathcal{H}^{(N)}_{S_{U_k}}&\triangleq \{j\in [N]:H(S^{f_k(j)}|S^{1:f_k(j)-1})\geq \log_2({q_{X_k}})-\delta_N\},\\
  \mathcal{L}^{(N)}_{S_{U_k}|Y}&\triangleq \{j\in [N]:H(S^{f_k(j)}|Y^{1:N},S^{1:f_k(j)-1})\leq \delta_N\},
  \end{aligned}
  \end{equation}
  which satisfy
  \begin{equation}
  \label{MACRate}
  \begin{aligned}
  \lim_{N\rightarrow \infty}\frac{1}{N}|\mathcal{H}^{(N)}_{S_{U_k}}|&=\frac{1}{N}\sum_{j\in \mathcal{S}_{U_k}}H_{q_{X_k}}(S^j|Y^{1:N},S^{1:j-1}),\\
  \lim_{N\rightarrow \infty}\frac{1}{N}|\mathcal{L}^{(N)}_{S_{U_k}}|&=1-\frac{1}{N}\sum_{j\in \mathcal{S}_{U_k}}H_{q_{X_k}}(S^j|Y^{1:N},S^{1:j-1}).
  \end{aligned}
  \end{equation}
  Since $X_1$ and $X_2$ are independent, we have
  \begin{equation}
  \mathcal{H}^{(N)}_{S_{U_k}}=\mathcal{H}^{(N)}_{X_k}\triangleq \{j\in [N] :H(U_k^j|U_k^{1:j-1})\geq \log_2({q_{X_k}})-\delta_N\}.
  \end{equation}
  Partition user $k$'s ($k=1,2$) indices as
  \begin{equation}
  \begin{aligned}
  \mathcal{I}_k&\triangleq \mathcal{H}^{(N)}_{S_{U_k}}\cap \mathcal{L}^{(N)}_{{S_{U_k}}|Y}, \\
  \mathcal{F}_{kr}&\triangleq \mathcal{H}^{(N)}_{S_{U_k}}\cap (\mathcal{L}^{(N)}_{{S_{U_k}}|Y})^C, \\
  \mathcal{F}_{kd}&\triangleq (\mathcal{H}^{(N)}_{S_{U_k}})^C.
  \end{aligned}
  \end{equation}
 Then each user can apply the same encoding scheme as the single-user case. The receiver uses a SCD to decode two users' information jointly according to the expansion order. The polarization result can be summarized as the following proposition.
  \begin{proposition}[\cite{arikan2012sw}]
  	\label{proposition:MAC-2}
  	Let $P_{Y|X_1X_2}(y|x_1,x_2)$ be the transition probability of a discrete memoryless 2-user MAC. Consider the transformation defined in (\ref{UX}). Let $N_0=2^{n_0}$ for some $n_0\geq 1$ and fix a path $\mathbf{b}_{2N_0}$ for $U_1^{1:N_0}U_2^{1:N_0}$. The rate pair for $\mathbf{b}_{2N_0}$ is denoted by $(R_{U_1},R_{U_2})$. Let $N=2^lN_0$ for $l\geq 1$ and let $S^{1:2N}$ be the expansion represented by $2^l\mathbf{b}_{2N_0}$. Then, for any given $\delta>0$, as $l$ goes to infinity, we have (the entropy here is also calculated adaptively)
  	\begin{equation}
  	\label{MACR}
  	\begin{aligned}
  	&\frac{1}{2N}\big{|}\{ 1\leq j\leq 2N:\delta <H(S^j|Y^{1:N},S^{1:j-1})<1-\delta \}\big{|}\rightarrow 0,\\
  	&~~~~~~\frac{|\mathcal{I}_1|}{N}\rightarrow R_{U_1} \text{ and } \frac{|\mathcal{I}_2|}{N}\rightarrow R_{U_2}.
  	\end{aligned}
  	\end{equation}
  \end{proposition}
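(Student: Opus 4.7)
I would reduce the two-user MAC polarization under monotone chain rule expansions to Arıkan's single-user polarization machinery, proceeding in three steps: a conservation law, rate-scaling invariance, and inner polarization within super-blocks. For the first step, I would establish the chain-rule identity
\begin{equation*}
\sum_{j=1}^{2N} H(S^j | Y^{1:N}, S^{1:j-1}) = H(X_1^{1:N}, X_2^{1:N} | Y^{1:N}) = N \cdot H(X_1, X_2 | Y),
\end{equation*}
which holds by invertibility of $\mathbf{G}_N$, independence of $X_1, X_2$, and memorylessness of the MAC. Restricting the sum to user $k$'s positions yields the sum-rate identity $R_{U_1} + R_{U_2} = I(X_1, X_2; Y)$, confirming that the rate pair attached to any monotone chain rule lies on the sum-rate hyperplane. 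For the second step, I would verify that for every $l \geq 0$ the rate pair attached to $2^l \mathbf{b}_{2N_0}$ equals that attached to the base path $\mathbf{b}_{2N_0}$. This is a direct consequence of the recursive factorization $\mathbf{G}_N = \mathbf{B}_N \mathbf{F}^{\otimes n}$ together with the observation that refining each base-path position into $2^l$ copies is an entropy-preserving reshuffling of user $k$'s conditional entropies.

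The core of the argument is the third step: block-wise polarization inside the refined path. Partition $2^l \mathbf{b}_{2N_0}$ into $2N_0$ super-blocks of length $2^l$; each super-block $m$ belongs to exactly one user $k_m$ and its $2^l$ symbols form a contiguous ordered subset of $U_{k_m}^{1:N}$. By the nested decomposition of the polar kernel, the super-block can be realized as a standard length-$2^l$ Arıkan transformation on an intermediate vector, with side information $(Y^{1:N}, S^{1:(m-1) \cdot 2^l})$. Invoking \c{S}a\c{s}o\u{g}lu's prime-alphabet extension of Arıkan's conditional polarization theorem inside the super-block then yields polarization of the $2^l$ conditional entropies to $0$ or $\log_2 q_{X_{k_m}}$ (equivalently $0$ or $1$ in the adaptive base) as $l \to \infty$. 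Summing the vanishing fractions of intermediate-entropy indices across the $2N_0$ super-blocks gives the first claim of the proposition, and combining the low-entropy counts inside each super-block with rate-scaling invariance gives $|\mathcal{I}_k|/N \to R_{U_k}$.

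The main technical obstacle is aligning the bit-reversal $\mathbf{B}_N$ with the super-block boundaries so that each super-block is genuinely realized as a length-$2^l$ Arıkan transformation of an intermediate vector whose side information is measurable with respect to $Y^{1:N}$ and the previously processed super-blocks. A secondary subtlety is that the ``side information'' for a given super-block includes not only $Y^{1:N}$ (which depends on both users' full codewords, including the remainder of the super-block's own user) but also all earlier super-blocks, and one must verify that this does not break the Arıkan recursion inside the super-block. Once this book-keeping is complete, Arıkan's martingale convergence argument applies inside each super-block essentially unchanged, and the proposition follows by aggregating these local convergences.
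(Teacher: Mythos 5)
Your proposal is correct and follows essentially the same route as the paper, which itself omits a detailed proof and defers to Theorems 1 and 2 of Ar{\i}kan's monotone chain rule paper by treating $Y^{1:N}$ as side information of the source pair $(X_1,X_2)$: your super-block decomposition of the scaled path $2^l\mathbf{b}_{2N_0}$ into length-$2^l$ Ar{\i}kan transforms of i.i.d.\ copies of the base variables, with entropy conservation giving rate preservation and the single-user (prime-alphabet) polarization martingale giving the vanishing fraction of unpolarized indices, is exactly that analysis. The alignment concern you raise resolves as you anticipate, since the aligned contiguous blocks of $U_k^{1:N}$ descend from a single length-$N_0$ coordinate and the conditioning $(Y^{1:N},S^{1:(m-1)2^l})$ is an invertible transform of the $2^l$ i.i.d.\ copies of the base side information.
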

  
  Proposition \ref{proposition:MAC-1} and \ref{proposition:MAC-2} can be readily extended from Theorem 1 and Theorem 2 in \cite{arikan2012sw} by considering $Y$ as side information of source pair $(X_1,X_2)$ and performing the same analysis. Thus, we omit the proof here.
   
\section{An Overview of Our New Approach}
\label{S:Overview}
\begin{figure}[tb]
	\centering
	\includegraphics[width=11cm]{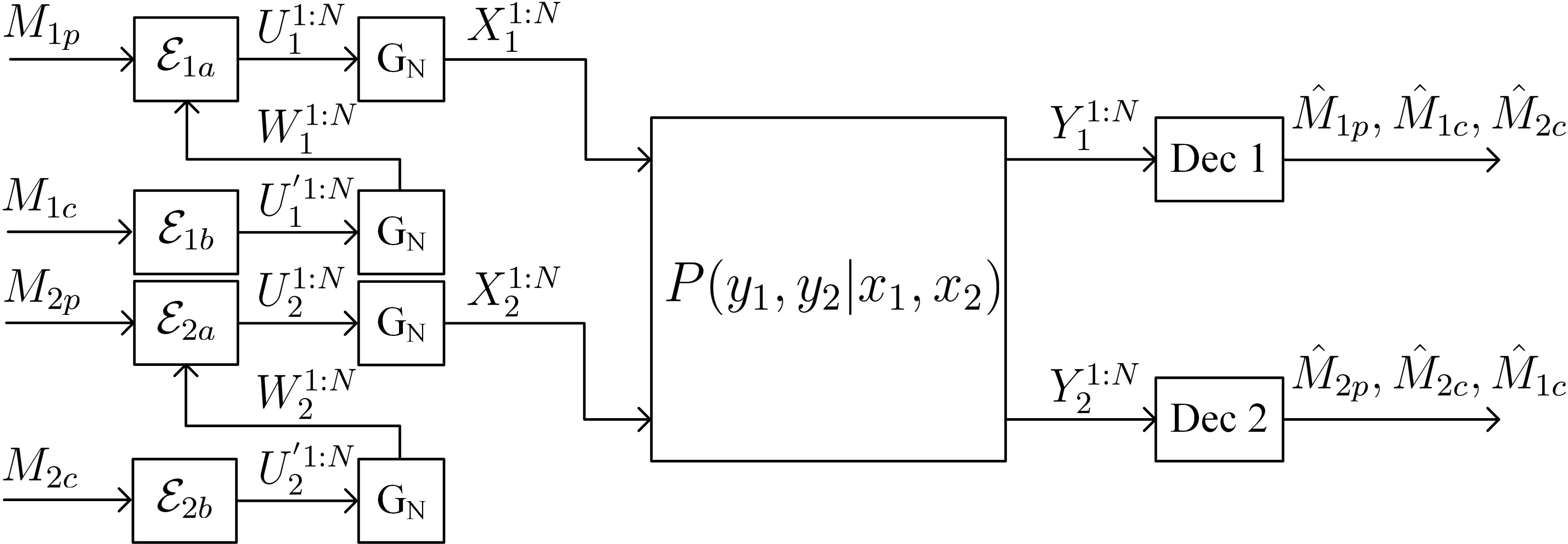}
	\caption{Proposed heterogeneous superposition polar coding scheme for the 2-user DM-IC.} \label{fig:ic-hk}
\end{figure}

  In this section, we introduce the main idea of our scheme. Since the purpose of introducing the time-sharing parameter $Q$ in Theorem \ref{theorem:HK-o} and \ref{theorem:HK} is to replace the convex-hull operation, in the code design part, we will consider a fixed $Q=q$ and drop this condition in the expressions for simplicity. 
  
  Our proposed heterogeneous superposition polar coding scheme is illustrated in Fig. \ref{fig:ic-hk}. Sender $k$'s ($k=1,2$) splits its message $M_k$ into a private message $M_{kp}$ and a common message $M_{kc}$. Encoder $\mathcal{E}_{kb}$ maps $M_{kc}$ into a sequence $U_k^{'1:N}$ of length $N$, which goes into a polar encoder to generate an intermediate codeword $W_k^{1:N}$ (corresponding to ARV $W_k$ in Theorem \ref{theorem:HK}). Encoder $\mathcal{E}_{ka}$ then maps $M_{kp}$ together with $W_k^{1:N}$ into $U_k^{1:N}$, which goes into another polar encoder to generate the final codeword $X_k^{1:N}$.

  \subsection{Synthesized MACs for Receivers}
    For a target rate pair $\mathbf{P}$, let $R_k^p$ and $R_k^{c}$ respectively denote the corresponding private and common message rates of Sender $k$ ($k=1,2$), and define $\mathbf{P}^1\triangleq (R_1^p+R_1^c,R_2^c)$ and $\mathbf{P}^2\triangleq (R_1^c,R_2^p+R_2^c)$ as Receiver 1's and Receiver 2's receiving rate pairs respectively. Furthermore, define $\mathbf{P}^c\triangleq (R_1^c,R_2^c)$ as the common message rate pair. In the rest of this paper, we refer $(R_1^p,R_1^c,R_2^p,R_2^c)$ to a \textit{rate decomposition} of $\mathbf{P}$.
    
    For the purpose of decomposing a target rate pair into a private and common message rate tuple suitable for our partially-joint decoding scheme, we first define the \textit{effective channel} of each receiver. For Receiver 1, its effective channel, $P_{Y_1|X_1W_2}$, is defined as
    \begin{equation}
    P_{Y_1|X_1W_2}(y_1|x_1,w_2)\triangleq \sum_{x_2}{P_{Y_1|X_1X_2}(y_1|x_1,x_2)P_{X_2|W_2Q}(x_2|w_2,q)}.
    \end{equation}
    Similarly, the effective channel of Receiver 2 is defined as
    \begin{equation}
    P_{Y_2|W_1X_2}(y_2|w_1,x_2)\triangleq \sum_{x_1}{P_{Y_2|X_1X_2}(y_2|x_1,x_2)P_{X_1|W_1Q}(x_1|w_1,q)}.
    \end{equation}
    The achievable rate regions for these two MACs are
    \begin{equation}
    \mathcal{R}(P_{Y_1|X_1W_2}) = \left\lbrace
    \begin{matrix}
    \left(
    \begin{array}{ccc}
    R_1\\
    R_2
    \end{array}
    \right)&
    \left|
    \begin{array}{ccc}
    \begin{array}{ccc}
    0 \leq R_1 \leq I(X_1;Y_1|W_2)\\
    0 \leq R_2 \leq I(W_2;Y_1|X_1)\\
    R_1+R_2 \leq I(X_1W_2;Y_1)
    \end{array}
    \end{array}\right.
    \end{matrix}
    \right\rbrace ,
    \end{equation}
    \begin{equation}
    \mathcal{R}(P_{Y_2|W_1X_2}) = \left\lbrace
    \begin{matrix}
    \left(
    \begin{array}{ccc}
    R_1\\
    R_2
    \end{array}
    \right)&
    \left|
    \begin{array}{ccc}
    \begin{array}{ccc}
    0 \leq R_1 \leq I(W_1;Y_2|X_2)\\
    0 \leq R_2 \leq I(X_2;Y_2|W_1)\\
    R_1+R_2 \leq I(X_2W_1;Y_2)
    \end{array}
    \end{array}\right.
    \end{matrix}
    \right\rbrace.
    \end{equation}
    
    Now we can study the Han-Kobayashi coding problem in $P_{Y_1|X_1W_2}$ and $P_{Y_2|W_1X_2}$. In these two MACs, the rate of $X_k$ ($k=1,2$) equals the overall rate of Sender $k$, while the rate of $W_k$ equals the common message rate of Sender $k$. Obviously, $\mathbf{P}^1$ and $\mathbf{P}^2$ must lie inside $\mathcal{R}(P_{Y_1|X_1W_2})$ and $\mathcal{R}(P_{Y_2|W_1X_2})$ respectively in order to make reliable communication possible.
    
    Giving only two effective channels is insufficient to determine the suitable decoding order for a target rate pair. If we hope to use a partially-joint decoder, the following two MACs, $P_{Y_1|W_1W_2}$ and $P_{Y_2|W_1W_2}$, will be useful. For $k=1,2$, define
    \begin{equation}
    P_{Y_k|W_1W_2}(y_k|w_1,w_2)\triangleq \sum_{x_1}\sum_{x_2}{P_{Y_k|X_1X_2}(y_k|x_1,x_2)P_{X_1|W_1Q}(x_1|w_1,q)P_{X_2|W_2Q}(x_2|w_2,q)},
    \end{equation}
    the achievable rate region of which is
    \begin{equation}
    \mathcal{R}(P_{Y_k|W_1W_2}) = \left\lbrace
    \begin{matrix}
    \left(
    \begin{array}{ccc}
    R_1\\
    R_2
    \end{array}
    \right)&
    \left|
    \begin{array}{ccc}
    \begin{array}{ccc}
    0 \leq R_1 \leq I(W_1;Y_k|W_2)\\
    0 \leq R_2 \leq I(W_2;Y_k|W_1)\\
    R_1+R_2 \leq I(W_1W_2;Y_k)
    \end{array}
    \end{array}\right.
    \end{matrix}
    \right\rbrace .
    \end{equation}
    \begin{figure}[tb]
    	\centering
    	\includegraphics[width=11cm]{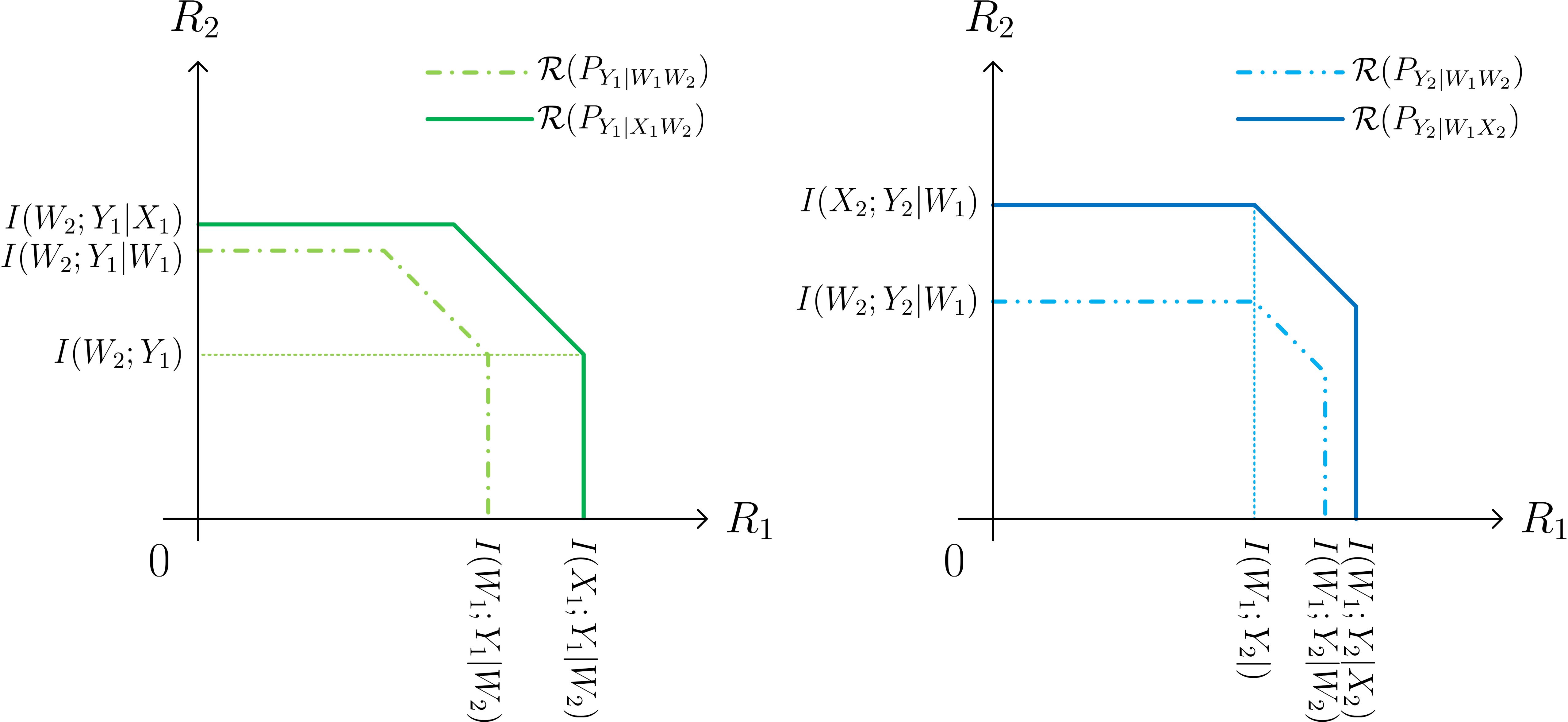}
    	\caption{Illustration for the achievable rate regions of the synthesized MACs.} \label{fig:mac_relation}
    \end{figure}
    
    The relations between the above four achievable rate regions are shown in Fig. \ref{fig:mac_relation}. If the common message rate pair lies inside $\mathcal{R}(P_{Y_k|W_1W_2})$, then Receiver $k$ can apply the Type I partially-joint decoding. Otherwise it will need to use the Type II one.

  \subsection{The General Idea of Our Scheme}
   According to the two receivers' different choices of partially-joint decoding orders, we define the following two types of points (rate pairs).   
   \begin{definition}[Type A points]
   	A Type A point $\mathbf{P}$ in $\mathcal{R}_{HK}(P_1^*)$ is a rate pair which can be decomposed into a private and common message rate tuple that satisfies:
   	\begin{equation}
   	\begin{aligned}
   	(R_1^c,R_2^c)&\in \mathcal{R}(P_{Y_1|W_1W_2})\cap\mathcal{R}(P_{Y_2|W_1W_2}),\\
   	R_1^p&=I(X_1;Y_1|W_1W_2),\\
   	R_2^p&=I(X_2;Y_2|W_1W_2).
   	\end{aligned}
   	\end{equation} 
   \end{definition}
   
   \begin{definition}[Type B points]
   	\label{def:TYII}
   	A Type B point $\mathbf{P}$ in $\mathcal{R}_{HK}(P_1^*)$ is a rate pair which can be decomposed into a private and common message rate tuple that satisfies:
   	\begin{equation}
   	\begin{aligned}
   	(R_1^c,R_2^c)&\in \mathcal{R}(P_{Y_k|W_1W_2}),\\
   	R_{k'}^c&\leq I(W_{k'};Y_{k'}),\\
   	R_k^p&=I(X_k;Y_k|W_1W_2),\\
   	R_{k'}^p&=I(X_{k'}W_k;Y_{k'}|W_{k'})-R_k^c,
   	\end{aligned} 
   	\end{equation} 
   	where $k,k'\in\{1,2\}$ and $k\neq k'$.
   \end{definition}
   
  To achieve a Type A point $\mathbf{P}$, both receivers can apply the Type I partially-joint decoding. We first design a MAC polar code for two common messages that achieves $\mathbf{P}^c$ in the compound MAC composed of $P_{Y_1|W_1W_2}$ and $P_{Y_2|W_1W_2}$, and then design a point-to-point polar code for each sender's private message with the common messages being side information.   
  To achieve a Type B point, one receiver applies the Type I partially-joint decoding while the other applies Type II. Let us consider $k=2,k'=1$ as an example. The code structures for two common messages $(M_{1c},M_{2c})$ and Sender 1's private message $M_{1p}$ are jointly designed in such a way that, Receiver 1 can first decode $M_{1c}$ (equivalently $W_1^{1:N}$) with $Y_1^{1:N}$ and then jointly decode $(M_{1p},M_{2c})$ with the estimate of $W_1^{1:N}$, while Receiver 2 can jointly decode $(M_{1c},M_{2c})$ with $Y_2^{1:N}$. The code structure for Sender 2's private message $M_{2p}$ is simply point-to-point polar codes.
  
  In Section \ref{S:PJD} we have proved by random coding that partially-joint decoding can achieve the whole Han-Kobayashi region. The following lemma provides another evidence to support this conclusion.
  \newtheorem{lemma}{Lemma}
  \begin{lemma}
  	\label{lemma:typeAB}
  Every point on the dominant faces of $\mathcal{R}_{HK}(P_1^*)$ can be classified into either Type A or Type B.
  \end{lemma}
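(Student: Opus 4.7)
The plan is to carry out a case analysis on which of the seven inequalities (\ref{HK-1})--(\ref{HK-7}) is binding at a given dominant-face point, and for each case construct an explicit non-negative rate decomposition $(R_1^p,R_1^c,R_2^p,R_2^c)$ satisfying the membership conditions of either the Type~A or the Type~B definition. Following the authors' convention, I would fix $Q=q$ and suppress it throughout.

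The preparatory step is to rewrite the right-hand sides $c,d,e,f,g$ of (\ref{HK-3})--(\ref{HK-7}) so that the MAC-region bounds of the synthesized channels $P_{Y_k|W_1W_2}$ become visible. Using the chain rule together with the conditional independence $Y_k\perp W_k \mid X_k W_{k'}$ (for $\{k,k'\}=\{1,2\}$) implied by any $P_1^*\in \mathcal{P}_1^*$, one obtains
\begin{equation*}
c = I(W_1W_2;Y_1)+I(X_1;Y_1|W_1W_2)+I(X_2;Y_2|W_1W_2),
\end{equation*}
together with analogous decompositions of $d,e,f,g$ in terms of $I(W_1W_2;Y_2)$, $I(X_kW_{k'};Y_k|W_k)$, $I(W_{k'};Y_{k'})$, and the private-channel quantities $I(X_k;Y_k|W_1W_2)$. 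Thus, once the private-rate budget $R_k^p=I(X_k;Y_k|W_1W_2)$ is subtracted from $R_k$, the residual common-message rates are exactly controlled by the bounds defining $\mathcal{R}(P_{Y_1|W_1W_2})\cap\mathcal{R}(P_{Y_2|W_1W_2})$.

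For any dominant-face point $(R_1,R_2)$, I would first try the Type~A decomposition $R_k^p=I(X_k;Y_k|W_1W_2)$, $R_k^c=R_k-R_k^p$; combining (\ref{HK-1})--(\ref{HK-4}) with the identities above then guarantees $(R_1^c,R_2^c)\in\mathcal{R}(P_{Y_1|W_1W_2})\cap \mathcal{R}(P_{Y_2|W_1W_2})$ on every sub-face where $R_1+R_2\leq \min(c,d)$. On the complementary portion of the dominant face, one of the asymmetric constraints (\ref{HK-6}) or (\ref{HK-7}) must be binding. If (\ref{HK-6}) is binding I would switch to Type~B with $k=2,k'=1$, keeping $R_2^p=I(X_2;Y_2|W_1W_2)$, $R_2^c=R_2-R_2^p$, and setting $R_1^c=R_1+R_2-I(X_1W_2;Y_1|W_1)-I(X_2;Y_2|W_1W_2)$, $R_1^p=R_1-R_1^c$; the binding (\ref{HK-6}) together with (\ref{HK-2}) and (\ref{HK-5}) then yields $(R_1^c,R_2^c)\in\mathcal{R}(P_{Y_2|W_1W_2})$ and $R_1^c\leq I(W_1;Y_1)$, certifying Type~B. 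The case in which (\ref{HK-7}) is binding is handled symmetrically with $k=1,k'=2$.

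The main obstacle I anticipate is bookkeeping: verifying simultaneous non-negativity of all four components and all three MAC-region inequalities on each sub-face amounts to a structured Fourier--Motzkin manipulation that must be executed on each of the (at most five) constituent line segments of the dominant face. A secondary subtlety at vertices where two sum-rate constraints bind simultaneously is resolved automatically because the two neighbouring decompositions agree at the shared vertex.
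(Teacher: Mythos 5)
Your preparatory identities are fine (indeed $I(W_k;Y_k|X_kW_{k'})=0$ gives $c=I(W_1W_2;Y_1)+I(X_1;Y_1|W_1W_2)+I(X_2;Y_2|W_1W_2)$ and its analogues), and your Type B construction for a binding (\ref{HK-6}) is essentially a valid decomposition (although checking $R_1^c\leq I(W_1;Y_1)$ requires (\ref{HK-3}) via $R_1\geq f-c=I(X_1;Y_1|W_1W_2)+I(W_1;Y_2|W_2)$, and the sum bound in $\mathcal{R}(P_{Y_2|W_1W_2})$ needs (\ref{HK-7}), not just (\ref{HK-2}) and (\ref{HK-5})). The step that fails is the central claim that wherever $R_1+R_2\leq\min(c,d)$ the naive split $R_k^p=I(X_k;Y_k|W_1W_2)$, $R_k^c=R_k-R_k^p$ puts $(R_1^c,R_2^c)$ in $\mathcal{R}(P_{Y_1|W_1W_2})\cap\mathcal{R}(P_{Y_2|W_1W_2})$. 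From (\ref{HK-1})--(\ref{HK-4}) you obtain only the own-channel bounds $R_1^c\leq I(W_1;Y_1|W_2)$, $R_2^c\leq I(W_2;Y_2|W_1)$ and the two sum bounds; the cross bounds $R_1^c\leq I(W_1;Y_2|W_2)$ and $R_2^c\leq I(W_2;Y_1|W_1)$, which are part of the two MAC regions, are not implied by the compact description. Concretely, on the face $R_1+R_2=c$ the decomposition forces $R_2^c=R_2-I(X_2;Y_2|W_1W_2)$, which (\ref{HK-2}) only caps at $I(W_2;Y_2|W_1)$; when $I(W_2;Y_2|W_1)>I(W_2;Y_1|W_1)$ there are dominant-face points with $R_2^c>I(W_2;Y_1|W_1)$ while (\ref{HK-6}) and (\ref{HK-7}) are strictly slack, so your plan labels them Type A even though Receiver 1 cannot jointly decode both common messages. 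The paper covers exactly this regime by a second sub-case on the $c$- and $d$-faces, switching to a Type B split with $R_1^c=I(W_1;Y_1)$ and $R_1^p=I(X_1W_2;Y_1|W_1)-R_2^c$.

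The same defect swallows the entire $e$-face, which your dichotomy assigns to Type A (since $e\leq\min(c,d)$ whenever that face is dominant): there the naive split yields $R_1^c+R_2^c=I(W_1;Y_2|W_2)+I(W_2;Y_1|W_1)$, which exceeds $I(W_1W_2;Y_1)$ as soon as $I(W_1;Y_2|W_2)>I(W_1;Y_1)$ (and symmetrically for $Y_2$), so the Type A membership fails outright. The paper's proof classifies all points on $R_1+R_2=e$ as Type B, with two sub-cases depending on $\mathbf{P}(2)$ that pin one common rate to a cross quantity ($R_1^c=I(W_1;Y_2|W_2)$ or $R_2^c=I(W_2;Y_1|W_1)$) and recompute the mate's private rate accordingly. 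So the missing idea is that Type B is needed not only when (\ref{HK-6}) or (\ref{HK-7}) is binding: on portions of the $c$-, $d$- and $e$-faces one must decide which receiver performs the Type II step and re-split the rates face by face, which is precisely the case analysis carried out in the paper's Appendix B; your proposal as stated has no mechanism to detect or handle those portions, and the residual "bookkeeping" of non-negativity cannot repair it.
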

  \begin{proof}
	See Appendix \ref{APPEN-B}.
  \end{proof}

  \section{Proposed Polar Coding Schemes}
  \label{S:PolarScheme}
  
  In this section, we describe details of our proposed two types of polar coding schemes for the 2-user DM-IC. We consider the case when $q_{X_1}=|\mathcal{X}_1|$ and $q_{X_2}=|\mathcal{X}_2|$ are two prime numbers, $q_{W_1}=|\mathcal{W}_1|$ is the smallest prime number larger than $q_{X_1}+4$, and $q_{W_2}=|\mathcal{W}_2|$ is the smallest prime number larger than $q_{X_2}+4$. For a rate pair $\mathbf{P}$, let $\mathbf{P}(1)$ and $\mathbf{P}(2)$ respectively denote its first and second component.
  
  \subsection{Common Message Encoding}
  \label{S:CommEnc}
  \subsubsection{Partition Scheme for Type A Points}
  
  Let $\mathbf{P}^c=(R_1^c,R_2^c)$ be the common message rate pair for a Type A point $\mathbf{P}$ on a dominant face of $\mathcal{R}_{HK}(P_1^*)$. Obviously, $\mathbf{P}^c$ must lie on the dominant face of either $\mathcal{R}(P_{Y_1|W_1W_2})$ or $\mathcal{R}(P_{Y_2|W_1W_2})$, otherwise we can choose a larger common message rate pair to achieve higher rates. Without loss of generality, we assume that $\mathbf{P}^c$ is on the dominant face of $\mathcal{R}(P_{Y_1|W_1W_2})$ in this subsection as an example.
  
  First, choose a point $\mathbf{\tilde{P}}^c$ on the dominant face of $\mathcal{R}(P_{Y_2|W_1W_2})$ which is larger than $\mathbf{P}^c$ in the sense that $\mathbf{\tilde{P}}^c(1)\geq \mathbf{P}^c(1)$ and $\mathbf{\tilde{P}}^c(2)\geq \mathbf{P}^c(2)$, as the target point for conducting the monotone chain rule expansion in our code design. Let $S^{1:2N}$ be the monotone chain rule expansion that achieves $\mathbf{P}^c$ in $\mathcal{R}(P_{Y_1|W_1W_2})$, and $T^{1:2N}$ the expansion that achieves $\mathbf{\tilde{P}}^c$ in $\mathcal{R}(P_{Y_2|W_1W_2})$. Denote the sets of indices in $S^{1:2N}$ with $S^j\in U_1^{'1:N}$ and $S^j\in U_2^{'1:N}$ by $\mathcal{S}_{U'_1}$ and $\mathcal{S}_{U'_2}$ respectively, and those in $T^{1:2N}$ with $T^j\in U_1^{'1:N}$ and $T^j\in U_2^{'1:N}$ by $\mathcal{T}_{U'_1}$ and $\mathcal{T}_{U'_2}$ respectively. For $k=1,2$, let $f_k(j):[N]\rightarrow \mathcal{S}_{U'_k}$ be the mapping from indices of $U_k^{'1:N}$ to those of $S^{\mathcal{S}_{U'_k}}$, and $g_k(j):[N]\rightarrow \mathcal{T}_{U'_j}$ the mapping from indices of $U_k^{'1:N}$ to those of $T^{\mathcal{T}_{U'_k}}$. Define the following polarized sets
  \begin{equation}
  \begin{aligned}
  \mathcal{H}^{(N)}_{S_{U'_k}}&\triangleq \big{\{}j\in [N]:H(S^{f_k(j)}|S^{1:f_k(j)-1})\geq \log_2({q_{W_k}})-\delta_N \big{\}},\\
  \mathcal{L}^{(N)}_{S_{U'_k}|Y_1}&\triangleq \big{\{}j\in [N]:H(S^{f_k(j)}|Y_1^{1:N}, S^{1:f_k(j)-1})\leq \delta_N \big{\}},\\
  \mathcal{H}^{(N)}_{T_{U'_k}}&\triangleq \big{\{}j\in [N]:H(T^{g_k(j)}|T^{1:g_k(j)-1})\geq \log_2({q_{W_k}})-\delta_N \big{\}},\\
  \mathcal{L}^{(N)}_{T_{U'_k}|Y_2}&\triangleq \big{\{}j\in [N]:H(T^{g_k(j)}|Y_2^{1:N}, T^{1:g_k(j)-1})\leq \delta_N \big{\}}.
  \end{aligned}
  \end{equation}
  Since two senders' common messages are independent from each other, we have
  \begin{equation*}
  \mathcal{H}^{(N)}_{S_{U'_k}}=\mathcal{H}^{(N)}_{T_{U'_k}}=\mathcal{H}^{(N)}_{W_k},
  \end{equation*}
  where $\mathcal{H}^{(N)}_{W_k}\triangleq \big{\{}j\in [N]:H(U_k^{'j}|U_k^{'1:j-1})\geq \log_2({q_{W_k}})-\delta_N \big{\}}$.
  
  Define the following sets of indices for Sender 1,
  	\begin{equation}
  	\mathcal{C}^1_1\triangleq \mathcal{H}^{(N)}_{S_{U'_1}}\cap \mathcal{L}^{(N)}_{S_{U'_1}|Y_1},~~	\mathcal{C}^2_1\triangleq \mathcal{H}^{(N)}_{T_{U'_1}}\cap \mathcal{L}^{(N)}_{T_{U'_1}|Y_2},
  	\end{equation}
  and similarly define $\mathcal{C}^1_2$ and $\mathcal{C}^2_2$ for Sender 2.  
  From (\ref{MACR}) we have
  \begin{equation}
  \begin{aligned}
  \label{TyI-CR}
  \lim\limits_{N\rightarrow \infty} \frac{1}{N}|\mathcal{C}^1_1| &=\mathbf{P}^c(1),~~
  \lim\limits_{N\rightarrow \infty} \frac{1}{N}|\mathcal{C}^2_1| =\mathbf{\tilde{P}}^c(1)\geq \mathbf{P}^c(1),\\
  \lim\limits_{N\rightarrow \infty} \frac{1}{N}|\mathcal{C}^1_2| &=\mathbf{P}^c(2),~~
  \lim\limits_{N\rightarrow \infty} \frac{1}{N}|\mathcal{C}^2_2| =\mathbf{\tilde{P}}^c(2)\geq \mathbf{P}^c(2).
  \end{aligned}  
  \end{equation}
  Choose an arbitrary subset of $\mathcal{C}_1^2\setminus \mathcal{C}_1^1$, denoted as $\mathcal{C}_1^{21}$, such that $|\mathcal{C}_1^{21}|=|\mathcal{C}_1^1\setminus \mathcal{C}_1^2|$, and an arbitrary subset of $\mathcal{C}_2^2\setminus \mathcal{C}_2^1$, denoted as $\mathcal{C}_2^{21}$, such that $|\mathcal{C}_2^{21}|=|\mathcal{C}_2^1\setminus \mathcal{C}_2^2|$.
  Partition the indices of $U_1^{'1:N}$ as follows:
  \begin{figure}[tb]
  	\centering
  	\includegraphics[width=9cm]{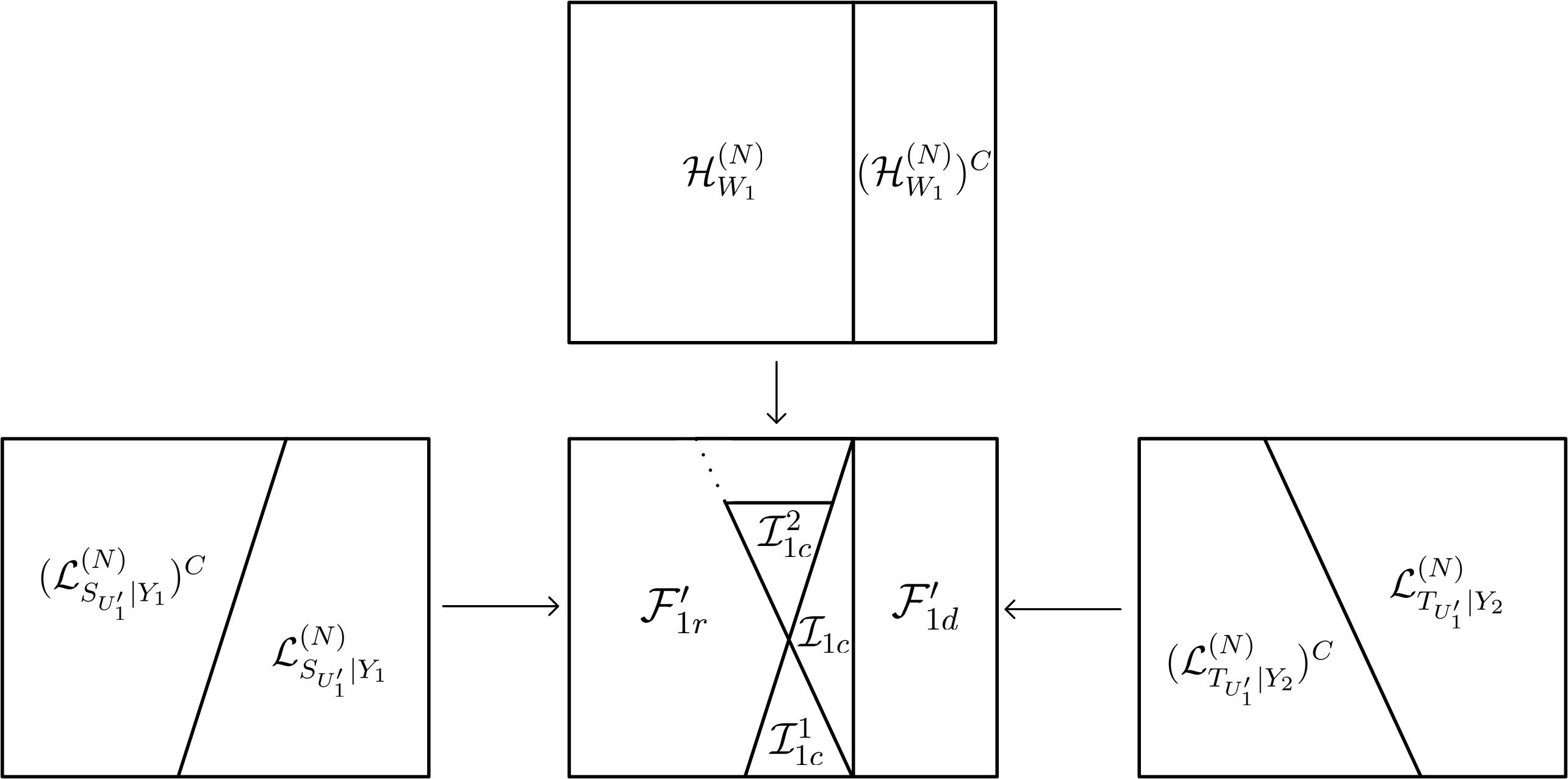}
  	\caption{Graphical representation of the partition for $U_1^{'1:N}$ of Type A points.} \label{fig:codecons-1c}
  \end{figure}
  \begin{equation}
  \begin{aligned}
  \mathcal{I}_{1c}&=\mathcal{C}_1^1\cap\mathcal{C}_1^2, ~~
  \mathcal{I}_{1c}^1=\mathcal{C}_1^1\setminus \mathcal{C}_1^2,~~
  \mathcal{I}_{1c}^2=\mathcal{C}_1^{21},\\
  \mathcal{F}'_{1r}&=\mathcal{H}^{(N)}_{W_1}\setminus (\mathcal{I}_{1c}\cup \mathcal{I}_{1c}^1\cup \mathcal{I}_{1c}^2), ~~
  \mathcal{F}'_{1d}=(\mathcal{H}^{(N)}_{W_1})^C,
  \end{aligned}
  \end{equation}
  as shown in Fig. \ref{fig:codecons-1c}, and similarly define $\mathcal{I}_{2c}$, $\mathcal{I}_{2c}^1$, $\mathcal{I}_{2c}^2$, $\mathcal{F}'_{2r}$ and $\mathcal{F}'_{2d}$ for Sender 2.

  \subsubsection{Partition Scheme for Type B Points}
  
  Let $\mathbf{P}$ be a point of Type B, $\mathbf{P}^c$ be the corresponding common message rate pair, and $\mathbf{P}^1$ and $\mathbf{P}^2$ be Receiver 1's and Receiver 2's rate pairs respectively. Without loss of generality, we consider the case when $\mathbf{P}^c\in \mathcal{R}(P_{Y_2|W_1W_2})\setminus \mathcal{R}(P_{Y_1|W_1W_2})$ and $\mathbf{P}^c(1)\leq I(W_1;Y_1)$ in this subsection as an example. In this case, Receiver 1 applies the Type II partially decoding while Receiver 2 adopts Type I.
  
  Choose $\mathbf{\bar{P}}^1=\big{(}I(X_1W_2;Y_1)-\mathbf{P}^1(2),\mathbf{P}^1(2)\big{)}$, which is on the dominant face of $\mathcal{R}(P_{Y_1|X_1W_2})$ and larger than $\mathbf{P}^1$, and $\mathbf{\tilde{P}}^c=\big{(}I(W_1W_2;Y_2)-\mathbf{P}^c(2),\mathbf{P}^c(2)\big{)}$, which is on the dominant face of $\mathcal{R}(P_{Y_2|W_1W_2})$ and larger than $\mathbf{P}^c$, as the target points for conducting monotone chain rule expansions in our code design. 
  Let $S^{1:2N}$ be the monotone chain rule expansion that achieves $\mathbf{\bar{P}}^1$ in $\mathcal{R}(P_{Y_1|X_1W_2})$, and $T^{1:2N}$ the expansion that achieves $\mathbf{\tilde{P}}^c$ in $\mathcal{R}(P_{Y_2|W_1W_2})$. Denote the sets of indices in $S^{1:2N}$ with $S^j\in U_1^{1:N}$ and $S^j\in U_2^{'1:N}$ by $\mathcal{S}_{U_1}$ and $\mathcal{S}_{U'_2}$ respectively, and those in $T^{1:2N}$ with $T^j\in U_1^{'1:N}$ and $T^j\in U_2^{'1:N}$ by $\mathcal{T}_{U'_1}$ and $\mathcal{T}_{U'_2}$ respectively. Let $f_1(j):[N]\rightarrow \mathcal{S}_{U_1}$ be the mapping from indices of $U_1^{1:N}$ to those of $S^{\mathcal{S}_{U_1}}$, $f_2(j):[N]\rightarrow \mathcal{S}_{U'_2}$ the mapping from indices of $U_2^{'1:N}$ to those of $S^{\mathcal{S}_{U'_2}}$, and $g_k(j):[N]\rightarrow \mathcal{T}_{U'_j}$ the mapping from indices of $U_k^{'1:N}$ to those of $T^{\mathcal{T}_{U'_k}}$ for $k=1,2$. Define $\mathcal{H}^{(N)}_{W_1}$, $\mathcal{H}^{(N)}_{W_2}$, $\mathcal{H}^{(N)}_{S_{U'_2}}$, $\mathcal{H}^{(N)}_{T_{U'_1}}$, $\mathcal{H}^{(N)}_{T_{U'_2}}$, $\mathcal{L}^{(N)}_{S_{U'_2}|Y_1}$, $\mathcal{L}^{(N)}_{T_{U'_1}|Y_2}$ and $\mathcal{L}^{(N)}_{T_{U'_2}|Y_2}$ in the same way as in the Type A case, and additionally define
  \begin{equation}
  \mathcal{L}^{(N)}_{W_1|Y_1}\triangleq \big{\{}j\in [N]:H(U_1^{'j}|Y_1^{1:N},U_1^{'1:j-1})\leq \delta_N \big{\}}.
  \end{equation}
  
  Define the following sets of indices for two senders:
  \begin{equation}
  \begin{aligned}
  &\mathcal{C}'_1\triangleq \mathcal{H}^{(N)}_{W_1}\cap \mathcal{L}^{(N)}_{W_1|Y_1},~~
  \mathcal{C}''_1\triangleq \mathcal{H}^{(N)}_{W_1}\cap \mathcal{L}^{(N)}_{T_{U'_1}|Y_2},\\
  &\mathcal{C}^1_2\triangleq \mathcal{H}^{(N)}_{W_2}\cap \mathcal{L}^{(N)}_{S_{U'_2}|Y_1},~~
  \mathcal{C}^2_2\triangleq \mathcal{H}^{(N)}_{W_2}\cap \mathcal{L}^{(N)}_{T_{U'_2}|Y_2},
  \end{aligned}
  \end{equation}
  which satisfy
  \begin{align*}
  &\lim\limits_{N\rightarrow \infty} \frac{1}{N}|\mathcal{C}'_1| =I(W_1;Y_1)\geq \mathbf{P}^c(1),~~
  \lim\limits_{N\rightarrow \infty} \frac{1}{N}|\mathcal{C}''_1| =\mathbf{\tilde{P}}^c(1)\geq \mathbf{P}^c(1),\\
  &\lim\limits_{N\rightarrow \infty} \frac{1}{N}|\mathcal{C}^1_2| =\mathbf{\bar{P}}^1(2)= \mathbf{P}^c(2),~~
  \lim\limits_{N\rightarrow \infty} \frac{1}{N}|\mathcal{C}^2_2| =\mathbf{\tilde{P}}^c(2)= \mathbf{P}^c(2).
  \end{align*}  
  If $\mathbf{P}^c(1)=I(W_1;Y_1)$, let $\mathcal{C}_1^1=\mathcal{C}'_1$. Otherwise choose a subset $\mathcal{C}_1^1$ of $ \mathcal{C}'_1$ such that $|\mathcal{C}_1^1|=N\mathbf{P}^c(1)$. Similarly, if $\mathbf{\tilde{P}}^c(1)=\mathbf{P}^c(1)$, let $\mathcal{C}_1^2=\mathcal{C}''_1$. Otherwise choose a subset $\mathcal{C}_1^2\subset \mathcal{C}''_1$ such that $|\mathcal{C}_1^2|=N\mathbf{P}^c(1)$. Partition the indices of $U_1^{'1:N}$ as follows:
  \begin{equation}
  \label{TypeII-CP}
  \begin{aligned}
  \mathcal{I}_{1c}&=\mathcal{C}_1^1\cap\mathcal{C}_1^2,~~
  \mathcal{I}_{1c}^1= \mathcal{C}_1^1\setminus \mathcal{C}_1^2, ~~
  \mathcal{I}_{1c}^2= \mathcal{C}_1^2\setminus \mathcal{C}_1^1, \\
  \mathcal{F}'_{1r}&=\mathcal{H}^{(N)}_{W_1}\setminus (\mathcal{I}_{1c}\cup \mathcal{I}_{1c}^1\cup \mathcal{I}_{1c}^2), ~~
  \mathcal{F}'_{1d}=(\mathcal{H}^{(N)}_{W_1})^C,
  \end{aligned}
  \end{equation}
  and similarly define $\mathcal{I}_{2c}$, $\mathcal{I}_{2c}^1$, $\mathcal{I}_{2c}^2$, $\mathcal{F}'_{2r}$ and $\mathcal{F}'_{2d}$ for Sender 2.

  \subsubsection{Chaining Scheme for Common Messages}  
 Suppose the number of chained blocks is $K$. 
  Let $\mathbf{F}_{1c}$,  $\mathbf{F}'_{1c}$ and $\mathbf{F}''_{1c}$ (resp. $\mathbf{F}_{2c}$, $\mathbf{F}'_{2c}$ and $\mathbf{F}''_{2c}$) be three random sequences of length $|\mathcal{F}'_{1r}|$, $|\mathcal{I}^1_{1c}|$ and $|\mathcal{I}^2_{1c}|$ (resp. $|\mathcal{F}'_{2r}|$, $|\mathcal{I}^1_{2c}|$ and $|\mathcal{I}^2_{2c}|$) respectively and uniformly distributed over $\mathcal{W}_1$ (resp. $\mathcal{W}_2$). Sender 1 encodes its common message as follows.
  
  (1) In Block 1, 
  \begin{itemize}
  	\item $\{u_1^{'j}\}_{j\in \mathcal{I}_{1c}\cup \mathcal{I}_{1c}^1}$ store common message symbols.
  	\item $\{u_1^{'j}\}_{j\in \mathcal{F}'_{1r}}=\mathbf{F}_{1c}$.
  	\item $\{u_1^{'j}\}_{j\in \mathcal{I}_{1c}^2}=\mathbf{F}''_{1c}$.
  	\item $\{u_1^{'j}\}_{j\in \mathcal{F}'_{1d}}$ are randomly generated according to conditional probability $P_{U_1^{'j}|U_1^{'1:j-1}}(u^{'j}_1|u_1^{'1:j-1})$.
  \end{itemize}  
  
  (2) In Block $i$ ($1<i<K$), 
  \begin{itemize}
  	\item $\{u_1^{'j}\}_{j\in \mathcal{I}_{1c}\cup \mathcal{I}_{1c}^1}$, $\{u_1^{'j}\}_{j\in \mathcal{F}'_{1r}}$ and $\{u_1^{'j}\}_{j\in \mathcal{F}'_{1d}}$ are determined in the same way as in Block 1.
  	\item $\{u_1^{'j}\}_{j\in \mathcal{I}_{1c}^2}$ are assigned to the same value as $\{u_1^{'j}\}_{j\in \mathcal{I}_{1c}^1}$ in Block $i-1$.
  \end{itemize}
  
  (3) In Block $K$, 
  \begin{itemize}
  	\item $\{u_1^{'j}\}_{j\in \mathcal{I}_{1c}}$, $\{u_1^{'j}\}_{j\in \mathcal{F}'_{1r}}$, $\{u_1^{'j}\}_{j\in \mathcal{I}_{1c}^2}$ and $\{u_1^{'j}\}_{j\in \mathcal{F}'_{1d}}$ are determined in the same way as in Block $i$ ($1<i<K$).
  	\item $\{u_1^{'j}\}_{j\in \mathcal{I}_{1c}^1}=\mathbf{F}'_{1c}$.
  \end{itemize}

  In each block, a vanishing fraction of the almost deterministic symbols, $\{u_1^{'j}\}_{j\in \mathcal{D}_1^1}$ and $\{u_1^{'j}\}_{j\in \mathcal{D}_1^2}$, are separately transmitted to Receiver 1 and 2 respectively with some reliable error-correcting code, where $\mathcal{D}_1^1=(\mathcal{H}_{W_1}^{(N)})^C\cap (\mathcal{L}_{S_{U'_1}|Y_1}^{(N)})^C$ in the Type A case and $\mathcal{D}_1^1=(\mathcal{H}_{W_1}^{(N)})^C\cap (\mathcal{L}_{W_1|Y_1}^{(N)})^C$ in the Type B case, and $\mathcal{D}_1^2=(\mathcal{H}_{W_1}^{(N)})^C\cap (\mathcal{L}_{T_{U'_1}|Y_2}^{(N)})^C$ in both cases. Note that random sequence $\mathbf{F}_{1c}$ is reused over $K$ blocks. Thus, the rate of frozen symbols that need to be shared between Sender 1 and Receiver 1 in the common message encoding, $\frac{1}{KN}(|\mathbf{F}_{1c}|+|\mathbf{F}'_{1c}|+|\mathbf{F}''_{1c}|)$, can be made negligible by increasing $K$. 
  
  Sender 2 encodes its common messages similarly by swapping subscripts 1 and 2.  
  
  \subsection{Private Message Encoding}
  \subsubsection{Partition Scheme for Type A Points}
  
  Define
  \begin{equation}
  \begin{aligned}
  \mathcal{H}^{(N)}_{X_1|W_1W_2}&\triangleq \big{\{}j\in[N]:H(U_1^{j}|U_1^{'1:N},U_2^{'1:N},U_1^{1:j-1})\geq \log_2({q_{X_1}})-\delta_N \big{\}},\\
  \mathcal{L}^{(N)}_{X_1|Y_1W_1W_2}&\triangleq \big{\{}j\in[N]:H(U_1^{j}|Y_1^{1:N},U_1^{'1:N},U_2^{'1:N},U_1^{1:j-1})\leq \delta_N \big{\}},
  \end{aligned}
  \end{equation}
  and similarly define $\mathcal{H}^{(N)}_{X_2|W_1W_2}$ and $\mathcal{L}^{(N)}_{X_2|Y_2W_1W_2}$. Due to the independence between two senders' messages, we have
  \begin{equation}
  \mathcal{H}^{(N)}_{X_1|W_1W_2}=\mathcal{H}^{(N)}_{X_1|W_1},~~
  \mathcal{H}^{(N)}_{X_2|W_1W_2}=\mathcal{H}^{(N)}_{X_2|W_2},
  \end{equation}
  where $\mathcal{H}^{(N)}_{X_k|W_k}\triangleq \big{\{}j\in[N]:H(U_i^{j}|U_k^{'1:N},U_k^{1:j-1})\geq \log_2({q_{X_k}})-\delta_N \big{\}}$ for $k=1,2$.
  Then define the following sets for $U_1^{1:N}$
  \begin{equation}
  \begin{aligned}
  \mathcal{I}_{1p}&\triangleq \mathcal{H}^{(N)}_{X_1|W_1W_2}\cap \mathcal{L}^{(N)}_{X_1|Y_1W_1W_2},\\
  \mathcal{F}_{1r}&=\mathcal{H}^{(N)}_{X_1|W_1W_2}\cap (\mathcal{L}^{(N)}_{X_1|Y_1W_1W_2})^C,\\
  \mathcal{F}_{1d}&=(\mathcal{H}^{(N)}_{X_1|W_1W_2})^C,\\
  \mathcal{D}_1&=(\mathcal{H}_{X_1|W_1W_2}^{(N)})^C\cap (\mathcal{L}_{X_1|Y_1W_1W_2}^{(N)})^C.
  \end{aligned}
  \end{equation}
  For $U_2^{1:N}$, $\mathcal{I}_{2p}$, $\mathcal{F}_{2r}$, $\mathcal{F}_{2d}$ and $\mathcal{D}_2$ are defined similarly.
  
  \subsubsection{Partition Scheme for Type B Points}
  From Definition \ref{def:TYII} we know that
  \begin{align}
  R_1^p&=\mathbf{\bar{P}}^1(1)-I(W_1;Y_1), \label{typeii-rp}\\
  R_2^p&=I(X_2;Y_2|W_1W_2). \label{typeii-rc2}
  \end{align}
  
  Define $\mathcal{H}^{(N)}_{X_1|W_1}$, $\mathcal{H}^{(N)}_{X_2|W_1W_2}$, $\mathcal{H}^{(N)}_{X_2|W_2}$ and $\mathcal{L}^{(N)}_{X_2|Y_2W_1W_2}$ in the same way as in the Type A case, and additionally define
  \begin{equation}
   \mathcal{L}^{(N)}_{S_{U_1}|Y_1W_1}\triangleq \big{\{}j\in [N]:H(S^{f_1(j)}|Y_1^{1:N},U_1^{'1:N}, S^{1:f_1(j)-1})\leq \delta_N \big{\}}.
  \end{equation}
  Then define $\mathcal{I}_{2p}$, $\mathcal{F}_{2r}$, $\mathcal{F}_{2d}$ and $\mathcal{D}_2$ for $U_2^{1:N}$ in the same way as in the Type A case, and define
  \begin{equation}
  \begin{aligned}
  \mathcal{I}_{1p}&= \mathcal{H}^{(N)}_{X_1|W_1}\cap \mathcal{L}^{(N)}_{S_{U_1}|Y_1W_1},\\
  \mathcal{F}_{1r}&=\mathcal{H}^{(N)}_{X_1|W_1}\cap (\mathcal{L}^{(N)}_{S_{U_1}|Y_1W_1})^C,\\
  \mathcal{F}_{1d}&=(\mathcal{H}^{(N)}_{X_1|W_1})^C,\\
  \mathcal{D}_1&=(\mathcal{H}_{X_1|W_1}^{(N)})^C\cap (\mathcal{L}_{S_{U_1}|Y_1W_1}^{(N)})^C,
  \end{aligned}
  \end{equation}
  for $U_1^{1:N}$. Note that the permutation $S^{1:2N}$ is chosen to achieve $\mathbf{\bar{P}}^1$ in Receiver 1's effective channel $P_{Y_1|X_1W_2}$ without the knowledge of $W_1$, but the code construction for $U_1^{1:N}$ is determined jointly by this permutation and the side information of $W_1^{1:N}$.

  \subsubsection{Encoding for Private Messages}

  Let $\mathbf{F}_{1p}$ (resp. $\mathbf{F}_{2p}$) be a random sequence of length $|\mathcal{F}_{1p}|$ (resp. $|\mathcal{F}_{2p}|$) and uniformly distributed over $\mathcal{X}_1$ (resp. $\mathcal{X}_2$).
  Sender 1 encodes its private message in each block as follows.
   
  \begin{itemize}
  	\item $\{u_1^{j}\}_{j\in \mathcal{I}_{1p}}$ store private message symbols.
  	\item $\{u_1^{j}\}_{j\in \mathcal{F}_{1r}}=\mathbf{F}_{1p}$.
  	\item $\{u_1^{j}\}_{j\in \mathcal{F}_{1d}}$ are randomly generated according to probability $P_{U_1^j|U_1^{1:N}U_1^{1:j-1}}(u_1^j|u_1^{'1:N},u_1^{1:j-1})$.
  	\item $\{u_1^{j}\}_{j\in \mathcal{D}_1}$ are separately transmitted to Receiver 1 with some reliable error-correcting code.
  \end{itemize}

  Sender 2 encodes its private message similarly by swapping subscripts 1 and 2.  
  Note that random sequence $\mathbf{F}_{1p}$ and $\mathbf{F}_{2p}$ are reused over $K$ blocks. Thus, the rate of frozen symbols in the private message encoding can also be made negligible by increasing $K$.
  
  \subsection{Decoding}
  \subsubsection{Decoding for Type A Points}
  
  Receiver 1 decodes two senders' common messages from Block 1 to Block $K$. 
  \begin{itemize}
  	\item In Block 1, for $k=1,2$,
  	\begin{equation}
  	\bar{u}_k^{'j}=
  	\begin{cases}
  	u_k^{'j},  &\text{if } j\in (\mathcal{L}^{(N)}_{S_{U'_k}|Y_1})^C \\
  	\arg\max_{u\in\{0,1\}}P_{S^{f_k(j)}|Y_k^{1:N}S^{1:f_k(j)-1}}(u|y_k^{1:N},s^{1:f_k(j)-1}),&\text{if } j\in \mathcal{L}^{(N)}_{S_{U'_k}|Y_1}
  	\end{cases}
  	\end{equation}

  	\item In Block $i$ $(1<i<K)$, $\{\bar{u}_1^{'j}\}_{j\in \mathcal{I}_{1c}^2}$ and $\{\bar{u}_2^{'j}\}_{j\in \mathcal{I}_{2c}^2}$ are deduced from $\{\bar{u}_1^{'j}\}_{j\in \mathcal{I}_{1c}^1}$ and $\{\bar{u}_2^{'j}\}_{j\in \mathcal{I}_{2c}^1}$ in Block $i-1$ respectively, and the rest are decoded in the same way as in Block 1.
  	
  	\item In Block $K$, $\{\bar{u}_1^{'j}\}_{j\in \mathcal{I}_{1c}^1}$ and $\{\bar{u}_2^{'j}\}_{j\in \mathcal{I}_{2c}^1}$ are assigned to the pre-shared value between Sender 1 and the two receivers, and the rest are decoded in the same way as in Block $i$ $(1<i<K)$. 
  \end{itemize}
  
  Having recovered the common messages in a block, Receiver 1 decodes its private message in that block as
  \begin{equation}
  \bar{u}_1^{j}=
  \begin{cases}
  u_1^j,&\text{if } j\in (\mathcal{L}_{X_1|Y_1W_1W_2}^{(N)})^C\\
  \arg\max_{u\in\{0,1\}}P_{U_1^{j}|Y_1^{1:N}U_1^{'1:N}U_2^{'1:N}U_1^{1:j-1}}(u|y_1^{1:N},\bar{u}_1^{'1:N},\bar{u}_2^{'1:N},u_1^{1:j-1}),&\text{if } i\in \mathcal{L}_{X_1|Y_1W_1W_2}^{(N)}
  \end{cases}
  \end{equation}
  
  Receiver 2 decodes similarly, except that it decodes from Block $K$ to Block 1.

  \subsubsection{Decoding for Type B Points}
  
  Receiver 1 decodes from Block 1 to Block $K$.  
  \begin{itemize}
  	\item In Block 1, Sender 1 first decodes its intended common message as
  	\begin{equation}
  	\bar{u}_1^{'j}=
  	\begin{cases}
  	u_1^{'j},&\text{if } j\in (\mathcal{L}_{W_1|Y_1}^{(N)})^C\\
  	\arg\max_{u\in\{0,1\}}P_{U_1^{'j}|Y_1^{1:N}U_1^{'1:j-1}}(u|y_1^{1:N},\bar{u}_1^{'1:j-1}),&\text{if } i\in \mathcal{L}_{W_1|Y_1}^{(N)}
  	\end{cases}
  	\end{equation}
  	Then it decodes its private message and Sender 2's common message jointly as
  	\begin{equation}
  	\bar{u}_1^{j}=
  	\begin{cases}
  	u_1^{j},  &\text{if } j\in (\mathcal{L}^{(N)}_{S_{U_1}|Y_1})^C \\
  	\arg\max_{u\in\{0,1\}}P_{S^{f_1(j)}|Y_1^{1:N}U_1^{'1:N}S^{1:f_1(j)-1}}(u|y_1^{1:N},\bar{u}_1^{'1:N},s^{1:f_1(j)-1}),&\text{if } j\in \mathcal{L}^{(N)}_{S_{U_1}|Y_1}
  	\end{cases}
  	\end{equation}
  	\begin{equation}
  	\bar{u}_2^{'j}=
  	\begin{cases}
  	u_2^{'j},  &\text{if } j\in (\mathcal{L}^{(N)}_{S_{U'_2}|Y_1})^C \\
  	\arg\max_{u\in\{0,1\}}P_{S^{f_2(j)}|Y_1^{1:N}S^{1:f_2(j)-1}}(u|y_1^{1:N},s^{1:f_2(j)-1}),&\text{if } j\in \mathcal{L}^{(N)}_{S_{U'_2}|Y_1}
  	\end{cases}
  	\end{equation}
  	
  	\item In Block $i$ $(1<i<K)$, $\{\bar{u}_1^{'j}\}_{j\in \mathcal{I}_{1c}^2}$ and $\{\bar{u}_2^{'j}\}_{j\in \mathcal{I}_{2c}^2}$ are deduced from $\{\bar{u}_1^{'j}\}_{j\in \mathcal{I}_{1c}^1}$ and $\{\bar{u}_2^{'j}\}_{j\in \mathcal{I}_{2c}^1}$ in Block $i-1$ respectively, and the rest are decoded in the same way as in Block 1.
  	
  	\item In Block $K$, $\{\bar{u}_1^{'j}\}_{j\in \mathcal{I}_{1c}^1}$ and $\{\bar{u}_2^{'j}\}_{j\in \mathcal{I}_{2c}^1}$ are assigned to the pre-shared value between Sender 1 and the two receivers, and the rest are decoded in the same way as in Block $i$ $(1<i<K)$. 
  \end{itemize}
  
  Receiver 2 decodes from Block $K$ to Block 1 in the same way as in the Type A scheme.

  \subsection{Code Construction}
  \label{S:CodeCons}
  As pointed out by a reviewer of this paper, existing efficient construction algorithms (such as \cite{tal2013construct}) for point-to-point polar codes may not be directly applied to the permutation based MAC polar codes in the general case, as the permutation introduces a random variable that involves a complicated relation with the original pair of random variables. Thus, it is currently not clear how much the code construction complexity of the permutation based MAC polar codes is. Nevertheless, as has been shown in \cite{arikan2012sw}, permutations of type $0^i1^N0^{N-i}$ ($0\leq i\leq N$) are sufficient to achieve the whole achievable rate region of a 2-user MAC. In this subsection we show how to construct this kind of MAC polar codes with the approximation method of \cite{tal2013construct}.
  
  Let $W(Y|X_1,X_2)$ be a discrete memoryless MAC with $X_1\in\mathcal{X}_1$ and $X_2\in\mathcal{X}_2$. Define $U_1^{1:N}=X_1^{1:N}\mathbf{G}_N$, $U_2^{1:N}=X_2^{1:N}\mathbf{G}_N$, and $S^{1:2N}=U_1^{1:i_m}U_2^{1:N}U_1^{i_m+1:N}$ ($0\leq i_m\leq N$)\footnote{In Algorithm \ref{Algo-1} we have restricted $i_m$ to be chosen from $[1,N]$ because $i_m-1$ must be non-negative. The case when $i_m=0$ is the same as that of $i_m=1$ except that $u_1^1$ needs to be averaged out in the end.}. In this case, the polarization of user 1's first $i_m$ synthesized channels is the same as that in the equivalent point-to-point channel when user 2's signal is treated as noise, and the polarization of user 1's last $N-i_m$ synthesized channels is the same as that in the equivalent point-to-point channel when user 2's signal is treated as side information. Thus, the method of \cite{tal2013construct} can be directly applied (one can also use the proposed Algorithm 1 by swapping the roles of the two users and considering the special cases of $i_m=0$ and $i_m=N$). For user 2, to apply the method of \cite{tal2013construct}, the recursive channel transformations need to be modified accordingly. Define the following two types of channel transformations for $W$:
  \begin{align*}
  &W\boxplus W(y^{1:2},u_1^2|u_1^1,u_2^1)= \sum_{u_2^2\in \mathcal{X}_2} W(y^1|u_1^1\oplus u_1^2,u_2^1\oplus u_2^2)W(y^2|u_1^2,u_2^2)P(u_1^2)P(u_2^2) \\
  &W\boxtimes W(y^{1:2},u_1^1,u_2^1|u_1^2,u_2^2)= W(y^1|u_1^1\oplus u_1^2,u_2^1\oplus u_2^2)W(y^2|u_1^2,u_2^2) P(u_1^1)P(u_2^1)
  \end{align*}
  Based on the channel degrading and upgrading method of \cite[Algoritm A and B]{tal2013construct}, we propose a constructing method for this type of MAC polar codes as shown in Algorithm \ref{Algo-1}. The $\text{\emph{degrading\_merge}}(W,\mu)$ (resp. $\text{\emph{upgrading\_merge}}(W,\mu)$) operation in this algorithm is to produce a degraded (resp. an upgraded) version of $W$, whose output alphabet size is at most $\mu$ so that it can be estimated at affordable cost. We only present a general picture of how this algorithm goes here. For details about the degrading and upgrading procedures, we refer the readers to \cite{tal2013construct,Tal2012const,Pereg2017upgrade}.
  
  \begin{algorithm}
  	\label{Algo-1}
  	\caption{Channel degrading/upgrading procedure for the MAC} 
  	{\bf Input:} 
  	An underlying MAC $W$, a bound $\mu =2\nu$ on the output alphabet size, a code length $N=2^n$, an index $i_d$ ($0\leq i_d \leq N-1$) with binary representation $i_d=\left\langle a_1a_2...a_n\right\rangle_2$ representing the index of a synthesized channel, and an index $i_m$ ($1\leq i_m \leq N$) with $i_m-1=\left\langle b_1b_2...b_n\right\rangle_2$ representing the permutation type.\\
  	{\bf Output:} 
  	A DMC that is degraded/upgraded with respect to the $(i_d+1)$th synthesized channel of user 2.
  	
  	$Q_W\leftarrow\text{\emph{degrading\_merge}}(W,\mu)/\text{\emph{upgrading\_merge}}(W,\mu)$,	\\
  	$s=0$,\\
  	\For{j=1,2,...,n}{
  		\eIf{$a_j=0$}{
  			$W_a=Q_W\boxplus Q_W$,
  		}{
  			$W_a=Q_W\boxtimes Q_W$,
  		}
  		\If{$b_j=0$}{
  			$i_t=\left\langle b_1b_2...b_{j-1}1\right\rangle_2$,\\
  			\eIf{$i_t\neq \left\langle a_1...a_j\right\rangle_2$}{
  			$W_a\overset{(a)}{=}\sum_{u_1^{i_t+1}\in \mathcal{X}_1}W_a$,
  	    	}{
  	    	  $s=1$;
  	        }
  		}
  	\If{$s=1$}{
  		$i_s=\left\langle a_1...a_{j-1}\bar{a}_j\right\rangle_2$, where $\bar{a}_j=a_j\oplus1$\\
  		$W_a\overset{(b)}{=}\sum_{u_1^{i_s+1}\in \mathcal{X}_1}W_a$,\\
  		$s=0$
  	    }
  		$Q_W\leftarrow\text{\emph{degrading\_merge}}(W_a,\mu)/\text{\emph{upgrading\_merge}}(W_a,\mu)$,\\
  	}  
  	\eIf{$i_d+1 > i_m$}{
  		$Q_{i_d+1}(\mathbf{\tilde{y}}|u_2^{i_d+1})=\sum_{u_1^{i_d+1}\in \mathcal{X}_1}Q_W(\mathbf{\tilde{y}}|u_1^{i_d+1},u_2^{i_d+1})P(u_1^{i_d+1})$
  	}{
  		$Q_{i_d+1}(\mathbf{\tilde{y}}|u_2^{i_d+1})=Q_W(\mathbf{\tilde{y}}|u_1^{i_d+1},u_2^{i_d+1})P(u_1^{i_d+1})$
  	}
  	\Return $Q_{i_d+1}$
  \end{algorithm}
  
  The idea of Algorithm \ref{Algo-1} is to first approximate the synthetic MACs in the recursive process and then synthesize the desired DMC in the end. In this algorithm, the purpose of (a) is to reduce the subsequent computations since $u_1^{i_p}$ with $i_p=\left\langle b_1...b_{j-1}1b'_{j+1}...b'_n\right\rangle_2$ will not be shown in the channel outputs of user 2's synthesized channels if $b_j=0$, where $b'_{j+1}...b'_n$ is any binary sequence of length $n-j$. (b) is to handle a special case when $b_j=0$ but $u_1^{i_t+1}$ happens to be a channel input. In this case we will have to eliminate the redundant channel output $u_1^{i_s+1}$ in the next stage. Note that the procedure of (b) only needs to be executed once at most. $\mathbf{\tilde{y}}$ denotes the output of channel $Q_W$.  Although Algorithm \ref{Algo-1} has some extra computations compared to \cite[Algoritm A and B]{tal2013construct}, the time complexity to evaluate all $N$ synthesized channels can still be reduced to $O(N)$ by sharing intermediate calculations between different synthesized channels. 
  
  Table \ref{example:construct} shows an example of the recursive process when $n=5$, $i_d=13$ (i.e., $\left\langle a_1a_2...a_n\right\rangle_2=\left\langle 01101\right\rangle_2$), $i_m=7$ (i.e., $\left\langle b_1b_2...b_n\right\rangle_2=\left\langle 00110\right\rangle_2$), in which we have ignored the channel degrading/upgrading procedure and only demonstrated the evolvement of the synthesized channels. In this example, $s=1$ is triggered in the $j=2$ stage, so in the next stage $u_1^3$ ($i_s=\left\langle 010\right\rangle_2$) is averaged out.
  
  \begin{table}[htbp]
  	\centering
  	\caption{$n=5$, $i_d=13$ ($\left\langle a_1a_2...a_n\right\rangle_2=\left\langle 01101\right\rangle_2$), $i_m=7$ ($\left\langle b_1b_2...b_n\right\rangle_2=\left\langle 00110\right\rangle_2$)}
  	\label{example:construct}
  	\begin{tabular}{|c|c|c|c|c|c|}
  		\hline
  		$j$& $a_j$ & $b_j$ & $i_t+1$ & $i_s+1$ & $W_a$\\
  		\hline
  		1& 0 & 0 & 2 & $\setminus$ & $W(y^{1:2}|u_1^1,u_2^1)$ \\
  		2& 1 & 0 & 2 & $\setminus$ & $W(y^{1:4},u_1^1,u_2^1|u_1^2,u_2^2)$ \\
  		3& 1 & 1 & $\setminus$ & 3 & $W(y^{1:8},u_1^{1:2},u_2^{1:3}|u_1^4,u_2^4)$\\
  		4& 0 & 1 & $\setminus$ & $\setminus$ & $W(y^{1:16},u_1^{1:4},u_2^{1:6}|u_1^7,u_2^7)$\\
  		5& 1 & 0 & 8 & $\setminus$ & $W(y^{1:32},u_1^{1:7},u_2^{1:13}|u_1^{14},u_2^{14})$ \\
  		\hline
  	\end{tabular}
  \end{table}

  \section{Performance Analysis}
  \label{S:PA}
  
  \subsection{Achievable Rates}
  \subsubsection{Type A Scheme}
  In the Type A scheme, the common message rates of the two senders in this scheme are
  \begin{equation}
  \begin{aligned}
  R_1^c&=\frac{K|\mathcal{I}_{1c}|+(K-1)|\mathcal{I}_{1c}^1|}{KN}=\frac{|\mathcal{C}_1^1|}{N}-\frac{|\mathcal{I}_{1c}^1|}{KN},\\
  R_2^c&=\frac{K|\mathcal{I}_{2c}|+(K-1)|\mathcal{I}_{2c}^1|}{KN}=\frac{|\mathcal{C}_2^1|}{N}-\frac{|\mathcal{I}_{2c}^1|}{KN}.
  \end{aligned}
  \end{equation}
  From (\ref{TyI-CR}) we have
  \begin{equation}
  \lim\limits_{N\rightarrow \infty,K\rightarrow \infty} R_1^c=\mathbf{P}^c(1),~~
  \lim\limits_{N\rightarrow \infty,K\rightarrow \infty} R_2^c=\mathbf{P}^c(2).  \label{TyI-Rc}
  \end{equation}
  The private message rates of the two senders are
  \begin{equation}
  R_1^p=\frac{1}{N}|\mathcal{I}_{1p}|,~~
  R_2^p=\frac{1}{N}|\mathcal{I}_{2p}|.
  \end{equation}
  Since the private message encoding is just standard point-to-point polar coding, we have
  \begin{equation}
  \lim\limits_{N\rightarrow \infty} R_1^p=I(X_1;Y_1|W_1W_2),~~
  \lim\limits_{N\rightarrow \infty} R_2^p=I(X_2;Y_2|W_1W_2). \label{TyI-Rp}
  \end{equation}
  Thus, our proposed scheme achieves the target Type A point $\mathbf{P}$.
  
  \subsubsection{Type B Scheme}
  In the Type B scheme, the common message rates can also be written as
  \begin{equation}
  R_1^c=\frac{|\mathcal{C}_1^1|}{N}-\frac{|\mathcal{I}_{1c}^1|}{KN},~~
  R_2^c=\frac{|\mathcal{C}_2^1|}{N}-\frac{|\mathcal{I}_{2c}^1|}{KN},
  \end{equation}
  with
  \begin{equation*}
  \lim\limits_{N\rightarrow \infty,K\rightarrow \infty} R_1^c=\mathbf{P}^c(1),~~
  \lim\limits_{N\rightarrow \infty,K\rightarrow \infty} R_2^c=\mathbf{P}^c(2).
  \end{equation*}
  Same as in the Type A case, the private message rate of Sender 2 achieves (\ref{typeii-rc2}). For Sender 1's private message rate, the following lemma shows that our proposed scheme achieves (\ref{typeii-rp}).  
  \begin{lemma}
  	\label{lemma.I1p}
  	$\lim\limits_{N\rightarrow\infty}\frac{1}{N}|\mathcal{I}_{1p}|=\mathbf{\bar{P}}^1(1)-I(W_1;Y_1)$.
  \end{lemma}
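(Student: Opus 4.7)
The plan is to compute $|\mathcal{I}_{1p}|/N$ by reducing it to polarization-rate quantities and then simplifying via the Markov/independence structure of $P_1^*$. I would first observe that, by the data-processing inequality, $(\mathcal{H}^{(N)}_{X_1|W_1})^C\subseteq\mathcal{L}^{(N)}_{S_{U_1}|Y_1W_1}$ up to a vanishing ``middle-entropy'' set: if $H(U_1^j|W_1^{1:N},U_1^{1:j-1})\le\delta_N$, then since $S^{1:f_1(j)-1}$ contains $U_1^{1:j-1}$ and the extra conditioning on $Y_1^{1:N}$ only decreases entropy, also $H(S^{f_1(j)}|Y_1^{1:N},W_1^{1:N},S^{1:f_1(j)-1})\le\delta_N$. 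This gives the decomposition $|\mathcal{I}_{1p}|=|\mathcal{L}^{(N)}_{S_{U_1}|Y_1W_1}|-|(\mathcal{H}^{(N)}_{X_1|W_1})^C|+o(N)$.

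Next, I would apply source polarization to obtain $|\mathcal{H}^{(N)}_{X_1|W_1}|/N\to H_{q_{X_1}}(X_1|W_1)$, and extend Proposition~\ref{proposition:MAC-2} to incorporate $W_1^{1:N}$ as decoder-side information, yielding
\[
\frac{|\mathcal{L}^{(N)}_{S_{U_1}|Y_1W_1}|}{N}\to 1 - \frac{1}{N}\sum_{j\in\mathcal{S}_{U_1}}H_{q_{X_1}}\!\bigl(S^{f_1(j)}\bigm|Y_1^{1:N},W_1^{1:N},S^{1:f_1(j)-1}\bigr).
\]
For the specific permutation $\mathbf{b}_{2N}=0^{i_m}1^N0^{N-i_m}$ employed in Section~\ref{S:CommEnc}, I would split this sum at position $i_m$ and use the chain rule to rewrite it as
\[
H(U_1^{1:i_m}|Y_1^{1:N},W_1^{1:N}) + H(U_1^{i_m+1:N}|Y_1^{1:N},W_1^{1:N},W_2^{1:N},U_1^{1:i_m}).
\]
I would then compare this expression with the analogous $W_1$-free sum whose normalized limit equals $H(X_1)-\bar{\mathbf{P}}^1(1)$ by Proposition~\ref{proposition:MAC-1}, and exploit the Markov chain $W_1-X_1-Y_1$ implied by $P_1^*$ (giving $I(X_1;W_1|Y_1)=I(X_1;W_1)-I(W_1;Y_1)$) together with the conditional independence $W_1\perp W_2\mid X_1$ to reduce the difference to the target value $\bar{\mathbf{P}}^1(1)-I(W_1;Y_1)$.

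The main obstacle is this final algebraic reconciliation: the two chain-rule expansions (with and without $W_1^{1:N}$ in the conditioning) differ not only by the desired $I(W_1^{1:N};X_1^{1:N}|Y_1^{1:N})/N = I(X_1;W_1|Y_1)$ but also by an auxiliary quantity of the form $I(W_2^{1:N};U_1^{i_m+1:N}|Y_1^{1:N},W_1^{1:N},U_1^{1:i_m})$ compared with its $W_1$-free counterpart. Showing that these auxiliary terms combine cleanly with the Markov identity to leave exactly an $I(W_1;Y_1)$ shift, independent of the particular choice of $i_m$, is the technical core of the argument; it requires careful exploitation of the iid structure across channel uses and the full set of conditional-independence relations baked into the factorization of $P_1^*$.
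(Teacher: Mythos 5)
Your reduction of $|\mathcal{I}_{1p}|/N$ to polarization-rate quantities is sound and is essentially the first half of the paper's own argument (Appendix~\ref{APPEN-C}): the identity $|\mathcal{I}_{1p}|=|\mathcal{L}^{(N)}_{S_{U_1}|Y_1W_1}|-|(\mathcal{H}^{(N)}_{X_1|W_1})^C|+o(N)$, the limits $\frac1N|\mathcal{H}^{(N)}_{X_1|W_1}|\to H_{q_{X_1}}(X_1|W_1)$ and $\frac1N|\mathcal{L}^{(N)}_{S_{U_1}|Y_1W_1}|\to 1-\frac1N\sum_{j\in\mathcal{S}_{U_1}}H(S^j|Y_1^{1:N},W_1^{1:N},S^{1:j-1})$, and the split of that sum at $i_m$ all correspond to what the paper does (it phrases the same reduction as $\frac1N|\mathcal{I}_{1p}|\geq\frac1N|\mathcal{H}^{(N)}_{X_1|W_1}|-\frac1N|\mathcal{H}^{(N)}_{S_{U_1}|Y_1W_1}|-\frac1N|\mathcal{B}^{(N)}_{S_{U_1}|Y_1W_1}|$, with the unpolarized set $\mathcal{B}^{(N)}_{S_{U_1}|Y_1W_1}$ vanishing by a cited lemma). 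The problem is that you stop exactly where the lemma has to be proved: you never evaluate $\lim_{N\to\infty}\frac1N\sum_{j\in\mathcal{S}_{U_1}}H(S^j|Y_1^{1:N},W_1^{1:N},S^{1:j-1})$, and you explicitly defer the reconciliation of the conditioned and unconditioned expansions as an unresolved ``technical core''. Since the entire content of the lemma is that this limit equals $H_{q_{X_1}}(X_1|W_1)-\mathbf{\bar{P}}^1(1)+I(W_1;Y_1)$, the proposal has a genuine gap rather than a finished proof.

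For comparison, the paper closes this step with a different chain-rule organization: it writes $\sum_{j\in\mathcal{S}_{U_1}}H(S^j|Y_1^{1:N},W_1^{1:N},S^{1:j-1})=H(S^{1:2N}|Y_1^{1:N},W_1^{1:N})-\sum_{j\in\mathcal{S}_{U'_2}}H(S^j|Y_1^{1:N},W_1^{1:N},S^{1:j-1})$, converts the conditioning on $W_1^{1:N}$ into an explicit $-H_{q_{X_1}}(W_1|Y_1)$ term using $H(W_1^{1:N}|Y_1^{1:N},S^{1:2N})=0$ and, in the last line, $H_{q_{X_1}}(X_1W_1)=H_{q_{X_1}}(X_1)$ --- i.e., it exploits that $W_1$ is recoverable from $X_1$, an ingredient absent from your plan; the Markov chain $W_1-X_1-Y_1$ and $W_1\perp W_2\mid X_1$ alone do not suffice, and without $H(W_1|X_1)=0$ your target expression would pick up an extra $H(W_1|X_1)$. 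It then identifies $\frac1N\sum_{j\in\mathcal{S}_{U'_2}}H(S^j|Y_1^{1:N},S^{1:j-1})\to H_{q_{X_1}}(W_2)-\mathbf{\bar{P}}^1(2)$ with $\mathbf{\bar{P}}^1(2)=I(X_1W_2;Y_1)-\mathbf{\bar{P}}^1(1)$, via Proposition~\ref{proposition:MAC-1} applied to the same monotone expansion; this is what removes the dependence on $i_m$ and produces the $-I(W_1;Y_1)$ shift. The residual term you worry about is real --- it is precisely $\frac1N I(W_1^{1:N};W_2^{1:N}|Y_1^{1:N},U_1^{1:i_m})$, and the paper's own write-up drops the $W_1^{1:N}$ conditioning in the $\mathcal{S}_{U'_2}$-sum without explicit comment --- so you have located the crux correctly, but locating it is not resolving it, and your proposal supplies no argument that this term vanishes.
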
 
  \begin{proof}
  	See Appendix \ref{APPEN-C}.  
  \end{proof}
  
  \subsection{Total Variation Distance}
  Let $P_U(u)$ denote the target distribution of random variable $U$, $Q_U(u)$ denote the induced distribution of $U$ by our encoding scheme, and $\parallel P-Q \parallel$ denote the total variation distance between distributions $P$ and $Q$. We have the following lemma.
  \begin{lemma}
  	\label{lemma.TVD}
  	For $i\in[1,K]$,
  	\begin{align}
  	&~~\parallel P_{W_1^{1:N}W_2^{1:N}X_1^{1:N}X_2^{1:N}Y_1^{1:N}Y_2^{1:N}}-Q_{(W_1^{1:N}W_2^{1:N}X_1^{1:N}X_2^{1:N}Y_1^{1:N}Y_2^{1:N})_i}\parallel \leq 4\sqrt{\log 2}\sqrt{N\delta_N}, \label{VD}
  	\end{align}
  	where $(\cdot)_i$ stands for random variables in Block $i$ ($1\leq i\leq K$).
  \end{lemma}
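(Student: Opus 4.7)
The plan is to follow the standard route for bounding total variation in Chou--Bloch-style asymmetric polar coding schemes: reduce to KL divergence on the polar transform domain, use polarization to bound that divergence, and then invoke Pinsker's inequality. First I would apply the data processing inequality to the memoryless channel $P_{Y_1Y_2|X_1X_2}$ (which is the same under $P$ and $Q$) to reduce the target TVD to one on the inputs $(W_1^{1:N},W_2^{1:N},X_1^{1:N},X_2^{1:N})$. Since the two senders encode independently and $W_1\perp W_2$ under both $P$ and $Q$, the joint distribution factorizes across senders, so by the triangle inequality it suffices to bound the per-sender TVD on $(W_k^{1:N},X_k^{1:N})$ for $k=1,2$. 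A second application of data processing, using the fact that $W_k^{1:N}$ and $X_k^{1:N}$ are deterministic images of $U_k^{'1:N}$ and $U_k^{1:N}$ under $\mathbf{G}_N$, further reduces the problem to comparing $P$ and $Q$ on the polar domain.

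Next I would identify precisely where $P$ and $Q$ differ. Under $P$, every coordinate $U_k^{'j}$ is drawn from the true conditional $P_{U_k^{'j}|U_k^{'1:j-1}}$; under $Q$, coordinates indexed by $\mathcal{F}'_{kd}=(\mathcal{H}^{(N)}_{W_k})^C$ are drawn from the same conditional, while coordinates in $\mathcal{H}^{(N)}_{W_k}$ (whether carrying message, frozen, or chained symbols) are uniform on $\mathcal{W}_k$. An identical split holds for $U_k^{1:N}$ at $\mathcal{H}^{(N)}_{X_k|W_k}$. Applying the chain rule for relative entropy, the terms for $j\notin\mathcal{H}$ vanish, while for $j\in\mathcal{H}^{(N)}_{W_k}$ one has
\begin{equation*}
D\bigl(P_{U_k^{'j}|U_k^{'1:j-1}}\,\bigl\Vert\, \mathrm{Unif}(\mathcal{W}_k)\bigr)
= \log_2 q_{W_k}-H(U_k^{'j}|U_k^{'1:j-1})\le \delta_N,
\end{equation*}
by the definition of the high-entropy set, and similarly for the $U_k$ indices. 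Summing yields $D(P\Vert Q)\le N\delta_N$ for each of the four components (common and private encodings of the two senders). Pinsker's inequality then converts each per-component bound into a TVD of order $\sqrt{N\delta_N}$, and combining the four contributions via the triangle inequality on joint distributions gives the claimed $4\sqrt{\log 2}\sqrt{N\delta_N}$.

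The main obstacle, and the step that requires the most care, is verifying that the chaining scheme does not spoil the ``uniform on $\mathcal{H}$'' property within a single block. In blocks $i>1$, the symbols $\{u_k^{'j}\}_{j\in\mathcal{I}_{kc}^2}$ are not independently resampled but copied from $\{u_k^{'j}\}_{j\in\mathcal{I}_{kc}^1}$ of block $i-1$; however, since the latter are either uniform message symbols or uniform frozen symbols $\mathbf{F}'_{kc}$, the marginal of $U_k^{'\mathcal{H}^{(N)}_{W_k}}$ restricted to a single block remains uniform, so the KL decomposition above is valid block-by-block (which is precisely what the ``for $i\in[1,K]$'' quantifier asks for). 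A secondary subtlety arises in the Type B scheme, where $\mathcal{I}_{1p}$ is defined relative to the permuted sequence $S^{1:2N}$ rather than the natural polar indexing of $U_1^{1:N}$; however, the high-entropy set governing the KL bound is still $\mathcal{H}^{(N)}_{X_1|W_1}$, which is independent of the permutation, so the same argument applies. Once these points are checked, the bound follows directly from Pinsker's inequality together with the polarization rate estimates already established.
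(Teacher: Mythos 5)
Your proposal follows essentially the same route as the paper's Appendix D: chain rule for KL divergence in the polar domain, vanishing terms off the high-entropy sets and a $\delta_N$ bound on them (since the scheme places uniform symbols exactly on $\mathcal{H}^{(N)}_{W_k}$ and $\mathcal{H}^{(N)}_{X_k|W_k}$), Pinsker's inequality, a triangle inequality across the two independent senders, and the observation that appending $Y_1^{1:N}Y_2^{1:N}$ costs nothing because the channel is identical under $P$ and $Q$; your remarks on the chaining symbols remaining uniform within a block and on the Type B indexing are also on point. The only slip is in the final assembly of the constant: applying Pinsker separately to each of the four components (each with KL at most $N\delta_N$) and then summing four total-variation terms yields $4\sqrt{2\log 2}\sqrt{N\delta_N}$, which is weaker than the stated $4\sqrt{\log 2}\sqrt{N\delta_N}$; to recover the claimed constant you should, as the paper does, first add the common and private KL contributions of each sender via the chain rule (giving $2N\delta_N$ per sender), apply Pinsker once per sender (giving $2\sqrt{\log 2}\sqrt{N\delta_N}$ each), and only then use the triangle inequality together with the product structure across the two senders.
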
 
  \begin{proof}
  	See Appendix \ref{APPEN-D}.  
  \end{proof}

  \subsection{Error Performance}
  \label{S:ErrorPerf}
  \begin{lemma}
  	\label{lemma.EP}
  	The error probability of a receiver with the Type I partially-joint decoding in the overall $K$ blocks can be upper bounded by
  	\begin{equation}
  	P_{e}^I\leq \frac{(K+1)(K+2)}{2}N\delta_N+2K(K+1)\sqrt{\log 2}\sqrt{N\delta_N},
  	\end{equation}
  	while error probability of a receiver with the Type II partially-joint decoding in the overall $K$ blocks can be upper bounded by
  	\begin{equation}
  	P_{e}^{II}\leq \frac{K(K+1)(K+5)}{6}N\delta_N+\frac{2K(K^2+6K-1)}{3}\sqrt{\log 2}\sqrt{N\delta_N}.
  	\end{equation}
  \end{lemma}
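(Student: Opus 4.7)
The plan is to bound the $K$-block error probability of each receiver by combining three ingredients: (i) a per-block, per-stage SCD error bound under the idealized joint distribution $P$; (ii) Lemma~\ref{lemma.TVD} to translate these bounds from $P$ to the actual induced distribution $Q$; and (iii) a union bound across the $K$ chained blocks that propagates errors through the chaining positions $\mathcal{I}_{1c}^2, \mathcal{I}_{2c}^2$ linking consecutive blocks.

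First, I would fix a receiver and a block index $i$ and decompose the within-block error event as a union over the polar indices in the relevant low-entropy sets. For Type~I at Receiver~1 these are the two MAC-expansion sets $\mathcal{L}^{(N)}_{S_{U_1'}|Y_1}$ and $\mathcal{L}^{(N)}_{S_{U_2'}|Y_1}$ for the joint common-message decoder, followed by $\mathcal{L}^{(N)}_{X_1|Y_1W_1W_2}$ for the private stage; for Type~II at Receiver~1 they are $\mathcal{L}^{(N)}_{W_1|Y_1}$ for the intended common message followed by the MAC pair $\mathcal{L}^{(N)}_{S_{U_1}|Y_1W_1}, \mathcal{L}^{(N)}_{S_{U_2'}|Y_1}$ for the joint private/unintended-common stage. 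Since every index in each of these sets has conditional entropy at most $\delta_N$ (in the appropriate $q$-based log), the standard polar-code argument (conditional-entropy bound plus union bound over indices) yields a per-stage SCD error of order $N\delta_N$ under $P$. Lemma~\ref{lemma.TVD} then contributes an additive $\|P - Q_i\| \le 4\sqrt{\log 2}\sqrt{N\delta_N}$ to pass from $P$ to the actual induced distribution.

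Second, I would chain across blocks. Letting $E_i$ denote the event that block $i$ is decoded incorrectly, the error probability is peeled off one block at a time. Conditioning on blocks $1,\dots,i-1$ being correctly decoded makes the chained frozen positions $\mathcal{I}_{1c}^2, \mathcal{I}_{2c}^2$ in block $i$ coincide with their true values, so the within-block SCD analysis of step~one applies. Because the induced joint distribution of the first $i$ blocks differs from the $i$-fold product of $P$ by at most $i\,\|P - Q\|$ (subadditivity of TV across blocks), the accumulated TV gap up to block $i$ is at most $4i\sqrt{\log 2}\sqrt{N\delta_N}$. Summing the per-block ideal error and the accumulated TV gap over $i = 1, \dots, K$ produces polynomial-in-$K$ expressions; in particular the TV contribution $\sum_{i=1}^{K} 4i\sqrt{\log 2}\sqrt{N\delta_N} = 2K(K+1)\sqrt{\log 2}\sqrt{N\delta_N}$ matches the Type~I bound exactly, while the $N\delta_N$ contribution, after accounting for the fact that a fresh error in block $i$ also propagates forward through the remaining $K-i$ blocks via the chain, yields the quadratic coefficient $\frac{(K+1)(K+2)}{2}$.

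The main obstacle is the arithmetic bookkeeping of the chaining, especially for Type~II where Receiver~1 has an extra SCD stage and Receivers~1 and~2 decode in opposite directions, so the error-propagation chains are interleaved and the tally acquires one more factor of $i$ when summed. A clean way to organize the proof is to set up a recursion $p_i \le p_{i-1} + \text{(new per-block SCD errors)} + \text{(new per-block TV gap)}$, where the SCD-error increment depends linearly on $i$ (through chain propagation) and the number of per-block decoding stages (two for Type~I, three for Type~II), and then solve the recursion by induction up to $i = K$. The closed form of this induction reproduces exactly $\frac{(K+1)(K+2)}{2}N\delta_N + 2K(K+1)\sqrt{\log 2}\sqrt{N\delta_N}$ for Type~I and $\frac{K(K+1)(K+5)}{6}N\delta_N + \frac{2K(K^2+6K-1)}{3}\sqrt{\log 2}\sqrt{N\delta_N}$ for Type~II.
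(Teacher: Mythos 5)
Your overall route is the same as the paper's: a per-block, per-stage source-polar-coding error of order $N\delta_N$, a per-block total-variation penalty $4\sqrt{\log 2}\sqrt{N\delta_N}$ obtained from Lemma \ref{lemma.TVD} via optimal coupling, and a recursion over the $K$ chained blocks of the form $p_i\leq p_{i-1}+(\text{per-block SCD error})+(\text{per-block TV penalty})$, solved by induction and summed over $i=1,\dots,K$; the final tallies you quote match the paper's bounds. Two of your intermediate justifications, however, need repair. First, the claim that the joint law of the first $i$ blocks is within $i\,\|P-Q\|$ of the $i$-fold product of $P$ ("subadditivity of TV across blocks") is not a valid step: Lemma \ref{lemma.TVD} only controls single-block marginals, and closeness of the marginals does not bound the distance of the joint law (the blocks are dependent through the chained symbols and the reused frozen sequences). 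The paper never needs such a statement: inside the recursion, the unconditional error of block $i-1$ is split as $P[\mathcal{E}_{\cdot,i-1}\mid\mathcal{E}_{1,i-1}^C]+P[\mathcal{E}_{1,i-1}]$, so exactly one per-block coupling term is paid at each recursion step, and this is what produces the $4i\sqrt{\log 2}\sqrt{N\delta_N}$ accumulation you want.

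Second, Type II does not involve ``an extra SCD stage'': both decoders have exactly two stages. The cubic growth for Type II comes from the fact that both stages involve chaining (the chained unintended common message $W_2$ is decoded in the second stage), so the second-stage recursion inherits the linearly growing first-stage error of the previous block, yielding a per-block bound of the form $\tfrac{(i+2)(i-1)}{2}\big(N\delta_N+4\sqrt{\log 2}\sqrt{N\delta_N}\big)+N\delta_N$; in Type I the private stage involves no chaining and contributes only a flat $N\delta_N$ per block. (The opposite decoding directions of the two receivers are irrelevant here, since the lemma bounds each receiver separately.) With these two corrections, your recursion-and-induction plan reproduces the stated bounds essentially as in the paper.
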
 
  \begin{proof}
  	See Appendix \ref{APPEN-E}.  
  \end{proof}

  We can see that the chaining scheme has a more detrimental effect on the Type II decoding than on the Type I one. This is because in the Type I decoding, only the common message decoding stage involves chaining, while in the Type II decoding, both stages of decoding involve chaining.
  
  \subsection{Complexity}
  Since both our scheme and the scheme of \cite{wang2015channel} use the monotone chain rule based MAC polar codes, their encoding and decoding complexities are similar. As we have discussed in Section \ref{S:CodeCons}, our proposed polar codes can be constructed with complexity $O(N)$. Note that it is not clear whether the 3-user MAC polar codes used in \cite{wang2015channel} can also be constructed in a similar way. Therefore the construction complexity of our scheme is smaller than that in \cite{wang2015channel} (at least equal if the permutation based $m$-user MAC polar codes can be constructed at complexity $O(mN)$). 
   
  In the rest of this subsection, we discuss another simplification of our proposed scheme compared to \cite{wang2015channel}, i.e., the complexity reduction in the design of ARVs.  
  The Han-Kobayashi region is expressed with ARVs. Finding suitable ARVs to achieve a target rate pair is in fact part of the code design, since unlike the channel statistics which are given, the ARVs need to be designed and optimized. 
  Consider a 2-user DM-IC $P_{Y_1Y_2|X_1X_2}(y_1,y_2|x_1,x_2)$ and fixed ARV alphabets $\mathcal{W}_1$, $\mathcal{W}_2$, $\mathcal{V}_1$ and $\mathcal{V}_2$.
  Denote $\mathcal{P}_{W_1}$, $\mathcal{P}_{W_2}$, $\mathcal{P}_{V_1}$ and $\mathcal{P}_{V_2}$ as the sets of distributions $P_{W_1}$, $P_{W_2}$, $P_{V_1}$ and $P_{V_2}$, respectively, $\mathcal{P}_{X_1|W_1V_1}$ and $\mathcal{P}_{X_2|W_2V_2}$ the sets of deterministic mappings $P_{X_1|W_1V_1}$ and $P_{X_2|W_2V_2}$, respectively, and $\mathcal{P}_{X_1W_1}$ and $\mathcal{P}_{X_2W_2}$ the sets of joint distributions $P_{X_1W_1}$ and $P_{X_2W_2}$, respectively. 
  Since $P_{X_1|W_1V_1}$ and $P_{X_2|W_2V_2}$ equal either 0 or 1, it is easy to see that
  \begin{align*}
  |\mathcal{P}_{X_1|W_1V_1}|=2^{|\mathcal{W}_1|\cdot|\mathcal{V}_1|\cdot|\mathcal{X}_1|},~~ 
  |\mathcal{P}_{X_2|W_2V_2}|=2^{|\mathcal{W}_2|\cdot|\mathcal{V}_2|\cdot|\mathcal{X}_2|}.
  \end{align*}
  
  It is impractical to evaluate all the distributions in the aforementioned sets, thus certain quantization is needed. As an example, we assume that probabilities can only be chosen from a quantized subset of $[0,1]$, say $\mathcal{P}_Q$, and define $\mathcal{P}^*_{W_1}$, $\mathcal{P}^*_{W_2}$, $\mathcal{P}^*_{V_1}$, $\mathcal{P}^*_{V_2}$, $\mathcal{P}^*_{X_1W_1}$ and $\mathcal{P}^*_{X_2W_2}$ respectively as the subsets of $\mathcal{P}_{W_1}$, $\mathcal{P}_{W_2}$, $\mathcal{P}_{V_1}$, $\mathcal{P}_{V_2}$, $\mathcal{P}_{X_1W_1}$ and $\mathcal{P}_{X_2W_2}$ when the probabilities are restricted to be chosen from $\mathcal{P}_Q$.  
  To evaluate the original Han-Kobayashi region, the number of calculations for the region of $\mathcal{R}_{HK}^{o}(P^*)$ defined in Theorem \ref{theorem:HK-o} is $2^{|\mathcal{W}_1|\cdot|\mathcal{V}_1|\cdot|\mathcal{X}_1|+|\mathcal{W}_2|\cdot|\mathcal{V}_2|\cdot|\mathcal{X}_2|}\cdot|\mathcal{P}^*_{W_1}|\cdot|\mathcal{P}^*_{W_2}|\cdot|\mathcal{P}^*_{V_1}|\cdot|\mathcal{P}^*_{V_2}|$.
  Meanwhile, to evaluate the compact Han-Kobayashi region, the number of calculations for the region of $\mathcal{R}_{HK}(P_1^*)$ defined in Theorem \ref{theorem:HK} is $|\mathcal{P}^*_{X_1W_1}|\cdot|\mathcal{P}^*_{X_2W_2}|$. As long as $|\mathcal{X}_1|\leq |\mathcal{V}_1|$ and $|\mathcal{X}_2|\leq |\mathcal{V}_2|$, $|\mathcal{P}^*_{X_1W_1}|$ and $|\mathcal{P}^*_{X_2W_2}|$ will not be larger than $|\mathcal{P}^*_{W_1}|\cdot|\mathcal{P}^*_{V_1}|$ and $|\mathcal{P}^*_{W_2}|\cdot|\mathcal{P}^*_{V_2}|$, respectively. Due to this fact and that the expressions for $\mathcal{R}_{HK}(P_1^*)$ is much simpler than those of $\mathcal{R}_{HK}^{o}(P^*)$, we can conclude that our proposed scheme only requires $\frac{1}{2^{|\mathcal{W}_1|\cdot|\mathcal{V}_1|\cdot|\mathcal{X}_1|+|\mathcal{W}_2|\cdot|\mathcal{V}_2|\cdot|\mathcal{X}_2|}}$ computation (at most) compared to the scheme of \cite{wang2015channel} in the design of ARVs. This can be quite a complexity reduction, especially for large alphabet size cases.

  \section{Extension to Interference Networks}
  \label{S:IN}
  
  So far we have shown that our proposed two types of partially decoding schemes can achieve the Han-Kobayashi region of the 2-user DM-IC via both random coding and polar coding. A natural question is whether they can be extended to arbitrary DM-INs. In this section, we show that partially-joint decoding can also make heterogeneous superposition polar coding schemes easier to realize in DM-INs.

  A $K$-sender $L$-receiver DM-IN, denoted by $(K,L)$-DM-IN, consists of $K$ senders and $L$ receivers. Each sender $k\in[K]$ transmits an independent message $M_k$ at rate $R_k$, while each receiver $l\in[L]$ wishes to recover a subset $\mathcal{D}_l\subset[K]$ of the messages. Similar to the Han-Kobayashi strategy in the 2-user DM-IC, $M_k$ can be split into several component messages, each intended for a group of receivers. If a message is intended for only one receiver, we refer to it as a private message. Otherwise we refer to it as a common message. We only consider the case when each sender has only one private message intended for some receiver and (possibly) multiple common messages intended also for this receiver. More complicated cases can be resolved by decomposing a sender with multiple private and common messages into a certain number of virtual senders of this type.
  
  \begin{figure}[tb]
  	\centering
  	\includegraphics[width=7cm]{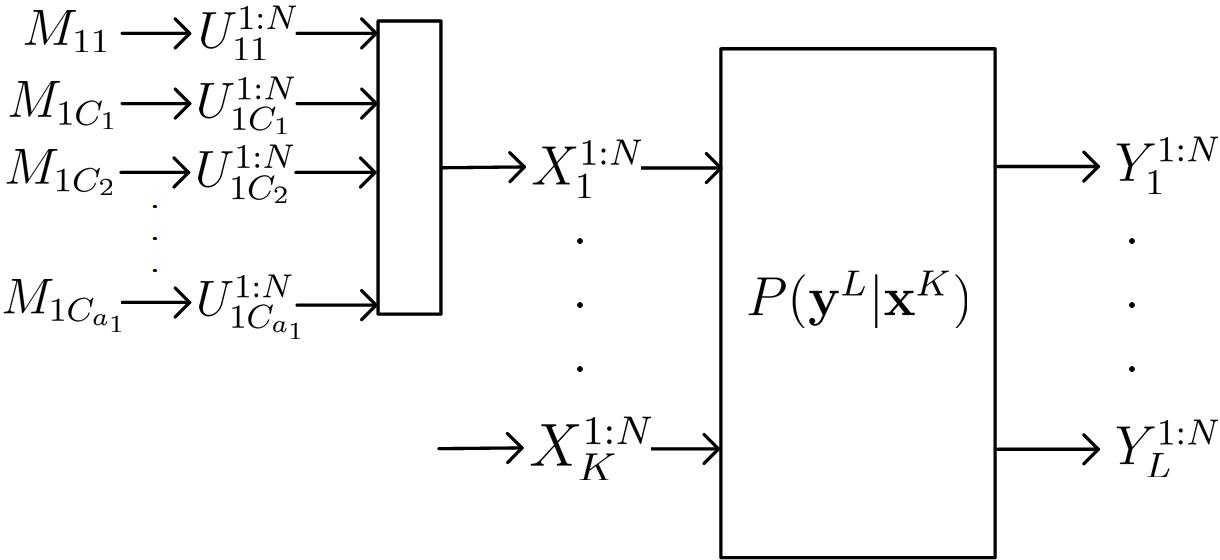}
  	\caption{Sender 1's part in the equivalent channel of the $(K,L)$-DM-IN.} \label{fig:in-hk-o}
  \end{figure}
  Fig. \ref{fig:in-hk-o} shows Sender 1's part of the equivalent channel of the $(K,L)$-DM-IN with a private message $M_{11}$ intended for Receiver 1, and common messages $M_{1C_{1}}$, $M_{1C_{2}}$, ..., $M_{1C_{a_1}}$ ($a_1\geq 1$) intended for Receiver 1 and some other receiver groups. It is shown in \cite{bandemer2015IN} that the optimal achievable rate region when the encoding is restricted to random coding ensembles is the intersection of rate regions for its component multiple access channels in which each receiver recovers its private message as well as its common messages. Thus, one can design a code for the compound MAC to achieve the optimal rate region, which belongs to the homogeneous superposition variant and has been realized by polar codes in \cite{wang2015channel}. Here we discuss using the proposed partially-joint decoding idea to design a heterogeneous one.

  \begin{figure}[tb]
  	\centering
  	\includegraphics[width=7cm]{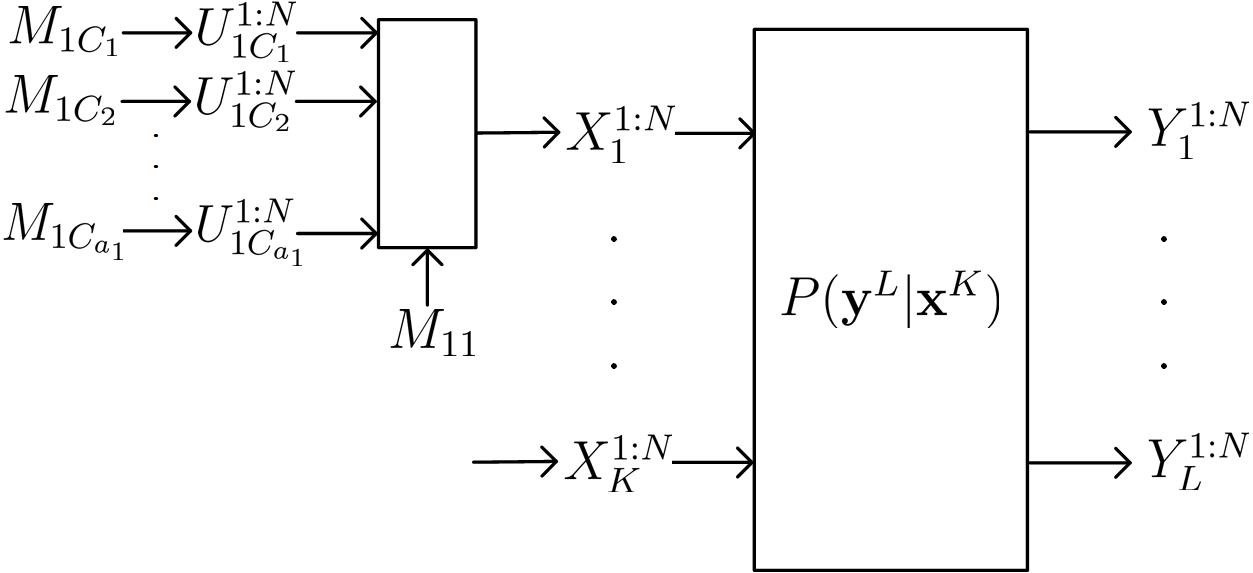}
  	\caption{Sender 1's part in the equivalent channel of the $(K,L)$-DM-IN with the proposed approach.} \label{fig:in-hk}
  \end{figure}
  
  Firstly, consider the case when only Sender 1 uses the heterogeneous approach. Instead of generating a codeword for each message and then merging them with some mapping function as in Fig. \ref{fig:in-hk-o}, now we generate codewords for common messages first and then encode them together with private message $M_{11}$ via superposition coding, as shown in Fig. \ref{fig:in-hk}. Let $P^*(X_1|U_{11},U_{1C_1},...,U_{1C_{a_1}})$ be the deterministic mapping from $U_{11},U_{1C_1},...,U_{1C_{a_1}}$ to $X_1$ in Fig. \ref{fig:in-hk-o}, and let $P_1^*(X_1|U_{1C_1},...,U_{1C_{a_1}})=\sum_{U_{11}}P(U_{11})P^*(X_1|U_{11},U_{1C_1},...,U_{1C_{a_1}})$ be the conditional distribution of random variables $X_1,U_{1C_1},...,U_{1C_{a_1}}$ in Fig. \ref{fig:in-hk}. We can see that synthesized MACs for other receivers are not affected with this setting since $U_{11}$ plays no part in them. Thus, the achievable rate regions of other receivers' synthesized MACs remain the same. Note that deterministic mapping $P^*$ and ARV $U_{11}$ are no longer needed in this design.
  
  Now let us discuss the achievable rates from Receiver 1's point of view. Denote Sender 1's common messages as a whole by $\mathbf{U}_{1c_1}$ with rate $R_1^{c_1}$, and other senders' common messages which are intended for Receiver 1 by $\mathbf{U}_{1c_{o}}$ with rate $R_1^{c_o}$. The private message rate is denoted by $R_1^p$. With the homogeneous approach in Fig. \ref{fig:in-hk-o}, the achievable rate region of Receiver 1 is
  \begin{equation}
  \label{IN-Rp-o}
  \mathcal{R}_{IN}^1(P^*) = \left\lbrace
  \begin{matrix}
  \left(
  \begin{array}{ccc}
  R_1^p\\
  R_1^{c_1}\\
  R_1^{c_o}
  \end{array}
  \right)&
  \left|
  \begin{array}{ccc}
  \begin{array}{ccc}
  R_1^p\leq I(U_{11};Y_1|\mathbf{U}_{1c_1},\mathbf{U}_{1c_{o}})\\
  R_1^{c_1}\leq I(\mathbf{U}_{1c_1};Y_1|U_{11},\mathbf{U}_{1c_{o}})\\
  R_1^{c_o}\leq I(\mathbf{U}_{1c_{o}};Y_1|U_{11},\mathbf{U}_{1c_1})\\
  R_1^p+R_1^{c_1}\leq I(U_{11},\mathbf{U}_{1c_1};Y_1|\mathbf{U}_{1c_{o}})\\
  R_1^p+R_1^{c_o}\leq I(U_{11},\mathbf{U}_{1c_o};Y_1|\mathbf{U}_{1c_1})\\
  R_1^{c_1}+R_1^{c_o}\leq I(U_{1c_{o}},\mathbf{U}_{1c_1};Y_1|\mathbf{U}_{11})\\
  R_1^p+R_1^{c_1}+R_1^{c_o}\leq I(U_{11},\mathbf{U}_{1c_1},\mathbf{U}_{1c_{o}};Y_1)
  \end{array}
  \end{array}\right.
  \end{matrix}
  \right\rbrace .
  \end{equation}
  With the heterogeneous approach in Fig. \ref{fig:in-hk}, the achievable rate region of Receiver 1 becomes
  \begin{equation}
  \label{IN-Rp-c}
  \mathcal{R}_{IN}^{1'}(P_1^*) = \left\lbrace
  \begin{matrix}
  \left(
  \begin{array}{ccc}
  R_1^p\\
  R_1^{c_1}\\
  R_1^{c_o}
  \end{array}
  \right)&
  \left|
  \begin{array}{ccc}
  \begin{array}{ccc}
  R_1^{c_o}\leq I(\mathbf{U}_{1c_{o}};Y_1|X_1)\\
  R_1^p+R_1^{c_1}\leq I(X_1;Y_1|\mathbf{U}_{1c_{o}})\\
  R_1^p+R_1^{c_1}+R_1^{c_o}\leq I(X_1,\mathbf{U}_{1c_{o}};Y_1)
  \end{array}
  \end{array}\right.
  \end{matrix}
  \right\rbrace .
  \end{equation}
  
  Since $(U_{11},\mathbf{U}_{1c_1})\rightarrow X_1$ is a deterministic mapping, we can readily see that upper bounds for $R_1^{c_o}$, $R_1=R_1^p+R_1^{c_1}$ and $R_1^{all}=R_1^p+R_1^{c_1}+R_1^{c_o}$ are invariant with the heterogeneous approach. Thus, if we are interested in the overall rate between the user pair of Sender 1 and Receiver 1 rather than each component message rate, the heterogeneous approach can achieve the same or even a larger rate region than the homogeneous approach for a given joint distribution. 
  
  Similar to the 2-user DM-IC case, when we apply polar codes to realize the heterogeneous scheme, the design of fully-joint decoders is a problem as a sender's common messages must be decoded before its private message. Now consider using the proposed partially-joint decoding scheme. With the Type I decoding order, all common messages intended for Receiver 1 are jointly decoded before the private message. The achievable rate region is
  \begin{equation}
  \mathcal{R}_{IN}^{ParI}(P_1^*) = \left\lbrace
  \begin{matrix}
  \left(
  \begin{array}{ccc}
  R_1^p\\
  R_1^{c_1}\\
  R_1^{c_o}
  \end{array}
  \right)&
  \left|
  \begin{array}{ccc}
  \begin{array}{ccc}
  R_1^p\leq I(X_1;Y_1|\mathbf{U}_{1c_1},\mathbf{U}_{1c_{o}})\\
  R_1^{c_1}\leq I(\mathbf{U}_{1c_1};Y_1|\mathbf{U}_{1c_{o}})\\
  R_1^{c_o}\leq I(\mathbf{U}_{1c_{o}};Y_1|\mathbf{U}_{1c_1})\\
  R_1^{c_1}+R_1^{c_o}\leq I(\mathbf{U}_{1c_1},\mathbf{U}_{1c_{o}};Y_1)
  \end{array}
  \end{array}\right.
  \end{matrix}
  \right\rbrace .
  \end{equation}
  With the Type II decoding order, Sender 1's common messages are decoded first, and then the private message and other senders' common messages are jointly decoded. The achievable rate region is
  \begin{equation}
  \mathcal{R}_{IN}^{ParII}(P_1^*) = \left\lbrace
  \begin{matrix}
  \left(
  \begin{array}{ccc}
  R_1^p\\
  R_1^{c_1}\\
  R_1^{c_o}
  \end{array}
  \right)&
  \left|
  \begin{array}{ccc}
  \begin{array}{ccc}
  R_1^{c_1}\leq I(\mathbf{U}_{1c_1};Y_1)\\
  R_1^p\leq I(X_1;Y_1|\mathbf{U}_{1c_1},\mathbf{U}_{1c_{o}})\\
  R_1^{c_o}\leq I(\mathbf{U}_{1c_{o}};Y_1|X_1)\\
  R_1^p+R_1^{c_o}\leq I(X_1,\mathbf{U}_{1c_{o}};Y_1|\mathbf{U}_{1c_1})
  \end{array}
  \end{array}\right.
  \end{matrix}
  \right\rbrace .
  \end{equation}
  It is easy to verify that the following two regions,  $\{(R_1,R_1^{c_o}):R_1=R_1^p+R_1^{c_1},(R_1^p,R_1^{c_1},R_1^{c_o})\in\mathcal{R}_{IN}^{ParI}(P_1^*)\cup \mathcal{R}_{IN}^{ParII}(P_1^*)\}$ and $\{(R_1,R_1^{c_o}):R_1=R_1^p+R_1^{c_1},(R_1^p,R_1^{c_1},R_1^{c_o})\in\mathcal{R}_{IN}^{1'}(P_1^*)\}$, are equivalent. 
  
  In the above we have discussed the case when only one user pair applies heterogeneous superposition coding and partially-joint decoding. More complicated cases can be extended from this case by adding one user pair with the proposed scheme at a time. To apply polar coding, one simply needs to adopt MAC polarization with more than 2 users and follow our proposed scheme for the 2-user DM-IC. To conclude, we have the following proposition.
  \begin{proposition}
  	\label{proposition:IN}
  	The proposed heterogeneous superposition polar coding scheme with the two types of partially-joint decoding achieves the optimal rate region of DM-INs when the encoding is restricted to random coding ensembles.
  \end{proposition}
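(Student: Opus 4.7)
The plan is to reduce the proposition to three ingredients already developed in the paper: (i) the rate-region equivalence between the homogeneous and heterogeneous approaches under partially-joint decoding at each receiver, (ii) the polar code construction for a single user pair with heterogeneous superposition, and (iii) an inductive extension across user pairs.

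First I would make the rate-region equivalence rigorous for a single receiver. The text already observes that the upper bounds on $R_1^{c_o}$, on the overall rate $R_1=R_1^p+R_1^{c_1}$, and on $R_1^{c_1}+R_1^{c_o}+R_1^p$ are invariant under the deterministic mapping $(U_{11},\mathbf{U}_{1c_1})\mapsto X_1$. The step is to verify that every point on the dominant faces of the projection of $\mathcal{R}_{IN}^{1}(P^*)$ onto $(R_1,R_1^{c_o})$ lies in the analogous projection of $\mathcal{R}_{IN}^{ParI}(P_1^*)\cup\mathcal{R}_{IN}^{ParII}(P_1^*)$. This is the multi-user analogue of Lemma \ref{lemma:typeAB}, and the argument mirrors Appendix \ref{APPEN-B}: depending on whether $R_1^{c_o}$ can be supported when $\mathbf{U}_{1c_1}$ is treated as additional interference, we classify the point as Type I (both families of common messages jointly decoded, then $X_1$) or Type II ($\mathbf{U}_{1c_1}$ decoded single-user first, then $(X_1,\mathbf{U}_{1c_o})$ jointly). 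Repeating this classification at each receiver for each sender using the heterogeneous approach produces a global choice of decoding types.

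Next I would build the polar code. Every synthesized MAC at a receiver involves several $W$- and $X$-variables, so one needs a multi-user extension of the monotone chain rule expansion of Section \ref{Sec-MAC}. Since all auxiliary sequences in a given MAC are independent across users, the 2-user expansion generalizes directly: concatenate $m$ polarized sequences $U_1^{1:N},\ldots,U_m^{1:N}$ with a monotone path $\mathbf{b}_{mN}\in\{1,\ldots,m\}^{mN}$, and Proposition \ref{proposition:MAC-2} extends by the same adaptive-entropy argument used in \cite{arikan2012sw}. I would then design each sender's two-layer encoder exactly as in Section \ref{S:PolarScheme}: the common layer $U^{'1:N}$ is encoded with the chaining scheme of Section \ref{S:CommEnc}, with the polarized index partition now taken from the intersection of the effective $L_{\mathrm{multi-user}|Y_l}$ sets across all receivers that must decode that message, and the private layer is superimposed conditionally on the common layer using the heterogeneous polar coding of Section \ref{S:PolarScheme}-B. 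Achievability of the rates and vanishing error probability then follow block-for-block from Lemmas \ref{lemma.TVD} and \ref{lemma.EP}, using a union bound over the $L$ receivers.

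Finally, I would extend to all user pairs by induction on the number $j$ of senders that use the heterogeneous approach. The base case $j=0$ is exactly the compound-MAC polar construction of \cite{wang2015channel}. For the inductive step, note, as the text does in the paragraph preceding the proposition, that replacing sender $k$'s homogeneous encoder by a heterogeneous one does not change the synthesized MAC seen at any other receiver, because the marginal $P_{X_k|\text{(common ARVs of sender }k\text{)}}$ is preserved. Hence adding one heterogeneous sender at a time preserves the achievable rate regions of all previously designed receivers, while giving the new receiver the choice between Type I and Type II by the single-receiver argument above. The main obstacle I anticipate is bookkeeping in step (iii): when multiple senders each produce several common messages with overlapping intended receiver sets, the chaining indices $\mathcal{I}_{kc}^{1},\mathcal{I}_{kc}^{2}$ must be chosen so that the inclusion relations required for the chaining (as in \eqref{TypeII-CP}) hold simultaneously at every receiver that must decode them; resolving this requires choosing a common target distribution on each synthesized MAC's dominant face that is compatible across receivers, and using a sufficiently long chain $K$ to absorb the cumulative boundary losses identified in Lemma \ref{lemma.EP}.
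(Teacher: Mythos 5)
Your proposal is correct and follows essentially the same route as the paper, which supports Proposition \ref{proposition:IN} only through the discussion preceding it: the invariance of the synthesized MACs seen by the other receivers under the heterogeneous replacement, the equivalence of the $(R_1,R_1^{c_o})$ projections of $\mathcal{R}_{IN}^{ParI}(P_1^*)\cup\mathcal{R}_{IN}^{ParII}(P_1^*)$ and $\mathcal{R}_{IN}^{1'}(P_1^*)$, extension by adding one user pair at a time, and reuse of the 2-user polar scheme with MAC polarization over more than two users. Your additional attention to the multi-user monotone chain rule and the cross-receiver chaining bookkeeping simply makes explicit details that the paper leaves implicit.
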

  
  \begin{remark}
  	Comparing (\ref{IN-Rp-o}) and (\ref{IN-Rp-c}) we can see that the heterogeneous approach has a much simpler expression of achievable rate region. Since we have shown that these two superposition schemes result in the same achievable rate region with respect to the overall rate between each user pair, the heterogeneous approach can serve as an useful tool for deriving simplified achievable rate regions for DM-INs.
  \end{remark}

 \section{Conclusion Remarks}
 \label{S:Conclusion}
  Based on the compact description of the Han-Kobayashi region and the coding strategy lying behind \cite{chong2008han}, we have shown that every point on the dominant faces of the Han-Kobayashi region can be achieved by polar codes in a simpler way compared to the scheme of \cite{wang2015channel}. We prove that the fully-joint decoding requirement in the Han-Kobayashi coding strategy can be loosened to partially-joint decoding, which is more friendly to polar code designs. This result reveals more insights on the roles of ARVs and coding strategies for DM-INs.

  The chaining method we used in this paper and the polar alignment technique used in \cite{wang2015channel} both make polar coding schemes lengthy, as a much larger block length is needed to achieve close-to-optimal rates. It is shown in \cite{sasoglu2011polar} that the non-universality of polar codes is a property of the successive cancellation decoding algorithm. Under ML decoding, a polar code constructed for the binary symmetric channel (BSC) universally achieves the capacity for any binary memoryless symmetric (BMS) channel. Also, as we have mentioned in Remark \ref{Remark:2}, fully-joint decoding in the heterogeneous superposition coding scheme is possible with ML decoding. This makes us wonder if there exist ML-like decoding algorithms and the corresponding code structures for polar codes which maintain universality while still enjoying low complexity. If the answer is yes, our proposed scheme may be further simplified as well as polar coding schemes for other multi-user channels.

\appendices  
\section{Proof Of Theorem \ref{theorem:HK-partial}}
\label{APPEN-A}

\begin{definition}[{\cite[p. 521]{cover2012informtaion}}]
	Let $(X_1,X_2,...,X_k)$ denote a finite collection of discrete random variables
	with some fixed joint distribution, $P_{X_1X_2...X_k}(x_1,x_2,...,x_k)$, $(x_1,x_2,
	...,x_k)\in \mathcal{X}_1\times\mathcal{X}_2\times...\times\mathcal{X}_k$. Let $S$ denote an ordered subset of these	random variables and consider $N$ independent copies of $S$. Thus,
	\begin{align*}
	\mathrm{Pr}(\mathbf{S}^N=\mathbf{s}^N)=\prod_{i=1}^{N}\mathrm{Pr}(S_i=s_i),~~\mathbf{s}^N\in \mathcal{S}^N.
	\end{align*}
	The set $\mathcal{T}_{\epsilon}^{(N)}$ of $\epsilon$-typical $N$-sequences $(\mathbf{x}_1^N,\mathbf{x}_2^N,...,\mathbf{x}_k^N)$ is defined as
	\begin{align*}
	&~~~~\mathcal{T}_{\epsilon}^{(N)}(X_1,X_2,...,X_k)\\
	&=\bigg{\{} (\mathbf{x}_1^N,\mathbf{x}_2^N,...,\mathbf{x}_k^N): \Big{|} -\frac{1}{N}\log P_{\mathbf{S}^N}(\mathbf{s}^N)-H(S) \Big{|}< \epsilon, \forall S\subseteq (X_1,X_2,...,X_k) \bigg{\}}.
	\end{align*}
\end{definition}

\textbf{Codebook generation.} Consider a fixed $P_Q(q)P_{X_1W_1|Q}(x_1,w_1|q)P_{X_2W_2|Q}(x_2,w_2|q)$. Generate a sequence $q^{1:N}\sim \prod_{j=1}^{N}P_Q(q)$. For $k=1,2$, randomly and independently generate $2^{NR_k^c}$ codewords $w_k^{1:N}(m_{kc})$, $m_{kc}\in [1:2^{NR_k^c}]$, each according to $\prod_{j=1}^{N}P_{W_k|Q}(w_k^j|q^j)$. For each $m_{kc}$, randomly and conditionally independently generate $2^{NR_k^p}$ codewords $x_k^{1:N}(m_{kc},m_{kp})$, $m_{kp}\in [1:2^{NR_k^p}]$, each according to $\prod_{j=1}^{N}P_{X_k|W_kQ}(x_k^j|w_k^j(m_{kc}),q^j)$.

\textbf{Encoding.} To send $m_k=(m_{kc},m_{kp})$, Sender $k$ ($k=1,2$) transmits $x_k^{1:N}(m_{kc},m_{kp})$.

\textbf{Decoding.} 

In the Type I partially-joint decoding, Receiver $k$ ($k=1,2$) decodes in the following two steps: 
\begin{itemize}
	\item (Simultaneous decoding for two senders' common messages) The decoder declares that $(\hat{m}_{1c},\hat{m}_{2c})$ ̂is sent if it is the unique message pair such that $$\big{(}q^{1:N},w_1^{1:N}(\hat{m}_{1c}),w_2^{1:N}(\hat{m}_{2c}),y_k^{1:N}\big{)}\in \mathcal{T}_{\epsilon}^{(N)};$$ otherwise it declares an error.
	\item (Private message decoding) If such a $(\hat{m}_{1c},\hat{m}_{2c})$ is found, the decoder finds the unique $\hat{m}_{kp}$ such that $$\big{(}q^{1:N},w_1^{1:N}(\hat{m}_{1c}),w_2^{1:N}(\hat{m}_{2c}),x_k^{1:N}(\hat{m}_{kc},\hat{m}_{kp}),y_k^{1:N}\big{)}\in \mathcal{T}_{\epsilon}^{(N)};$$ otherwise it declares an error.
\end{itemize}

In the Type II partially-joint decoding, Receiver $k$ decodes in the following two steps: 
\begin{itemize}
	\item (Intended common message decoding) The decoder declares that $\hat{m}_{kc}$ ̂is sent if it is the unique message such that $$\big{(}q^{1:N},w_k^{1:N}(\hat{m}_{kc}),y_k^{1:N}\big{)}\in \mathcal{T}_{\epsilon}^{(N)};$$ otherwise it declares an error.
	\item (Simultaneous decoding for the unintended common message and the private message) If such a $\hat{m}_{kc}$ is found, the decoder finds the unique $(\hat{m}_{k'c},\hat{m}_{kp})$ such that $$\big{(}q^{1:N},w_k^{1:N}(\hat{m}_{kc}),w_{k'}^{1:N}(\hat{m}_{k'c}),x_k^{1:N}(\hat{m}_{kc},\hat{m}_{kp}),y_k^{1:N}\big{)}\in \mathcal{T}_{\epsilon}^{(N)},$$ where $k'=\mod(k,2)+1$; otherwise it declares an error.
\end{itemize}

\textbf{Error analysis.} 
First we consider Type I. Assume that message pair $((1, 1), (1, 1))$ is sent and Receiver 1 applies the Type I decoding. Define the following error events
\begin{align*}
\mathcal{E}^{(I)}_{10}&\triangleq \{(q^{1:N},w_1^{1:N}(1),w_2^{1:N}(1),x_1^{1:N}(1,1),y_1^{1:N})\notin \mathcal{T}_{\epsilon}^{(N)}\}, \\
\mathcal{E}^{(I)}_{11}&\triangleq \{(q^{1:N},w_1^{1:N}(m_{1c}),w_2^{1:N}(1),y_1^{1:N})\in \mathcal{T}_{\epsilon}^{(N)} \text{ for some } m_{1c}\neq 1\}, \\
\mathcal{E}^{(I)}_{12}&\triangleq \{(q^{1:N},w_1^{1:N}(1),w_2^{1:N}(m_{2c}),y_1^{1:N})\in \mathcal{T}_{\epsilon}^{(N)} \text{ for some } m_{2c}\neq 1\}, \\
\mathcal{E}^{(I)}_{13}&\triangleq \{(q^{1:N},w_1^{1:N}(m_{1c}),w_2^{1:N}(m_{2c}),y_1^{1:N})\in \mathcal{T}_{\epsilon}^{(N)} \text{ for some } m_{1c}\neq 1,m_{2c}\neq 1\}, \\
\mathcal{E}^{(I)}_{14}&\triangleq \{(q^{1:N},w_1^{1:N}(1),w_2^{1:N}(1),x_1^{1:N}(1,m_{1p}),y_1^{1:N})\in \mathcal{T}_{\epsilon}^{(N)} \text{ for some } m_{1p}\neq 1\}, \\
\mathcal{E}^{(I)}_{15}&\triangleq \{(q^{1:N},w_1^{1:N}(1),w_2^{1:N}(m_{2c}),x_1^{1:N}(1,m_{1p}),y_1^{1:N})\in \mathcal{T}_{\epsilon}^{(N)} \text{ for some } m_{2c}\neq 1,m_{1p}\neq 1\}. 
\end{align*}
The average probability of error for Receiver 1  can be upper bounded as
\begin{align}
\mathrm{P}(\mathcal{E}_1^{(I)})&\leq \mathrm{P}(\mathcal{E}_{10}^{(I)})+\mathrm{P}(\mathcal{E}_{11}^{(I)})+\mathrm{P}(\mathcal{E}_{13}^{(I)})+\mathrm{P}(\mathcal{E}_{14}^{(I)})+\mathrm{P}(\mathcal{E}_{15}^{(I)})\nonumber\\
&\leq \mathrm{P}(\mathcal{E}_{10}^{(I)})+\mathrm{P}(\mathcal{E}_{11}^{(I)})+\mathrm{P}(\mathcal{E}_{12}^{(I)})+\mathrm{P}(\mathcal{E}_{13}^{(I)})+\mathrm{P}(\mathcal{E}_{14}^{(I)}),\label{PrI-1}
\end{align}
where (\ref{PrI-1}) holds because $\mathrm{P}(\mathcal{E}_{15}^{(I)})\leq \mathrm{P}(\mathcal{E}_{12}^{(I)})$.
By the law of large numbers (LLN), $\mathrm{P}(\mathcal{E}_{10}^{(I)})$ tends to 0 as $N\rightarrow \infty$. By the packing lemma, $\mathrm{P}(\mathcal{E}_{11}^{(I)})$, $\mathrm{P}(\mathcal{E}_{12}^{(I)})$, $\mathrm{P}(\mathcal{E}_{13}^{(I)})$ and $\mathrm{P}(\mathcal{E}_{14}^{(I)})$ tend to 0 as $N\rightarrow \infty$ if the conditions 
\begin{equation}
\label{RandCo-1}
\begin{aligned}
R_1^c&\leq I(W_1;Y_1|W_2Q)\\
R_2^c&\leq I(W_2;Y_1|W_1Q)\\
R_1^c+R_2^c&\leq I(W_1W_2;Y_1|Q)\\
R_1^p&\leq I(X_1;Y_1|W_1W_2Q)
\end{aligned}
\end{equation}
are satisfied, respectively. The rate constraints when Receiver 2 applies Type I decoding are similar by swapping subscripts 1 and 2 in (\ref{RandCo-1}).

Next we consider Type II. We also assume that message pair $((1, 1), (1, 1))$ is sent and Receiver 1 applies the Type II decoding. Define the following error events
\begin{align*}
\mathcal{E}^{(II)}_{10}&\triangleq \{(q^{1:N},w_1^{1:N}(1),w_2^{1:N}(1),x_1^{1:N}(1,1),y_1^{1:N})\notin \mathcal{T}_{\epsilon}^{(N)}\}, \\
\mathcal{E}^{(II)}_{11}&\triangleq \{(q^{1:N},w_1^{1:N}(m_{1c}),y_1^{1:N})\in \mathcal{T}_{\epsilon}^{(N)} \text{ for some } m_{1c}\neq 1\}, \\
\mathcal{E}^{(II)}_{12}&\triangleq \{(q^{1:N},w_1^{1:N}(1),w_2^{1:N}(1),x_1^{1:N}(1,m_{1p}),y_1^{1:N})\in \mathcal{T}_{\epsilon}^{(N)} \text{ for some } m_{1p}\neq 1\}, \\
\mathcal{E}^{(II)}_{13}&\triangleq \{(q^{1:N},w_1^{1:N}(1),w_2^{1:N}(m_{2c}),x_1^{1:N}(1,m_{1p}),y_1^{1:N})\in \mathcal{T}_{\epsilon}^{(N)} \text{ for some } m_{2c}\neq 1,m_{1p}\neq 1\}. 
\end{align*}
The average probability of error for Receiver 1  can be upper bounded as
\begin{equation}
\mathrm{P}(\mathcal{E}_1^{(II)})\leq \mathrm{P}(\mathcal{E}_{10}^{(II)})+\mathrm{P}(\mathcal{E}_{11}^{(II)})+\mathrm{P}(\mathcal{E}_{12}^{(II)})+\mathrm{P}(\mathcal{E}_{13}^{(II)}).
\end{equation}
Similarly, by the LLN, $\mathrm{P}(\mathcal{E}_{10}^{(II)})$ tends to 0 as $N\rightarrow \infty$. By the packing lemma, $\mathrm{P}(\mathcal{E}_{11}^{(II)})$, $\mathrm{P}(\mathcal{E}_{12}^{(II)})$, and $\mathrm{P}(\mathcal{E}_{13}^{(II)})$ tend to 0 as $N\rightarrow \infty$ if the conditions 
\begin{equation}
\label{RandCo-2}
\begin{aligned}
R_1^c&\leq I(W_1;Y_1|Q)\\
R_1^p&\leq I(X_1;Y_1|W_1W_2Q)\\
R_1^p+R_2^c&\leq I(X_1W_2;Y_1|W_1Q)
\end{aligned}
\end{equation}
are satisfied, respectively. The rate constraints when Receiver 2 applies the Type II decoding are similar by swapping subscripts 1 and 2 in (\ref{RandCo-2}).

Suppose both receivers adopt the Type I decoding. From (\ref{RandCo-1}) and its counterpart for Receiver 2 we know that the achievable rate region is
\begin{equation}
\label{RandCo-T1}
\mathcal{R}_{Par}^1(P_1^*)=
\left\lbrace
\begin{matrix}
\left(
\begin{array}{ccc}
R_1^c\\
R_2^c\\
R_1^p\\
R_2^p
\end{array}
\right)&
\left|
\begin{array}{ccc}
\begin{array}{ccc}
R_1^c\leq \min\{I(W_1;Y_1|W_2Q),I(W_1;Y_2|W_2Q)\}\\
R_2^c\leq \min\{I(W_2;Y_1|W_1Q),I(W_2;Y_2|W_1Q)\}\\
R_1^c+R_2^c\leq \min\{I(W_1W_2;Y_1|Q),I(W_1W_2;Y_2|Q)\}\\
R_1^p\leq I(X_1;Y_1|W_1W_2Q)\\
R_2^p\leq I(X_2;Y_2|W_1W_2Q)
\end{array}
\end{array}\right.
\end{matrix}
\right\rbrace .
\end{equation}
Now suppose Receiver 1 uses Type I while Receiver 2 adopts Type II. From (\ref{RandCo-1}) and the counterpart of (\ref{RandCo-2}) for Receiver 2 we have
\begin{equation}
\label{RandCo-T2}
\mathcal{R}_{Par}^2(P_1^*)=
\left\lbrace
\begin{matrix}
\left(
\begin{array}{ccc}
R_1^c\\
R_2^c\\
R_1^p\\
R_2^p
\end{array}
\right)&
\left|
\begin{array}{ccc}
\begin{array}{ccc}
R_1^c\leq I(W_1;Y_1|W_2Q)\\
R_2^c\leq \min\{I(W_2;Y_1|W_1Q),I(W_2;Y_2|Q)\}\\
R_1^c+R_2^c\leq I(W_1W_2;Y_1|Q)\\
R_1^p\leq I(X_1;Y_1|W_1W_2Q)\\
R_2^p\leq I(X_2;Y_2|W_1W_2Q)\\
R_2^p+R_1^c\leq I(X_2W_1;Y_2|W_2Q)
\end{array}
\end{array}\right.
\end{matrix}
\right\rbrace .
\end{equation}
Similarly if Receiver 2 uses Type I while Receiver 1 adopts Type II, we have
\begin{equation}
\label{RandCo-T3}
\mathcal{R}_{Par}^3(P_1^*)=
\left\lbrace
\begin{matrix}
\left(
\begin{array}{ccc}
R_1^c\\
R_2^c\\
R_1^p\\
R_2^p
\end{array}
\right)&
\left|
\begin{array}{ccc}
\begin{array}{ccc}
R_1^c\leq \min\{I(W_1;Y_2|W_2Q),I(W_1;Y_1|Q)\}\\
R_2^c\leq I(W_2;Y_2|W_1Q)\\
R_1^c+R_2^c\leq I(W_1W_2;Y_2|Q)\\
R_1^p\leq I(X_1;Y_1|W_1W_2Q)\\
R_2^p\leq I(X_2;Y_2|W_1W_2Q)\\
R_1^p+R_2^c\leq I(X_1W_2;Y_1|W_1Q)
\end{array}
\end{array}\right.
\end{matrix}
\right\rbrace .
\end{equation}

From (\ref{RandCo-T1}) we have
\begin{align*}
R_1^p+R_1^c+R_2^c&\leq \min\{I(X_1W_2;Y_1|Q), I(X_1;Y_1|W_1W_2Q)+I(W_1W_2;Y_2|Q)\},\\
R_2^p+R_1^c+R_2^c&\leq \min\{I(X_2W_1;Y_2|Q), I(X_2;Y_2|W_1W_2Q)+I(W_1W_2;Y_1|Q)\}.
\end{align*}
From (\ref{RandCo-T2}) we have
\begin{align*}
R_1^p+R_1^c+R_2^c&\leq I(X_1W_2;Y_1|Q),\\
R_2^p+R_1^c+R_2^c&\leq \min\left\{\begin{matrix}
I(X_2W_1;Y_2|Q)\\
I(X_2W_1;Y_2|W_2Q)+I(W_2;Y_1|W_1Q),\\
I(X_2;Y_2|W_1W_2Q)+I(W_1W_2;Y_1|Q)
\end{matrix}\right\}.
\end{align*}
From (\ref{RandCo-T3}) we have
\begin{align*}
R_1^p+R_1^c+R_2^c&\leq\min\left\{\begin{matrix}
I(X_1W_2;Y_1|Q)\\
I(X_1W_2;Y_1|W_1Q)+I(W_1;Y_2|W_2Q),\\
I(X_1;Y_1|W_1W_2Q)+I(W_1W_2;Y_2|Q)
\end{matrix}\right\},\\
R_2^p+R_1^c+R_2^c&\leq I(X_2W_1;Y_2|Q).
\end{align*}
Then we can obtain the following achievable rate region
\begin{equation}
\label{RandCo-A}
\mathcal{R}_{Par}(P_1^*)=
\left\lbrace
\begin{matrix}
\left(
\begin{array}{ccc}
R_1^c\\
R_2^c\\
R_1^p\\
R_2^p
\end{array}
\right)&
\left|
\begin{array}{ccc}
\begin{array}{ccc}
R_1^c\leq I(W_1;Y_1|W_2Q)\\
R_1^p\leq I(X_1;Y_1|W_1W_2Q)\\
R_1^p+R_2^c\leq I(X_1W_2;Y_1|W_1Q)\\
R_1^p+R_1^c+R_2^c\leq I(X_1W_2;Y_1|Q)\\
R_2^c\leq I(W_2;Y_2|W_1Q)\\
R_2^p\leq I(X_2;Y_2|W_1W_2Q)\\
R_2^p+R_1^c\leq I(X_2W_1;Y_2|W_2Q)\\
R_2^p+R_1^c+R_2^c\leq I(X_2W_1;Y_2|Q)
\end{array}
\end{array}\right.
\end{matrix}
\right\rbrace .
\end{equation}

Using the Fourier-Motzkin elimination we can readily show that (\ref{RandCo-A}) results in the same region as in Theorem \ref{theorem:HK} with the following two additional constraints (same as the Chong-Motani-Garg region shown in \cite[Lemma 4]{chong2008han}):
\begin{equation}
\begin{aligned}
R_1&\leq I(X_1;Y_1|W_1W_2Q)+I(X_2W_1;Y_2|W_2Q),\\
R_2&\leq I(X_2;Y_2|W_1W_2Q)+I(X_1W_2;Y_1|W_1Q).
\end{aligned}
\end{equation}
From \cite{chong2008han} we know that the Chong-Motani-Garg region is smaller than the compact Han-Kobayashi region for a $\mathcal{P}_1$ only if
\begin{align}
I(X_2W_1;Y_1|W_2Q) < I(W_1;Y_1|Q) \text{ or } I(X_1W_2;Y_2|W_1Q) < I(W_2;Y_2|Q).
\end{align}
For the former case, an intuitive interpretation is that Receiver 2 is unable to achieve the unintended common message rate of $R_1^c=I(W_1;Y_1|Q)$ even if it tries its best. In this case, Sender 1 will not transmit any common message (i.e., $W_1=\emptyset$) \cite[Problem 6.12]{el2011network}. Similarly, for the latter case, we will set $W_2=\emptyset$. Thus, these rates are still achievable with the proposed scheme. This completes the proof.

\section{Proof of Lemma \ref{lemma:typeAB}}
\label{APPEN-B}
Since $R_1+R_2=c$, $R_1+R_2=d$, $R_1+R_2=e$, $2R_1+R_2=f$ and $R_1+2R_2=g$ are the possible dominant faces of the Han-Kobayashi region, we prove Lemma 1 by deriving value ranges of common message rates for points on each of them.
\subsection{Points on $R_1+R_2=c$ and $R_1+R_2=d$}
Suppose $\mathbf{P}\in\mathcal{R}_{HK}(P_1^*)$ is a point on line 
\begin{equation}
\label{AP-C-1}
R_1+R_2=c.
\end{equation}
Let $(R_1^p,R_1^c,R_2^p,R_2^c)$ be a rate decomposition of $\mathbf{P}$. The equality of (\ref{AP-C-1}) forces those in the counterpart of (\ref{HK1-1}) for Receiver 2 and (\ref{HK1-7}) to hold. Thus,
\begin{align}
R_2^p&=I(X_2;Y_2|W_1W_2Q),\label{AP-B-B1} \\
R_1^p+R_1^c+R_2^c&= I(X_1W_2;Y_1|Q). \label{AP-B-B2}
\end{align}
From (\ref{AP-B-B2}) and (\ref{HK-1}) we have 
\begin{equation*}
R_2^c\geq I(W_2;Y_1|Q).
\end{equation*}
From (\ref{AP-B-B1}) and (\ref{HK-2}) we have 
\begin{equation}
R_2^c\leq I(W_2;Y_2|W_1Q).\label{AP-B-B3}
\end{equation}
From (\ref{AP-C-1}) and (\ref{HK-6}) we have
\begin{align*}
R_2\geq 2c-f=I(X_2;Y_2|W_1W_2Q)+I(W_1W_2;Y_1|Q)-I(W_1;Y_2|W_2Q).
\end{align*}
From  (\ref{AP-C-1}) and (\ref{HK-7}) we have
\begin{equation*}
R_2\leq g-c=I(X_2W_1;Y_2|Q)-I(W_1;Y_1|Q).
\end{equation*}
Thus,
\begin{equation*}
	I(W_1W_2;Y_1|Q)-I(W_1;Y_2|W_2Q)\leq R_2^c \leq I(W_1W_2;Y_2|Q)-I(W_1;Y_1|Q).
\end{equation*}
If $R_1+R_2=c$ is a dominant face of the Han-Kobayashi region, $c\leq d$ and $c\leq e$ must hold. From (\ref{HK-3}), (\ref{HK-4}) and (\ref{HK-5}) we have
\begin{align}
I(W_1;Y_2|W_2Q)&\geq I(W_1;Y_1|Q),\nonumber\\
I(W_1W_2;Y_2|Q)&\geq I(W_1W_2;Y_1|Q).\label{AP-B-B4}
\end{align}

1) For $\max\{I(W_2;Y_1|Q),I(W_1W_2;Y_1|Q)-I(W_1;Y_2|W_2Q)\}\leq R_2^c \leq I(W_2;Y_1|W_1Q)$ (if not null), let
\begin{align}
R_1^c&=I(W_1W_2;Y_1|Q)-R_2^c,\nonumber\\
R_1^p&=I(X_1;Y_1|W_1W_2Q).\nonumber
\end{align} 
Obviously $(R_1^c,R_2^c)\in \mathcal{R}(P_{Y_1|W_1W_2})$ and $R_1^c \leq I(W_1;Y_2|W_2Q)$. Then from (\ref{AP-B-B3}) and (\ref{AP-B-B4}) we know that $(R_1^c,R_2^c)\in \mathcal{R}(P_{Y_2|W_1W_2})$. Therefore $\mathbf{P}$ is of Type A. 

2) For $I(W_2;Y_1|W_1Q)\leq R_2^c \leq \min\{I(W_1W_2;Y_2|Q)-I(W_1;Y_1|Q),I(W_2;Y_2|W_1Q)\}$ (if not null), let
\begin{align*}
R_1^c&=I(W_1;Y_1|Q),\\
R_1^p&=I(X_1W_2;Y_1|W_1Q)-R_2^c.
\end{align*} 
In this case, $\mathbf{P}$ belongs to Type B.

For a point $\mathbf{P}\in\mathcal{R}_{HK}(P_1^*)$ on line  $R_1+R_2=d$, the analysis is similar.

\subsection{Points on $R_1+R_2=e$}
Suppose $\mathbf{P}\in\mathcal{R}_{HK}(P_1^*)$ is a point on line 
\begin{equation}
\label{AP-C-4}
R_1+R_2=e.
\end{equation}
Let $(R_1^p,R_1^c,R_2^p,R_2^c)$ be a rate decomposition of $\mathbf{P}$. The equality of (\ref{AP-C-4}) forces those in (\ref{HK1-5}) and its counterpart for Receiver 2 to hold. Thus,
\begin{align*}
R_1^p+R_2^c&= I(X_1W_2;Y_1|W_1Q),\\
R_2^p+R_1^c&= I(X_2W_1;Y_2|W_2Q).
\end{align*}
Then from (\ref{HK1-1}), (\ref{HK1-7}) and their counterparts for Receiver 2, we have
\begin{align}
I(W_1;Y_2|W_2Q)\leq R_1^c&\leq I(W_1;Y_1|Q),\label{APB-1}\\
I(W_2;Y_1|W_1Q)\leq R_2^c&\leq I(W_2;Y_2|Q).\label{APB-2}
\end{align}
From (\ref{HK-6}) and (\ref{AP-C-4}) we have
\begin{align*}
R_1^p+R_1^c&\leq I(X_1;Y_1|W_1W_2Q)+I(W_1;Y_1|Q),\\
R_2^p+R_2^c&\geq I(X_2W_1;Y_2|W_2Q)+I(W_2;Y_1|W_1Q)-I(W_1;Y_1|Q).
\end{align*}
From (\ref{HK-7}) and (\ref{AP-C-4}) we have
\begin{align*}
R_1^p+R_1^c&\geq I(X_1W_2;Y_1|W_1Q)+I(W_1;Y_2|W_2Q)-I(W_2;Y_2|Q),\\
R_2^p+R_2^c&\leq I(X_2;Y_2|W_1W_2Q)+I(W_2;Y_2|Q).
\end{align*}

1) If $I(X_2;Y_2|W_1W_2Q)+I(W_2;Y_1|W_1Q)\leq \mathbf{P}(2) \leq I(X_2;Y_2|W_1W_2Q)+I(W_2;Y_2|Q)$, let $R_2^p=I(X_2;Y_2|W_1W_2Q)$. Then
\begin{align*}
R_2^c&=\mathbf{P}(2)-I(X_2;Y_2|W_1W_2Q),\\
R_1^c&=I(W_1;Y_2|W_2Q),\\
R_1^p&=I(X_1W_2;Y_1|W_1Q)-R_2^c.
\end{align*}
From (\ref{APB-1}) we know that $R_1^c\leq I(W_1;Y_1|Q)$. Thus, $\mathbf{P}$ belongs to Type B.

2) If $I(X_2W_1;Y_2|W_2Q)+I(W_2;Y_1|W_1Q)-I(W_1;Y_1|Q)\leq \mathbf{P}(2)<I(X_2;Y_2|W_1W_2Q)+I(W_2;Y_1|W_1Q)$, let $R_1^p=I(X_1;Y_1|W_1W_2Q)$. Then
\begin{align*}
R_2^c&=I(W_2;Y_1|W_1Q),\\
R_2^p&=\mathbf{P}(2)-I(W_2;Y_1|W_1Q),\\
R_1^c&=I(X_2W_1;Y_2|W_2Q)+I(W_2;Y_1|W_1Q)-\mathbf{P}(2).
\end{align*}
From (\ref{APB-2}) we know that $R_2^c\leq I(W_2;Y_2|Q)$. Thus, $\mathbf{P}$ belongs to Type B.

\subsection{Points on $2R_1+R_2=f$ and $R_1+2R_2=g$}
Suppose $\mathbf{P}\in\mathcal{R}_{HK}(P_1^*)$ is a point on line 
\begin{equation}
\label{AP-C-2}
2R_1+R_2=f.
\end{equation}
Let $(R_1^p,R_1^c,R_2^p,R_2^c)$ be a rate decomposition of $\mathbf{P}$. The equality of (\ref{AP-C-2}) forces those in (\ref{HK1-1}), (\ref{HK1-7}) and the counterpart of (\ref{HK1-5}) to hold. Thus,
\begin{align}
R_1^p&=I(X_1;Y_1|W_1W_2Q),\label{AP-C-3}\\
R_1^c+R_2^p&=I(X_2W_1;Y_2|W_2Q),\\
R_1^c+R_2^c&=I(W_1W_2;Y_1|Q).
\end{align}
Then we obtain from (\ref{HK1-4}) that
\begin{equation}
R_1^c\leq I(W_1;Y_1|W_2Q).
\end{equation}
From (\ref{HK-3})--(\ref{HK-5}), (\ref{AP-C-2}) and (\ref{AP-C-3}) we have
\begin{align*}
R_1^c\geq \max\left\{\begin{matrix}
I(W_1;Y_2|W_2Q) \\
I(W_1W_2;Y_1|Q)-I(W_2;Y_2|Q)\\
I(W_1;Y_1|Q)
\end{matrix}\right\}.
\end{align*}
Thus,
\begin{align*}
R_2^c\leq \min\left\{\begin{matrix}
I(W_1W_2;Y_1|Q)-I(W_1;Y_2|W_2Q) \\
I(W_2;Y_2|Q)\\
I(W_2;Y_1|W_1Q)
\end{matrix}\right\}.
\end{align*}
We can see that $(R_1^c,R_2^c)\in \mathcal{R}(P_{Y_1|W_1W_2})$ and $R_2^c\leq I(W_2;Y_2|Q)$. Thus, $\mathbf{P}$ belongs to Type B.

For a point $\mathbf{P}\in\mathcal{R}_{HK}(P_1^*)$ on line  $R_1+2R_2=g$, the analysis is similar.

Now we have completed the proof.

\section{Proof Of Lemma \ref{lemma.I1p}}
\label{APPEN-C}
Define
\begin{equation}
\mathcal{H}^{(N)}_{S_{U_1}|Y_1W_1}\triangleq \big{\{}j\in [N]:H(S^{f_1(j)}|Y_1^{1:N},U_1^{'1:N}, S^{1:f_1(j)-1})\geq \log_2({q_{X_1}})-\delta_N \big{\}},
\end{equation}
and let $\mathcal{B}^{(N)}_{S_{U_1}|Y_1W_1}\triangleq (\mathcal{H}^{(N)}_{S_{U_1}|Y_1W_1}\cup \mathcal{L}^{(N)}_{S_{U_1}|Y_1W_1})^C$. Then we have
\begin{align*}
\frac{1}{N}|\mathcal{I}_{1p}|&=\frac{1}{N}|\mathcal{H}^{(N)}_{X_1|W_1}\cap (\mathcal{H}^{(N)}_{S_{U_1}|Y_1W_1}\cup \mathcal{B}^{(N)}_{S_{U_1}|Y_1W_1})^C|\\
&=\frac{1}{N}|\mathcal{H}^{(N)}_{X_1|W_1}\cap (\mathcal{H}^{(N)}_{S_{U_1}|Y_1W_1})^C\cap (\mathcal{B}^{(N)}_{S_{U_1}|Y_1W_1})^C|\\
&\geq \frac{1}{N}|\mathcal{H}^{(N)}_{X_1|W_1}\cap (\mathcal{H}^{(N)}_{S_{U_1}|Y_1W_1})^C|-\frac{1}{N}|\mathcal{B}^{(N)}_{S_{U_1}|Y_1W_1}|\\
&=\frac{1}{N}|\mathcal{H}^{(N)}_{X_1|W_1}|-\frac{1}{N}|\mathcal{H}^{(N)}_{S_{U_1}|Y_1W_1}|-\frac{1}{N}|\mathcal{B}^{(N)}_{S_{U_1}|Y_1W_1}|.
\end{align*}
From (\ref{PolarRate}) we have 
\begin{equation*}
\lim\limits_{N\rightarrow \infty}\frac{1}{N}|\mathcal{H}^{(N)}_{X_1|W_1}|=H_{q_{X_1}}(X_1|W_1).
\end{equation*}
From \cite[Lemma 1]{chou2015keygen} we have
\begin{equation*}
\lim\limits_{N\rightarrow \infty}\frac{1}{N}|\mathcal{B}^{(N)}_{S_{U_1}|Y_1W_1}|=0.
\end{equation*}
From (\ref{MACRate}) we have 
\begin{align}
&~~~\lim\limits_{N\rightarrow \infty}\frac{1}{N}|\mathcal{H}^{(N)}_{S_{U_1}|Y_1W_1}|\nonumber\\
&=\lim\limits_{N\rightarrow \infty}\frac{1}{N}\sum_{j\in \mathcal{S}_{U_1}}H_{q_{X_1}}(S^j|Y_1^{1:N},W_1^{1:N},S^{1:j-1})\nonumber\\
&=\lim\limits_{N\rightarrow \infty}\big{(}\frac{1}{N}H_{q_{X_1}}(S^{1:2N}|Y_1^{1:N},W_1^{1:N})-\frac{1}{N}\sum_{j\in \mathcal{S}_{U'_2}}H_{q_{X_1}}(S^j|Y_1^{1:N},W_1^{1:N},S^{1:j-1})\big{)}\nonumber\\
&=\lim\limits_{N\rightarrow \infty}\big{(}\frac{1}{N}H_{q_{X_1}}(S^{1:2N},W_1^{1:N}|Y_1^{1:N})-\frac{1}{N}H_{q_{X_1}}(W_1^{1:N}|Y_1^{1:N})\nonumber\\
&~~~~~~~~-\frac{1}{N}\sum_{j\in \mathcal{S}_{U'_2}}H_{q_{X_1}}(S^j|Y_1^{1:N},W_1^{1:N},S^{1:j-1})\big{)}\nonumber\\
&=\lim\limits_{N\rightarrow \infty}\frac{1}{N}H_{q_{X_1}}(S^{1:2N}|Y_1^{1:N})+\lim\limits_{N\rightarrow \infty}\frac{1}{N}H_{q_{X_1}}(W_1^{1:N}|Y_1^{1:N},S^{1:2N})\nonumber\\
&~~~~~~~~-H_{q_{X_1}}(W_1|Y_1)-\lim\limits_{N\rightarrow \infty}\frac{1}{N}\sum_{j\in \mathcal{S}_{U'_2}}H_{q_{X_1}}(S^j|Y_1^{1:N},W_1^{1:N},S^{1:j-1})\nonumber\\
&=H_{q_{X_1}}(X_1W_2|Y_1)-H_{q_{X_1}}(W_1|Y_1)-\lim\limits_{N\rightarrow \infty}\frac{1}{N}\sum_{j\in \mathcal{S}_{U'_2}}H_{q_{X_1}}(S^j|Y_1^{1:N},S^{1:j-1})\label{Hsu1-1}\\
&=H_{q_{X_1}}(X_1W_2|Y_1)-H_{q_{X_1}}(W_1|Y_1)-\big{(} H_{q_{X_1}}(W_2)-I(X_1W_2;Y_1)+\mathbf{\bar{P}}^1(1) \big{)}\label{Hsu1-2}\\
&=H_{q_{X_1}}(X_1)-H_{q_{X_1}}(W_1|Y_1)-\mathbf{\bar{P}}^1(1),\nonumber
\end{align}
where (\ref{Hsu1-1}) holds because $H_{q_{X_1}}(W_1^{1:N}|Y_1^{1:N},S^{1:2N})=0$, and (\ref{Hsu1-2}) holds by
\begin{align*}
\lim\limits_{N\rightarrow \infty}\frac{1}{N}\sum_{j\in \mathcal{S}_{U'_2}}H_{q_{X_1}}(S^j|Y_1^{1:N},S^{1:j-1})&=H_{q_{X_1}}(W_2)-\mathbf{\bar{P}}^1(2)\\
&=H_{q_{X_1}}(W_2)-\big{(}I(X_1W_2;Y_1)-\mathbf{\bar{P}}^1(1)\big{)}.
\end{align*}
Thus,
\begin{align}
\lim\limits_{N\rightarrow \infty}\frac{1}{N}|\mathcal{I}_{1p}|&=H_{q_{X_1}}(X_1|W_1)-\big{(}H_{q_{X_1}}(X_1)-H_{q_{X_1}}(W_1|Y_1)-\mathbf{\bar{P}}^1(1)\big{)}\nonumber\\
&=H_{q_{X_1}}(X_1W_1)-H_{q_{X_1}}(W_1)-H_{q_{X_1}}(X_1)+H_{q_{X_1}}(W_1|Y_1)+\mathbf{\bar{P}}^1(1)\nonumber\\
&=\mathbf{\bar{P}}^1(1)-I(W_1;Y_1),\label{I1p}
\end{align}
where (\ref{I1p}) holds because $H_{q_{X_1}}(X_1W_1)=H_{q_{X_1}}(X_1)$.

\section{Proof Of Lemma \ref{lemma.TVD}}
\label{APPEN-D}
We drop the subscript $(\cdot)_i$ for simplicity here as the analysis for any $i$ is the same. Since $\mathbf{G}_N$ is an invertible mapping, by the chain rule for the Kullback-Leibler divergence we have
\begin{align}
\mathbb{D}(P_{W_1^{1:N}}||Q_{W_1^{1:N}})&=\mathbb{D}(P_{U_1^{'1:N}}||Q_{U_1^{'1:N}})\nonumber\\
&=\sum_{j=1}^{N}\mathbb{D}(P_{U_1^{'j}|U_1^{'1:j-1}}||Q_{U_1^{'j}|U_1^{'1:j-1}}) \nonumber\\
&=\sum_{j\in\mathcal{H}_{W_1}^{(N)}}\mathbb{D}(P_{U_1^{'j}|U_1^{'1:j-1}}||Q_{U_1^{'j}|U_1^{'1:j-1}}) \label{AP-B-1}\\
&=\sum_{j\in\mathcal{H}_{W_1}^{(N)}}\big{(} \log_2(q_{W_1})-H(U_1^{'j}|U_1^{'1:j-1})  \big{)} \label{AP-B-2}\\
&\leq N\delta_N, \label{AP-B-3}
\end{align}
where (\ref{AP-B-1}) holds by our common message encoding scheme, (\ref{AP-B-2}) holds by the fact that information symbols and frozen symbols are uniformly distributed, and (\ref{AP-B-3}) holds by the definition of set $\mathcal{H}_{W_1}^{(N)}$. Similarly,
\begin{align}
\mathbb{D}(P_{U_1^{1:N}|W_1^{1:N}}||Q_{U_1^{1:N}|W_1^{1:N}}) 
&=\sum_{j=1}^{N}\mathbb{D}(P_{U_1^{j}|U_1^{1:j-1}W_1^{1:N}}||Q_{U_1^{j}|U_1^{1:j-1}W_1^{1:N}})\nonumber\\
&=\sum_{j\in\mathcal{H}_{X_1|W_1}^{(N)}}\mathbb{D}(P_{U_1^{j}|U_1^{1:j-1}W_1^{1:N}}||Q_{U_1^{j}|U_1^{1:j-1}W_1^{1:N}}) \nonumber\\
&=\sum_{j\in\mathcal{H}_{X_1|W_1}^{(N)}}\big{(} \log_2(q_{X_1})-H_{q_{X_1}}(U_1^{j}|U_1^{1:j-1}W_1^{1:N})  \big{)} \nonumber\\
&\leq N\delta_N. \nonumber
\end{align}
Then by the chain rule for the Kullback-Leibler divergence we have
\begin{align}
\mathbb{D}(P_{W_1^{1:N}X_1^{1:N}}||Q_{W_1^{1:N}X_1^{1:N}})
&=\mathbb{D}(P_{W_1^{1:N}U_1^{1:N}}||Q_{W_1^{1:N}U_1^{1:N}})\nonumber\\
&=\mathbb{D}(P_{U_1^{1:N}|W_1^{1:N}}||Q_{U_1^{1:N}|W_1^{1:N}})+\mathbb{D}(P_{W_1^{1:N}}||Q_{W_1^{1:N}}) \label{AP-B-4}\\
&\leq 2N\delta_N. \label{AP-B-5}
\end{align}
Similarly,
\begin{align}
&~~~~\mathbb{D}(P_{W_2^{1:N}X_2^{1:N}}||Q_{W_2^{1:N}X_2^{1:N}})\leq 2N\delta_N. \label{AP-B-6}
\end{align}
Then we have
\begin{align}
&~~\parallel P_{W_1^{1:N}W_2^{1:N}X_1^{1:N}X_2^{1:N}}-Q_{W_1^{1:N}W_2^{1:N}X_1^{1:N}X_2^{1:N}}\parallel \nonumber\\
&= \parallel P_{W_1^{1:N}X_1^{1:N}}P_{W_2^{1:N}X_2^{1:N}}-Q_{W_1^{1:N}X_1^{1:N}}Q_{W_2^{1:N}X_2^{1:N}}\parallel \nonumber\\
&\leq \parallel  P_{W_1^{1:N}X_1^{1:N}}P_{W_2^{1:N}X_2^{1:N}}-Q_{W_1^{1:N}X_1^{1:N}}P_{W_2^{1:N}X_2^{1:N}}\parallel \nonumber\\
&~~~~+ \parallel Q_{W_1^{1:N}X_1^{1:N}}P_{W_2^{1:N}X_2^{1:N}}-Q_{W_1^{1:N}X_1^{1:N}}Q_{W_2^{1:N}X_2^{1:N}}\parallel \label{VD-1}\\
&= \parallel P_{W_1^{1:N}X_1^{1:N}}-Q_{W_1^{1:N}X_1^{1:N}}\parallel + \parallel P_{W_2^{1:N}X_2^{1:N}}-Q_{W_2^{1:N}X_2^{1:N}}\parallel \label{VD-2}\\
&\leq 4\sqrt{\log 2}\sqrt{N\delta_N},\label{VD-3}
\end{align}
where (\ref{VD-1}) holds by the triangle inequality, (\ref{VD-2}) holds by \cite[Lemma 17]{cuff2009communication}, and (\ref{VD-3}) holds by (\ref{AP-B-5}), (\ref{AP-B-6}) and Pinsker's inequality.

Since $P_{Y_1^{1:N}Y_2^{1:N}|X_1^{1:N}X_2^{1:N}}=Q_{Y_1^{1:N}Y_2^{1:N}|X_1^{1:N}X_2^{1:N}}$, by \cite[Lemma 17]{cuff2009communication} we have
\begin{align}
&~~\parallel P_{W_1^{1:N}W_2^{1:N}X_1^{1:N}X_2^{1:N}Y_1^{1:N}Y_2^{1:N}}-Q_{W_1^{1:N}W_2^{1:N}X_1^{1:N}X_2^{1:N}Y_1^{1:N}Y_2^{1:N}}\parallel \nonumber\\
&=\parallel P_{W_1^{1:N}W_2^{1:N}X_1^{1:N}X_2^{1:N}}-Q_{W_1^{1:N}W_2^{1:N}X_1^{1:N}X_2^{1:N}}\parallel \nonumber\\
&\leq 4\sqrt{\log 2}\sqrt{N\delta_N}.
\end{align}

\section{Proof Of Lemma \ref{lemma.EP}}
\label{APPEN-E}
To evaluate all error events in the proposed scheme, we denote the random variables drawn from the target distribution as $U_1$, $X_1$, $Y_1$, etc., those induced by our encoding scheme as $\tilde{U}_1$, $\tilde{X}_1$, $\tilde{Y}_1$, etc., and those of the decoding results as $\bar{U}_1$, $\bar{X}_1$, $\bar{Y}_1$, etc.

We first bound the error probability of a receiver with the Type I partially-joint decoding. As an example, we consider Receiver 1 in the Type A scheme. Define the following error events 
\begin{align*}
\mathcal{E}_{1,i}&\triangleq \{(\tilde{W}_1^{1:N}\tilde{W}_2^{1:N}\tilde{X}_1^{1:N}\tilde{Y}_1^{1:N})_i\neq (W_1^{1:N}W_2^{1:N}X_1^{1:N}Y_1^{1:N})\}, \\
\mathcal{E}_{W_1W_2,i-1}^{ch}&\triangleq \{(\bar{U}_1^{'chaining}\bar{U}_2^{'chaining})_{i-1}\neq (\tilde{U}_1^{'chaining}\tilde{U}_2^{'chaining})_{i-1}\}, \\
\mathcal{E}_{W_1W_2,i}&\triangleq \{(\bar{U}_1^{'1:N}\bar{U}_2^{'1:N})_{i}\neq (\tilde{U}_1^{'1:N}\tilde{U}_2^{'1:N})_{i}\},\\
\mathcal{E}_{X_1,i}&\triangleq \{(\bar{U}_1^{1:N})_{i}\neq (\tilde{U}_1^{1:N})_{i}\},
\end{align*}
where "chaining" in the superscript stands for the elements used for chaining. 

The error probability of Receiver 1 when decoding messages in Block $i$ ($1\leq i\leq K$) can be upper bounded by
\begin{align}
P^I_{e,i} &\leq P[\mathcal{E}_{X_1,i} \text{~or~} \mathcal{E}_{W_1W_2,i}] \nonumber\\
&=P[\mathcal{E}_{X_1,i} \text{~or~} \mathcal{E}_{W_1W_2,i}|\mathcal{E}_{1,i}]P[\mathcal{E}_{1,i}]+P[\mathcal{E}_{X_1,i} \text{~or~} \mathcal{E}_{W_1W_2,i}|\mathcal{E}_{1,i}^C]P[\mathcal{E}_{1,i}^C] \nonumber\\
&\leq P[\mathcal{E}_{1,i}]+P[\mathcal{E}_{X_1,i} \text{~or~} \mathcal{E}_{W_1W_2,i}|\mathcal{E}_{1,i}^C] \nonumber\\
&= P[\mathcal{E}_{1,i}]+P[\mathcal{E}_{W_1W_2,i}|\mathcal{E}_{1,i}^C]+P[\mathcal{E}_{X_1,i}|\mathcal{E}_{1,i}^C,\mathcal{E}_{W_1W_2,i}^C]P[\mathcal{E}^C_{W_1W_2,i}|\mathcal{E}_{1,i}^C] \nonumber\\
&\leq P[\mathcal{E}_{1,i}]+P[\mathcal{E}_{W_1W_2,i}|\mathcal{E}_{1,i}^C]+P[\mathcal{E}_{X_1,i}|\mathcal{E}_{1,i}^C,\mathcal{E}_{W_1W_2,i}^C]. \label{PL4-1}
\end{align}
Using optimal coupling \cite[Lemma 3.6]{aldous1983random} we have
\begin{equation}
P[\mathcal{E}_{1,i}]=\parallel P_{W_1^{1:N}W_2^{1:N}X_1^{1:N}Y_1^{1:N}}-Q_{(W_1^{1:N}W_2^{1:N}X_1^{1:N}Y_1^{1:N})_i}\parallel. \label{PL4-4}
\end{equation}
For $i\geq 2$, we have
\begin{align}
&~~~P[\mathcal{E}_{W_1W_2,i}|\mathcal{E}_{1,i}^C]\nonumber\\ &=P[\mathcal{E}_{W_1W_2,i}|\mathcal{E}_{1,i}^C,\mathcal{E}_{W_1W_2,i-1}^{ch}]P[\mathcal{E}_{W_1W_2,i-1}^{ch}]+P[\mathcal{E}_{W_1W_2,i}|\mathcal{E}_{c,i}^C,(\mathcal{E}_{W_1W_2,i-1}^{ch})^C]P[(\mathcal{E}_{W_1W_2,i-1}^{ch})^C] \nonumber\\
&\leq P[\mathcal{E}_{W_1W_2,i-1}^{ch}]+P[\mathcal{E}_{W_1W_2,i}|\mathcal{E}_{c,i}^C,(\mathcal{E}_{W_1W_2,i-1}^{ch})^C] \nonumber\\
&\leq P[\mathcal{E}_{W_1W_2,i-1}]+N\delta_N, \label{R1C-1}
\end{align}
where (\ref{R1C-1}) holds by the error probability of source polar coding \cite{arikan2010source}. Since
\begin{align}
  P[\mathcal{E}_{W_1W_2,i-1}]&= P[\mathcal{E}_{W_1W_2,i-1}|\mathcal{E}_{1,i-1}^C]P[\mathcal{E}_{1,i-1}^C]+P[\mathcal{E}_{W_1W_2,i-1}|\mathcal{E}_{1,i-1}]P[\mathcal{E}_{1,i-1}]\nonumber\\
  &\leq P[\mathcal{E}_{W_1W_2,i-1}|\mathcal{E}_{1,i-1}^C]+P[\mathcal{E}_{1,i-1}],\nonumber
\end{align}
we have 
\begin{equation*}
P[\mathcal{E}_{W_1W_2,i}|\mathcal{E}_{1,i}^C]\leq P[\mathcal{E}_{W_1W_2,i-1}|\mathcal{E}_{1,i-1}^C]+N\delta_N+4\sqrt{\log 2}\sqrt{N\delta_N}.
\end{equation*}
For $i=1$, from our chaining scheme we know that
\begin{align*}
P[\mathcal{E}_{W_1W_2,1}|\mathcal{E}_{1,1}^C]\leq N\delta_N.
\end{align*}	
Then by induction we have
\begin{equation*}
P[\mathcal{E}_{W_1W_2,i}|\mathcal{E}_{1,i}^C]\leq iN\delta_N+4(i-1)\sqrt{\log 2}\sqrt{N\delta_N}.
\end{equation*}
By the error probability of source polar coding \cite{arikan2010source} we have
\begin{align*}
P[\mathcal{E}_{X_1,i}|\mathcal{E}_{1,i}^C,\mathcal{E}_{W_1W_2,i}^C] \leq N\delta_N.
\end{align*}
Thus,
\begin{equation*}
P_{e1,i}\leq (i+1)N\delta_N+4i\sqrt{\log 2}\sqrt{N\delta_N}.
\end{equation*}
Then error probability of Receiver 1 in the overall $K$ blocks can be upper bounded by
\begin{equation}
P^I_{e}\leq \sum_{i=1}^{K}P^I_{e,i}\leq \frac{(K+1)(K+2)}{2}N\delta_N+2K(K+1)\sqrt{\log 2}\sqrt{N\delta_N}.
\end{equation}

Next we bound the error probability of a receiver with the Type II partially-joint decoding. As an example, we consider Receiver 1 in the Type B scheme. Define the following error events 
\begin{align*}
\mathcal{E}_{1,i}&\triangleq \{(\tilde{W}_1^{1:N}\tilde{W}_2^{1:N}\tilde{X}_1^{1:N}\tilde{Y}_1^{1:N})_i\neq (W_1^{1:N}W_2^{1:N}X_1^{1:N}Y_1^{1:N})\}, \\
\mathcal{E}_{W_1,i-1}^{ch}&\triangleq \{(\bar{U}_1^{'chaining})_{i-1}\neq (\tilde{U}_1^{'chaining})_{i-1}\}, \\
\mathcal{E}_{W_2,i-1}^{ch}&\triangleq \{(\bar{U}_2^{'chaining})_{i-1}\neq (\tilde{U}_2^{'chaining})_{i-1}\}, \\
\mathcal{E}_{W_1,i}&\triangleq \{(\bar{U}_1^{'1:N})_{i}\neq (\tilde{U}_1^{'1:N})_{i}\},\\
\mathcal{E}_{X_1W_2,i}&\triangleq \{(\bar{U}_1^{1:N}\bar{U}_2^{'1:N})_{i}\neq (\tilde{U}_1^{1:N}\tilde{U}_2^{'1:N})_{i}\}.
\end{align*}
Similar to (\ref{PL4-1}), the error probability of Receiver 1 when decoding messages in Block $i$ ($1\leq i\leq K$) can be upper bounded by
\begin{align}
P^{II}_{e,i} &\leq P[\mathcal{E}_{X_1W_2,i} \text{~or~} \mathcal{E}_{W_1,i}] \nonumber\\
&\leq P[\mathcal{E}_{1,i}]+P[\mathcal{E}_{W_1,i}|\mathcal{E}_{1,i}^C]+P[\mathcal{E}_{X_1W_2,i}|\mathcal{E}_{1,i}^C,\mathcal{E}_{W_1,i}^C]. \label{PL4-2}
\end{align}
Similar to the analysis for $P[\mathcal{E}_{W_1W_2,i}|\mathcal{E}_{1,i}^C]$ in the Type I case, we have
\begin{equation}
P[\mathcal{E}_{W_1,i}|\mathcal{E}_{1,i}^C]\leq iN\delta_N+4(i-1)\sqrt{\log 2}\sqrt{N\delta_N},\label{PL4-3}
\end{equation}
and
\begin{align}
P[\mathcal{E}_{W_1,i}]&\leq
P[\mathcal{E}_{W_1,i}|\mathcal{E}_{1,i}^C]P[\mathcal{E}_{1,i}^C]+P[\mathcal{E}_{1,i}]\nonumber\\
&\leq iN\delta_N+4i\sqrt{\log 2}\sqrt{N\delta_N}. \nonumber
\end{align}
For $i\geq 2$, we have
\begin{align}
&~~~P[\mathcal{E}_{X_1W_2,i}|\mathcal{E}_{1,i}^C,\mathcal{E}_{W_1,i}^C]\nonumber\\
&=P[\mathcal{E}_{X_1W_2,i}|\mathcal{E}_{1,i}^C,\mathcal{E}_{W_1,i}^C,(\mathcal{E}_{W_2,i-1}^{ch})^C]P[(\mathcal{E}_{W_2,i-1}^{ch})^C]+P[\mathcal{E}_{X_1W_2,i}|\mathcal{E}_{1,i}^C,\mathcal{E}_{W_1,i}^C,\mathcal{E}_{W_2,i-1}^{ch}]P[\mathcal{E}_{W_2,i-1}^{ch}]  \nonumber\\
&\leq P[\mathcal{E}_{X_1W_2,i}|\mathcal{E}_{1,i}^C,\mathcal{E}_{W_1,i}^C,(\mathcal{E}_{W_2,i-1}^{ch})^C]+P[\mathcal{E}_{W_2,i-1}^{ch}] \nonumber\\
&\leq N\delta_N+P[\mathcal{E}_{X_1W_2,i-1}]\nonumber\\
&=N\delta_N+P[\mathcal{E}_{X_1W_2,i-1}|\mathcal{E}_{1,i-1}^C]P[\mathcal{E}_{1,i-1}^C]+P[\mathcal{E}_{X_1W_2,i-1}|\mathcal{E}_{1,i-1}]P[\mathcal{E}_{1,i-1}]\nonumber\\
&\leq P[\mathcal{E}_{X_1W_2,i-1}|\mathcal{E}_{1,i-1}^C]+N\delta_N+4\sqrt{\log 2}\sqrt{N\delta_N}\nonumber\\
&\leq P[\mathcal{E}_{X_1W_2,i-1}|\mathcal{E}_{1,i-1}^C,\mathcal{E}_{W_1,i-1}^C]+P[\mathcal{E}_{W_1,i-1}]+N\delta_N+4\sqrt{\log 2}\sqrt{N\delta_N}\nonumber\\
&\leq P[\mathcal{E}_{X_1W_2,i-1}|\mathcal{E}_{1,i-1}^C,\mathcal{E}_{W_1,i-1}^C]+iN\delta_N+4i\sqrt{\log 2}\sqrt{N\delta_N}.\nonumber
\end{align}
For $i=1$, from our chaining scheme we have
\begin{align*}
P[\mathcal{E}_{X_1W_2,1}|\mathcal{E}_{1,1}^C,\mathcal{E}_{W_1,1}^C] \leq N\delta_N.
\end{align*}
Thus, by induction we have
\begin{equation}
P[\mathcal{E}_{X_1W_2,i}|\mathcal{E}_{1,i}^C,\mathcal{E}_{W_1,i}^C]\leq \frac{(i+2)(i-1)}{2}(N\delta_N+4\sqrt{\log 2}\sqrt{N\delta_N})+N\delta_N. \label{PL4-5}
\end{equation}
From (\ref{PL4-2}), (\ref{PL4-4}), (\ref{PL4-3}) and (\ref{PL4-5}) we have
\begin{equation*}
P^{II}_{e,i}\leq \frac{i^2+3i-2}{2}(N\delta_N+4\sqrt{\log 2}\sqrt{N\delta_N})+N\delta_N..
\end{equation*}
Then we have
\begin{equation}
P^{II}_{e}\leq \sum_{i=1}^{K}P^{II}_{e,i}\leq \frac{K(K+1)(K+5)}{6}N\delta_N+\frac{2K(K^2+6K-1)}{3}\sqrt{\log 2}\sqrt{N\delta_N}.
\end{equation}

\bibliographystyle{IEEEtran}
\bibliography{Polar_IC}

\end{document}